\definecolor{brown}{gray}{0.6}
\newcommand{\be}{\begin{equation}}
\newcommand{\ee}{\end{equation}}
\newcommand{\beq}{\begin{eqnarray}}
\newcommand{\eeq}{\end{eqnarray}}
\newcommand{\ba}{\begin{array}}
\newcommand{\ea}{\end{array}}
\DeclareMathOperator{\Tr}{Tr}
\newtheorem{result}{Result}
\newtheorem*{result2}{Result 2}
\newtheorem{corolarios}{Corollary}
\newtheorem{lemas}{Lemma}
\begin{document}

\title{On the consistency of measurement protocols for quantum processes fluctuations}
\author{Thales A. B. Pinto Silva}
\email{pinto\_silva@campus.technion.ac.il}
\author{David Gelbwaser-Klimovsky}
\affiliation{Schulich Faculty of Chemistry and Helen Diller Quantum Center, Technion-Israel Institute of Technology, Haifa 3200003, Israel}

\begin{abstract}
Quantum fluctuations are fundamental to quantum technologies, affecting quantum computing, sensing, cryptography, and thermodynamics. A paradigmatic example is quantum work, which, for a thermally isolated driven system, is identified with the variation of its energy. Although quantum mechanics provides precise rules for measuring energy and other observables at individual instants of time, it does not provide a standard framework for characterizing the statistics of their variations between two times. This ambiguity has led to competing protocols that can assign different work distributions--and, more generally, different fluctuation statistics--to the same physical process, with consequences for both foundational physics and quantum technologies. In this work, we propose four fundamental criteria that any consistent protocol for measuring energy variations must satisfy, grounded in conservation laws, state independence of the measurement apparatus, the no-signaling principle, and constraints on physical reality. We prove that these criteria uniquely select the two-time quantum observable protocol. We then show that this conclusion extends beyond work and energy to the variation of arbitrary physical observables, including charge, particle number, and momentum. This result has the potential to establish the foundations for measurements of quantum processes, possibly resolving ambiguities in quantum fluctuation measurements. Moreover, it enables the extension of quantum information concepts, such as entanglement and Bell's inequalities, to processes rather than instantaneous observables.
\end{abstract}

\maketitle

\section{Introduction}

Quantum fluctuations play an increasingly critical role in emerging devices~\cite{Janine2024}, challenging the performance of quantum computers~\cite{Arute2019,Neill2017,Lopez2024}, sensors~\cite{Zhao2011}, and cryptographic systems~\cite{Bozzio2022}. Central to controlling and predicting these fluctuations is the ability to accurately characterize the \emph{variation} of physical quantities (VPQs). For instance, in superconducting quantum computers, minimizing the impact of charge fluctuation is essential for ensuring reliable qubit performance \cite{Arute2019,Neill2017}. Likewise, in quantum thermodynamics, the statistical characterization of VPQs, such as variation of energy and particle number, is vital for refining fluctuation theorems and extending thermodynamic and conservation laws into the quantum realm \cite{Zhang2024,Janine2024,Campisi2011,Talkner2007,Hovhannisyan2024,Hanggi2017,Loveridge2011,Gisin2018,Aharonov2016,Aamir2025}. Remarkably, despite significant advancements, \emph{there remains no universal and standard framework for characterizing the statistics of VPQs in quantum mechanics}.

A paradigmatic instance of this problem is the long-standing question of how to define and measure work, and its fluctuations, in quantum mechanics~\cite{Campisi2011,Talkner2007}. Consider the elementary setting of a thermally isolated quantum system $\mathcal{S}$ evolving unitarily according to $U_t$ during the time interval $[0,t]$, while its Hamiltonian is changed from $H(0)$ to $H(t)$. Since the system is isolated and no heat is exchanged, the work performed on the system is identified with its change in energy. The task is then to characterize the statistics of this energy variation. Quantum mechanics provides an unambiguous prescription for the statistics of the energy at any single time. If ${\ket{e_k(t)}}$ denotes the eigenbasis of $H(t)$, the probability of obtaining the outcome $e_k(t)$ at time $t$ is
\begin{equation}
p_k(t|\rho)=\bra{e_k(t)}U_t\rho U_t^\dagger\ket{e_k(t)} ,
\end{equation}
for an arbitrary initial state $\rho$. This fully specifies the statistics of $H(t)$ at that time. However, quantum mechanics does not provide an equally standard prescription for the statistics of the \emph{variation} of $H$ over the interval $[0,t]$. The reason is that such a variation is not a single-time property: it is a quantity relating two instants of time, and any attempt to infer it through measurements may itself disturb the subsequent dynamics.

A common approach to measure work in this context is the two-point measurement (TPM) protocol~\cite{Campisi2011,Talkner2007,Roncaglia2014,Perarnau2017,Baumer2018,Batalhao2014}. In this protocol, the system is first projectively measured in the eigenbasis of $H(0)$, yielding an outcome $e_n(0)$. The post-measurement state then evolves under $U_t$ until the final time $t$, where a second projective measurement of $H(t)$ is performed, yielding $e_m(t)$. In each run, the work value is assigned as
\begin{equation}
w_{mn}=e_m(t)-e_n(0),
\end{equation}
and the work distribution is reconstructed by repeating the protocol over many runs. In this sense, the TPM protocol operationally defines work as the difference between two measured energy outcomes.

Although intuitive, the TPM approach has several drawbacks. Perhaps the most important one is that the initial measurement collapses any superpositions in $\rho$ with respect to the eigenbasis of $H(0)$. This can lead TPM to fail to respect conservation laws \cite{Perarnau2017,Baumer2018,Silva2021}\footnote{We later exemplify cases of such conservation disturbances in section ``Trapped ion case study''}. These and other limitations~\cite{Silva2024,Silva2023} have prompted the development of alternative methodologies for describing the statistics of work, including the well-known full-counting statistics~\cite{Esposito2009}, quasi-probabilities~\cite{Gherardini2024}, Gaussian pointers~\cite{Talkner2016}, and two-time observables \cite{Silva2021,Silva2023,Silva2024,Maquedano2024,Lindblad1983,Allahverdyan2005}, among others~\cite{Baumer2018}. These approaches, however, often produce conflicting results, raising critical concerns for both foundational physics and practical applications. Indeed, because they provide contradictory predictions, it is unclear which approach, if any, accurately reflects the underlying physics. Also, in the absence of a standardized protocol, physical conclusions risk being influenced more by the measurement procedure chosen itself than by the process under investigation. Furthermore, the lack of a reliable standard has critical implications for quantum technologies. Predictions must be accurate and consistent across diverse scenarios to ensure quantum devices' functionality and resource efficiency--faulty or overly demanding predictions can undermine their practical feasibility. 

Without a universally accepted standard methodology, grounding protocols for measuring work in fundamental principles of physics offers a clear and reliable criterion to ensure their consistency and applicability across diverse scenarios~\cite{Hovhannisyan2024,Brandao2013,Brandao2015,Silva2024,Perarnau2017,Guryanova2016}. This was the strategy pursued in earlier works, where several physically motivated requirements were proposed for quantum work-measurement protocols. In particular, it was argued that a consistent protocol should reproduce the appropriate classical behavior in the classical limit, satisfy average conservation laws, and be implemented by an apparatus that is independent of the initial state of the system. Under one formulation of these assumptions~\cite{Perarnau2017}, a no-go theorem was obtained, showing that no measurement protocol could satisfy all the \emph{proposed} requirements simultaneously. In Ref.~\cite{Silva2024}, we revisited this conclusion. We demonstrated that the classicality criterion used in the no-go theorem does not properly capture the quantum-to-classical transition. By replacing it with a necessary condition for classicality, we showed that work can consistently be represented by a two-time quantum observable, namely by the difference between the Heisenberg-evolved final Hamiltonian and the initial Hamiltonian. Moreover, among the well-known work-measurement schemes reviewed in Ref.~\cite{Baumer2018}, this two-time observable protocol was the only one satisfying the proposed criteria.

However, Ref.~\cite{Silva2024} left two important questions open. First, it did not prove uniqueness among all possible protocols: it compared the two-time observable protocol with the main existing approaches in the literature. Therefore, it remained logically possible that another, yet unidentified, protocol could satisfy the same physical requirements while predicting different fluctuation statistics. Second, the discussion was formulated primarily in the language of work and energy variations. Yet the underlying issue is more general. The same conceptual tension arises whenever one attempts to assign statistics to the variation of a quantum observable between two times, such as angular momentum, linear momentum, particle number, charge, or current. Thus, the ambiguity in quantum work statistics is not an isolated peculiarity of thermodynamics, but rather a manifestation of a broader structural problem in quantum theory: how to consistently define the statistics of variations of physical quantities.

Motivated by this open problem, we propose four fundamental properties that any consistent protocol for measuring variation of energy should satisfy. Besides adapting previously proposed principles that enforce conservation laws and a state-independent measurement apparatus, we impose reality and locality constraints inspired by the EPR argument~\cite{EPR1935}. These constraints are meant to play the same foundational role here: they restrict admissible measurement protocols by requiring compatibility with no-signaling and with the existence of definite values whenever such values can be inferred without disturbance.. By assuming these criteria, we prove that only a single measurement protocol meets all principles simultaneously: the two-times quantum observable protocol~\cite{Silva2021,Silva2023,Silva2024,Maquedano2024,Lindblad1983,Allahverdyan2005,Bochkov1977}. We then show that the result is not restricted to work or to variations of energy. Owing to the general structure of the proof, the same reasoning applies to arbitrary VPQs. The two-time quantum observable protocol therefore emerges as the unique standard for assigning statistics to variations of physical quantities in quantum mechanics, while simultaneously respecting conservation laws, locality through no-signaling, and the required constraints on physical reality.

\section{Setup and measurement protocols}

In our framework, we consider a general system $\Omega$, comprising arbitrary subsystems, prepared in an arbitrary quantum state $\rho$ acting on a Hilbert space $\mathcal{H}$. $\Omega$ evolves from time $0$ to $t$ under an arbitrary unitary operator $U$, with a countable eigenbasis. We focus on describing the measurement of the variation of \emph{a specific part} of the total energy of the whole system $\Omega$, described by a time-independent Hermitian operator $H_1$ acting on $\mathcal{H}$.

This notation is deliberately general. A standard example is provided by the system-environment setting underlying quantum open-system dynamics. In this case, $\Omega=S+E$ is composed of a system $S$ and an environment $E$, with total Hamiltonian
\begin{equation}
H = H_S \otimes \mathbbm{1}_E + \mathbbm{1}_S \otimes H_E + V_{SE},
\end{equation}
where $H_S$ and $H_E$ are the local Hamiltonians and $V_{SE}$ is their interaction. One may then be interested not in the variation of the total energy of $\Omega$, but rather in the variation of the energy associated with the subsystem $S$, represented on the full Hilbert space by
\begin{equation}
H_1 = H_S \otimes \mathbbm{1}_E .
\end{equation}
The same formalism also applies to quantum-information settings. For example, in a register of interacting qubits, the global dynamics may be generated by a sequence of gates $U_1,U_2,\ldots$, while the quantity of interest may be the variation of the energy of a particular qubit, $H_1=H_{q_1}$, or of a collection of qubits. Similarly, in many-body systems, $H_1$ may describe the energy of a local region, a single particle, or a selected interaction term. Thus, the role of $H_1$ is to identify the physical contribution whose variation is being probed, while $U$ specifies the quantum process during which this variation takes place.

Although we phrase the discussion in terms of energy variations, no step of the construction depends on this choice. The same framework applies to the variation of any Hermitian observable, such as particle number, charge, linear momentum, or angular momentum. Moreover, by allowing $H_1$ to depend explicitly on time, the formalism also covers the usual driven scenario of quantum thermodynamics, where the variation of energy of a thermally isolated system is connected with work.

We first clarify what we mean by a measurement protocol. In quantum mechanics, measurements can be generally described by positive operator-valued measures (POVMs)~\cite{Nielsen2010,Heinosaari2012,Busch1996}. Following this perspective, we define a \emph{measurement protocol} $\mathbbm{M}$ \cite{Perarnau2017} to measure the \emph{variation} of any arbitrary energy operator $H_1$ under an arbitrary evolution $U$. For the protocol $\mathbbm{M}$ and for each $(H_1, U)$ pair, the set $\mathbbm{M}(H_1, U) = \{M(z, H_1, U)\}$ defines a POVM whose operators satisfy $\int_{-\infty}^{\infty} dz  M(z, H_1, U) = \mathbbm{1}$ and $M(z, H_1, U) \geq 0$. In this framework, the probability density of observing a variation $z$ of $H_1$ under $U$ is $\wp(z, H_1, U, \rho) = \Tr [M(z, H_1, U) \rho]$, for the initial state $\rho$.  By the POVM properties, it follows that $\int_{-\infty}^{\infty} dz  \wp(z, H_1, U, \rho) = 1$ and $\wp(z, H_1, U, \rho)\geq 0$. This approach aims to keep the protocol as general as possible, so that $\mathbbm{M}$ applies universally. For now, we assume that $\mathbbm{M}$ can depend on infinitely many different variables. The only restriction we impose is that, among these, it necessarily depend on $U$ and $H_1$.

An important example of a measurement protocol is the two-times observables (OBS) protocol, denoted by $\mathbbm{M}{\text{\tiny OBS}}$. This protocol operates as follows: for any given energy operator $H_1$ and unitary evolution $U$ over time $t$, the variation of energy is defined by the two-time quantum observable \cite{ Silva2021, Silva2023,Silva2024,Lindblad1983,Bochkov1977}: 
\begin{equation} 
\Delta (H_1,U) = U^{\dagger} H_1 U - H_1, \label{TTOBS} 
\end{equation}
representing the difference between the Heisenberg picture operators $U^{\dagger} H_1 U$ and $H_1$ at times $t$ and $0$. $\Delta (H_1, U)$ is a Hermitian operator, with eigenvalues $\{\delta_j(H_1, U)\}$ and corresponding eigenvectors $\{\ket{\delta_j(H_1, U)}\}$, as expressed in the decomposition $\Delta (H_1, U) = \sum_{j} \delta_j(H_1, U) \ket{\delta_j(H_1, U)} \bra{\delta_j(H_1, U)}$. The probability of finding a specific eigenvalue $\delta_j(H_1, U)$ is $p_j(H_1, U,\rho) = \Tr[P_{j}(H_1, U) \rho]$, where $P_{j}(H_1, U) = \ket{\delta_j(H_1, U)} \bra{\delta_j(H_1, U)}$ and $\rho$ is the initial state. Accordingly, the OBS protocol $\mathbbm{M}{\text{\tiny OBS}}$ is defined by POVMs $\mathbbm{M}{\text{\tiny OBS}}(H_1, U) = \{M_{\text{\tiny OBS}}(z, H_1, U)\}$, where each element is given by $M_{\text{\tiny OBS}}(z, H_1, U) = \sum_{j} \delta^{\text{\tiny \textbf{D}}}[z - \delta_j(H_1, U)] P_{j}(H_1, U)$, and $\delta^{\text{\tiny \textbf{D}}}$ is the Dirac's delta~ \cite{Silva2021, Silva2023,Silva2024}. Consequently, the probability density 
\begin{equation}
    \wp_{\text{\tiny OBS}}(z, H_1, U, \rho) = \Tr[M_{\text{\tiny OBS}}(z, H_1, U) \rho]\label{obsdistrib}
\end{equation}
provides the probability density of measuring a variation of $H_1$ equal to $z$ under evolution $U$. 
To our knowledge, Bochkov and Kuzovlev were the first one to consider this protocol in their seminal work~\cite{Bochkov1977}. There, they proposed their famous fluctuation theorem and extended their framework to quantum mechanics considering the OBS protocol. Later, Lindblad also considered such observable in Ref.~\cite{Lindblad1983}.

Another widely-used example is TPM protocol $\mathbbm{M}_{\text{\tiny TPM}}$, briefly described in the introduction. When considering the TPM protocol for any $(H_1, U)$ pair, we define~\cite{Roncaglia2014,Perarnau2017} $\mathbbm{M}_{\text{\tiny TPM}}(H_1, U) = \{M_{\text{\tiny TPM}}(z, H_1, U)\}$, where $M_{\text{\tiny TPM}}(z, H_1, U) = \sum_{jk} \delta^{\text{\tiny \textbf{D}}}[z - (e_j - e_k)] |\bra{e_j} U \ket{e_k}|^2 \ket{e_k} \bra{e_k}$, with $\ket{e_j}$ and $\ket{e_k}$ eigenvectors of $H_1$ having eigenvalues $e_j$ and $e_k$, respectively. The probability density is thus given by 

\begin{equation}
    \wp_{\text{\tiny TPM}}(z, H_1, U, \rho) = \Tr[M_{\text{\tiny TPM}}(z, H_1, U) \rho]\label{eq:TPMprobgeneral}
\end{equation}.

\section{Consistency physical principles: CRIN conditions}
\subsection{Motivation: a goal-oriented task}

We now motivate the physical principles that will be used to constrain measurement protocols for energy variations. Our goal is not to provide an exhaustive axiomatization of all possible requirements that such protocols should satisfy. This would be too broad a task and would possibly depend on the physical context under consideration. Instead, we seek a small set of general consistency requirements that are independently motivated and sufficiently restrictive to test whether a proposed protocol can be regarded as universal.

Previous works have shown that even a limited number of assumptions can impose strong constraints on quantum work-measurement schemes~\cite{Perarnau2017,Hovhannisyan2024,Silva2024}. Three ideas have played a central role in this discussion: compatibility with conservation laws, independence of the measurement scheme from the initial state, and agreement with the classical limit. The first two have a direct operational meaning. If two parts of an isolated system exchange energy while their sum is conserved, a measurement protocol should not predict a statistically detectable violation of this balance. Similarly, a protocol should not require prior knowledge of the initial state $\rho$ in order to define the apparatus; otherwise, the measurement scheme would be tuned to the preparation whose variation it is supposed to characterize.

Compatibility with the classical limit is also physically natural, but it is more difficult to use as a decisive criterion. Different areas of physics employ different notions of classicality. For instance, in some thermodynamic settings the classical regime is associated with the absence of coherences in a preferred energy basis \cite{Jarzynski2015,Perarnau2017}, whereas in quantum optics coherent states are often regarded as classical~\cite{Cohen2020}. In Ref.~\cite{Silva2024}, we sought to address this difficulty by formulating necessary, though not sufficient, conditions for a classical limit: in any genuinely classical regime, the effects of noncommutativity should become negligible and the quantum statistics should converge to an appropriate classical description. This was enough to show that the OBS protocol can satisfy the previously proposed consistency requirements, but it does not by itself establish \emph{uniqueness} among all possible POVM protocols.

This motivates looking for additional principles that do not rely exclusively on a specific notion of classicality. A natural source of guidance is the foundational concepts of reality and locality. Historically, these notions have played a special role in identifying what one should demand from a physical theory. In the EPR argument~\cite{EPR1935}, Einstein, Podolsky and Rosen used precisely these ideas to challenge the completeness of quantum mechanics: if a physical quantity can be predicted with certainty without disturbing the system, then it should correspond to an element of physical reality, and operations performed on a distant system should not affect the local physical description. Bell's theorem~\cite{Bell1964} later showed that this classical combination of assumptions cannot be maintained as a universal hidden-variable explanation of quantum predictions. These results and consequent discussions are nowadays in the core of advances in quantum information theory \cite{NobelPhysics2022}. Therefore, rather than making these ideas irrelevant, this history shows that reality and locality are powerful principles for testing the consistency of physical descriptions, provided they are formulated in a way that is compatible with quantum mechanics.

This is the sense in which we use them here. We do not assume EPR realism, hidden variables, or Bell locality. Instead, we extract two operational requirements appropriate for measurement protocols of variations. The first is that, whenever quantum mechanics itself assigns sharp values to the relevant observable at the initial and final times, the corresponding variation should also be assigned a sharp value. The second is that the statistics assigned to a local variation should not be affected by an independent operation performed on an unrelated auxiliary system. These considerations motivate the additional principles introduced below.

\subsection{Definition of the CRIN conditions}
We now precisely define the four fundamental conditions that we expect any consistent protocol $\mathbbm{M}$ for measuring energy variations to satisfy. For didactic purposes, we illustrated them in Fig. \ref{conditionsfig}.
\begin{figure}
    \centering
    \includegraphics[width=1\linewidth]{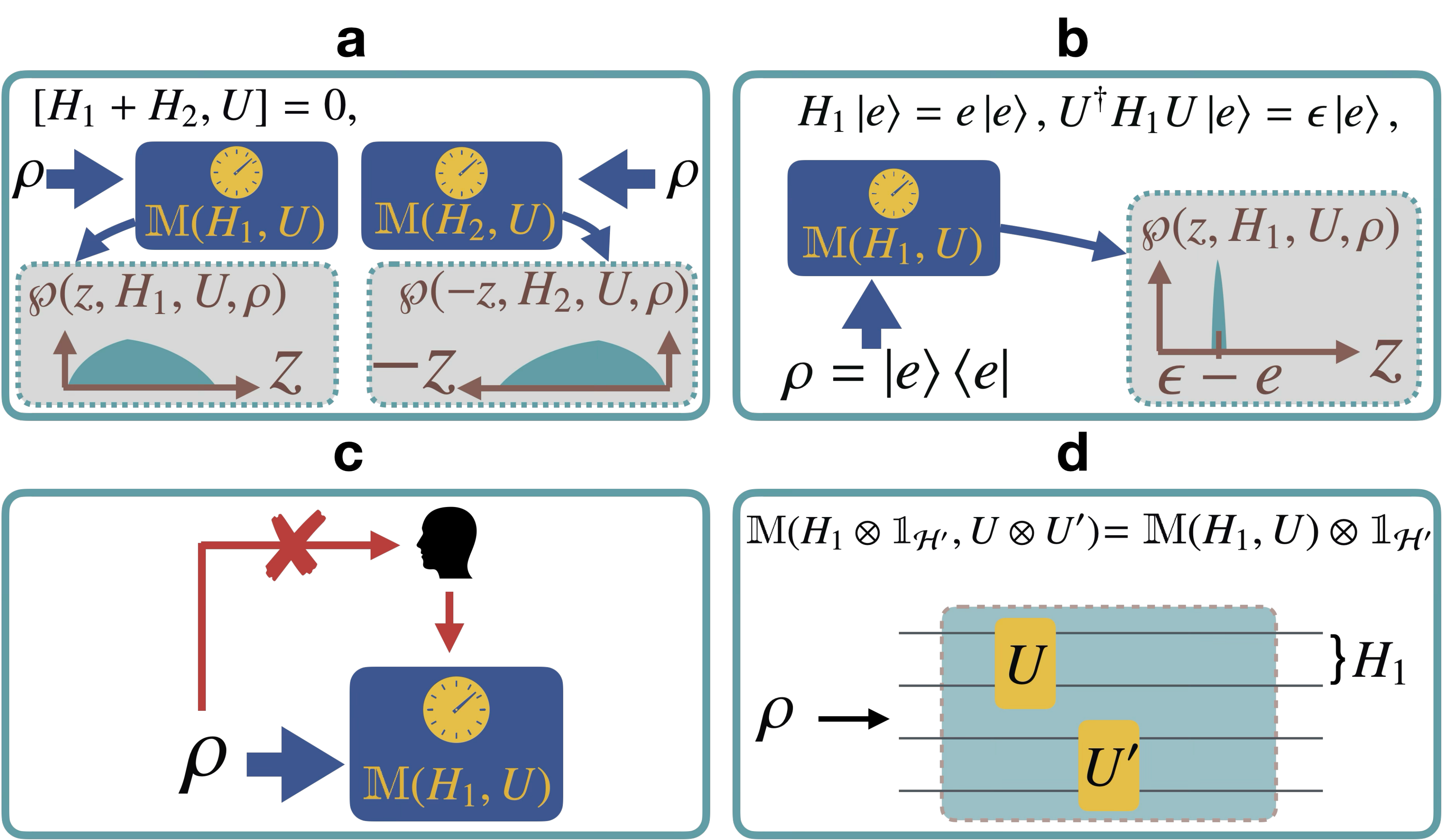}
    \caption{\textbf{Schematic illustration of the four key conditions for a consistent measurement protocol of VPQs}. \textbf{(a)} Energy Conservation (Condition 1): If $[H_1 + H_2, U] = 0$, the total energy $H_1 + H_2$ must be conserved. This means the measurement protocols $\mathbbm{M}(H_1,U)$ and $\mathbbm{M}(H_2,U)$ should satisfy the relation $\wp(z, H_1, U, \rho) = \wp(-z, H_2, U, \rho)$ for the same initial state $\rho$. \textbf{(b)} Reality Condition (Condition 2): If the whole system is in an eigenstate of the energy $H_1$ at times 0 and $t$, with eigenvalues $e$ and $\epsilon$, the measurement distribution for the variation of $H_1$ should collapse to a delta function $\delta^{\text{\tiny \textbf{D}}}[z - (\epsilon - e)]$, reflecting a well-defined energy difference. \textbf{(c)} Independence of the state (Condition 3): The measurement protocol must not depend on the initial state $\rho$, ensuring that the apparatus is set independently of prior knowledge about the system. \textbf{(d)} No-signaling (Condition 4): The measurement protocol must ensure that a local evolution $U'$ on one part of a bipartite system does not affect the energy variation statistics of $H_1$ in the other part, preserving the no-signaling principle during the evolution $U \otimes U'$.}
    \label{conditionsfig}
\end{figure}
\begin{enumerate}    
    \item \textbf{Conservation laws}: \emph{For any preparation $\rho$, unitary evolution $U$, and energy operators $H_1$ and $H_2$ representing parts of the energy of any system $\Omega$, if $[H_1+H_2,U]=0$, then, for any $z$, $\Tr [M(z, H_1,U)\rho]=\wp(z,H_1,U,\rho)=\wp(-z,H_2,U,\rho)=\Tr [M(-z, H_2,U)\rho]$}. In other words, if the sum of energies $H_1+H_2$ is conserved under $U$, then the probability of $H_1$ of increasing an amount $z$ must equal the probability of $H_2$ of \emph{decreasing} the same amount. This condition ensures that conservation laws hold at the level of probability distributions instead of just on average. In light of the Wigner-Araki-Yanase (WAY) theorem~\cite{Loveridge2011,Gisin2018}, it is interesting to note that this condition allows for the possibility that the measurement process may, in specific rounds, disturb the subsystem's energy or even not be repeatable. However, even with such single-shot disturbances, we assume that conservation laws remain preserved, ensuring no statistically detectable violation in the energy balance (see Fig. \ref{conditionsfig}\textbf{a}).
    \item \textbf{Reality}:  \emph{Consider  any system $\Omega$, operator $H_1$ and evolution $U$. If the initial state is $\rho=\ket{e}\bra{e}$ such that $\ket{e}$ is an eigenvector of both $H_{1}$ and $U^{\dagger}H_1 U$ with respective eigenvalues $e$ and $\epsilon$, then the POVM must result in the probabilities $\wp(z,H_1,U,\rho)=\delta^{\text{\tiny \textbf{D}}}[z-(\epsilon-e)]$ for this specific $\rho=\ket{e}\bra{e}$.} In other words, if $H_1$ is well-defined (or real, in an EPR-sense~\cite{EPR1935}) at both the start and end of the process, the variation should be precisely the difference between the initial and final eigenvalues (see Fig. \ref{conditionsfig}\textbf{b}). 
    \item \textbf{ Independence of the initial state}: \emph{For any system $\Omega$, operators $H_1$, and evolution operators $U$, the elements of the POVM ${M(z, H_1, U)}$ must not depend on the initial state $\rho$}. This condition is generally satisfied in quantum mechanics \cite{Nielsen2010,Busch1996,Heinosaari2012} and ensures that the measurement apparatus is not adjusted based on the initial preparation of the system~\cite{Hovhannisyan2024,Baumer2018,Perarnau2017,Silva2024} (see Fig. \ref{conditionsfig}\textbf{c}). Notably, this condition can also be formulated in a weaker form, based on a linearity requirement: the probabilities generated by the protocol must depend linearly on the initial state $\rho$. As shown in Ref.~\cite{Perarnau2017}, this is mathematically equivalent to requiring a state-independent POVM $\mathbbm{M}(H_1,U)$.
    \item \textbf{No-signaling}: \emph{Consider a system $\Omega$ evolving under an arbitrary bipartite unitary evolution $U\otimes U'$ acting on a bipartite Hilbert space $\mathcal{H}\otimes \mathcal{H}'$. For any such system and every energy operator $H_1\otimes \mathbbm{1}_{\mathcal{H}'}$ acting locally on $\mathcal{H}$, $\mathbbm{M}$ is such that its POVM elements satisfy $M(z,H_1\otimes \mathbbm{1}_{\mathcal{H}'},U\otimes U')=M(z,H_1,U)\otimes \mathbbm{1}_{\mathcal{H}'}$ for every $z$.} In other words, local \emph{statistics} of $H_1$ should remain unaffected by changes ($U'$) in another subspace, ensuring that no statistically detectable information is transmitted between different subspaces via the measurement process \cite{Peres2004,Horodecki2019} (see Fig. \ref{conditionsfig}\textbf{d}). This condition should be crucial in setups where additional auxiliary subsystems are introduced to assist measurements~\cite{Talkner2016,Mohammady2019}.
\end{enumerate} 
We call these the CRIN conditions and the measurement protocols that satisfies them as CRIN protocols.

\section{Main results}
\label{sec:mainresults}

We are now in position to establish the main result of this work:

\begin{result} The OBS protocol is the only CRIN protocol. \end{result}

We here prove this result for the case in which $\Delta(H_1, U)$ has discrete, non-degenerate basis. The more general case is proved in Appendix B.

\begin{proof}
We begin by proving that the OBS protocol is a CRIN protocol. Since the POVM $\mathbbm{M}_{\text{\tiny OBS}}(H_1,U)$ does not depend on the initial state $\rho$ for any pair $(H_1, U)$, the condition 3 is satisfied. Second, for two energy operators $H_1$ and $H_2$ conserved under evolution $U$ (i.e., $[H_1 + H_2, U] = 0$), it follows that $\Delta(H_1, U) = -\Delta(H_2, U)$. This implies the eigenvectors of both operators coincide, $\ket{\delta_j(H_1, U)} = \ket{\delta_j(H_2, U)}$, with eigenvalues related by $\delta_j(H_1, U) = -\delta_j(H_2, U)$. Consequently, for any $z$, $M_{\text{\tiny OBS}}(z, H_1, U) = M_{\text{\tiny OBS}}(-z, H_2, U)$, ensuring condition 1. Third, if $\rho_1 = \ket{e_1}\bra{e_1}$ is an eigenstate of both $H_1$ and $U^\dagger H_1 U$, with eigenvalues $e_1$ and $\epsilon_1$, then $\ket{e_1}$ is also an eigenstate of $\Delta(H_1, U)$ with eigenvalue $\epsilon_1 - e_1$. Thus, the probability distribution $\wp_{\text{\tiny OBS}}(z, H_1, U, \rho_1) = \delta^{\text{\tiny \textbf{D}}}[z - (\epsilon_1 - e_1)]$ satisfies condition 2. At last, for a bipartite system $\Omega$ evolving under $U \otimes U'$ acting on a Hilbert space $\mathcal{H}\otimes\mathcal{H}'$, the local operator $H_1 \otimes \mathbbm{1}_{\mathcal{H}'}$ evolves as $\Delta(H_1 \otimes \mathbbm{1}_{\mathcal{H}'}, U \otimes U') = \sum_j \delta_j(H_1, U) P_j^\Delta(H_1, U) \otimes \mathbbm{1}_{\mathcal{H}'}$. This ensures that $M_{\text{\tiny OBS}}(z, H_1 \otimes \mathbbm{1}_2, U \otimes U') = M_{\text{\tiny OBS}}(z, H_1, U) \otimes \mathbbm{1}_{\mathcal{H}'}$, fulfilling condition 4. $\mathbbm{M}_{\text{\tiny OBS}}$ is indeed a CRIN protocol.
    
    The final and most crucial task is to prove that the OBS protocol is the \emph{only} one satisfying the CRIN conditions. Specifically, we aim to demonstrate that for any CRIN protocol $\mathbbm{M}'$, and any $H_1$ and $U$, any element $M'(z, H_1, U)$ of the POVM $\mathbbm{M}'(H_1, U)$ satisfy  
    \begin{equation}
        M'(z, H_1, U) = M_{\text{\tiny OBS}}(z, H_1, U), \label{assumptioninitial}
    \end{equation}
    for all $z$. To this end, consider an arbitrary CRIN protocol $\mathbbm{M}'$. For simplicity, assume arbitrary $H_1$ and $U$ such that $\Delta(H_1, U)$ is diagonal in a discrete non-degenerate basis $\{\ket{\delta_i(H_1, U)}\}$. The general cases are addressed similarly in the Appendix B. We first establish that 
    \begin{equation}
        \ba{l}
        \bra{\delta_{i}(H_1,U)} M'(z,H_1,U)\ket{\delta_{i}(H_1,U)} = \delta^{\text{\tiny \textbf{D}}}[z-\delta_{i}(H_1,U)] = \\
        = \bra{\delta_{i}(H_1,U)}M_{\text{\tiny OBS}}(z, H_1,U)\ket{\delta_{i}(H_1,U)},
        \ea\label{Mdiagonal}
    \end{equation}
    for all elements $\{\ket{\delta_{i}(H_1,U)}\}$ of the eigenbasis. To prove this, consider operators $U'$ and $H_2$ acting on $\mathcal{H}'$ and $\mathcal{H}\otimes \mathcal{H}'$, respectively, along with a vector $\ket{v}\in \mathcal{H}'$, such that:
    \begin{eqnarray}
        &[H_1 \otimes \mathbbm{1}_{\mathcal{H}'} + H_2, U \otimes U'] = 0, \label{commutresult10met} \\
        &H_2 \ket{\delta_i(H_1, U), v} = E_i \ket{\delta_i(H_1, U), v}, \\
        &(U^\dagger \otimes U^{'\dagger}) H_2 (U \otimes U') \ket{\delta_i(H_1, U), v} = E_i' \ket{\delta_i(H_1, U), v}, \label{result12ndrelation0met}
    \end{eqnarray}
    where $E_i' = E_i - \delta_i(H_1, U)$. As we demonstrate in Result 2, the existence of such $H_2$, $U'$, and $\ket{v}$ is guaranteed. Since $\mathbbm{M}'$ is CRIN, Conditions 1 and 3 imply for $U \otimes U'$ in Eq.~\eqref{commutresult10met} and all $z$ that (see Appendix A)
    \begin{equation}
        M'(z, H_1 \otimes \mathbbm{1}_{\mathcal{H}'}, U \otimes U') = M'(-z, H_2, U \otimes U'). \label{req2proof}
    \end{equation}
    Moreover, considering Condition 4, we obtain
    \begin{equation}
         M'(z, H_1 \otimes \mathbbm{1}_{\mathcal{H}'}, U \otimes U') = M'(z, H_1, U) \otimes \mathbbm{1}_{\mathcal{H}'}.
         \label{localtoglobal}
    \end{equation}
    Considering the initial state $\rho_i = \ket{\delta_{i}(H_1,U),v}\bra{\delta_{i}(H_1,U),v}$ and condition 2, we obtain for any arbitrary $z'$:
    \begin{equation}
        \Tr [M'(z', H_2, U \otimes U') \rho_i] = \delta^{\text{\tiny \textbf{D}}}[z' + \delta_{i}(H_1,U)]. \label{Mdiagonal2}
    \end{equation}
    Substituting $z' \to -z$, we have:
    \begin{equation}
        \Tr [M'(-z, H_2, U \otimes U') \rho_i] = \delta^{\text{\tiny \textbf{D}}}[z - \delta_{i}(H_1,U)]. \label{Mdiagonal3}
    \end{equation}
    Combining Eqs.~\eqref{req2proof}, \eqref{localtoglobal}, and \eqref{Mdiagonal3}, we obtain Eq.~\eqref{Mdiagonal}. This deduction is valid for arbitrary $H_1$, $U$, $z$, and $\ket{\delta_{i}(H_1,U)}$, demonstrating Eq.~\eqref{Mdiagonal} for any CRIN protocol $M'(z, H_1, U)$.

    Next, we prove that for any element $M'(z, H_{1}, U)$ of an CRIN protocol $\mathbbm{M}'$ and for all $i \neq j$,
    \begin{equation}
        \ba{l}
            \bra{\delta_{i}(H_{1},U)}M'(z,H_{1},U)\ket{\delta_{j}(H_{1},U)}=0=\\
            =\bra{\delta_{i}(H_{1},U)}M_{\text{\tiny OBS}}(z,H_{1},U)\ket{\delta_{j}(H_{1},U)}.
        \ea\label{Moffdiagonal1}
    \end{equation}
    To show this, first note that $M'(z, H_{1}, U)$ is Hermitian, non-negative, and can be diagonalized in a discrete basis $\{\ket{w_j}\}$ (for the case of a continuous basis, see Appendix B), such that $M'(z, H_{1}, U) = \sum_j w_j \ket{w_j}\bra{w_j}$, with $w_j \geq 0$. Since $\{\ket{\delta_{i}(H_1, U)}\}$ spans the Hilbert space $\mathcal{H}$, each $\ket{w_j}$ can be expressed as $\ket{w_j} = \sum_k \gamma_{jk} \ket{\delta_{k}(H_1, U)}$, where $\gamma_{jk} = \braket{\delta_{k}(H_1, U) | w_j}$. As a result, we can write:
    \begin{equation}
        \ba{l}
            M'(z, H_{1}, U) = \sum_{j,k} w_j |\gamma_{jk}|^2 \ket{\delta_{k}(H_1, U)}\bra{\delta_{k}(H_1, U)} + \\
            + \sum_{j, k \neq k'} w_j \gamma_{jk'}^{*} \gamma_{jk} \ket{\delta_{k}(H_1, U)}\bra{\delta_{k'}(H_1, U)}. \label{DiagMruim}
        \ea
    \end{equation}
    Two and only two cases arise: either there exists $\bar{k}$ such that $z = \delta_{\bar{k}}(H_1, U)$, or $z \neq \delta_{k}(H_1, U)$ for all $k$. 

    \emph{Case 1}: $z = \delta_{\bar{k}}(H_1, U)$. From Eq.~\eqref{Mdiagonal}, $\bra{\delta_{k}(H_1, U)} M'(z, H_{1}, U) \ket{\delta_{k}(H_1, U)} = 0$ for all $k \neq \bar{k}$. Eq.~\eqref{DiagMruim} thus implies that $\sum_j w_j |\gamma_{jk}|^2 = 0$ for all $k \neq \bar{k}$. Since $w_j \geq 0$, we must have $w_j \gamma_{jk} = 0$ for all $k \neq \bar{k}$. Consequently, all off-diagonal terms in Eq.~\eqref{DiagMruim} vanish, proving Eq.~\eqref{Moffdiagonal1} for all $i \neq j$.

    \emph{Case 2}: $z \neq \delta_{k}(H_1, U)$ for all $k$. Eq.~\eqref{Mdiagonal} implies $\bra{\delta_{k}(H_1, U)} M'(z, H_{1}, U) \ket{\delta_{k}(H_1, U)} = 0$ for all $k$. From Eq.~\eqref{DiagMruim}, this leads to $\sum_j w_j |\gamma_{jk}|^2 = 0$ for all $k$, and hence $w_j \gamma_{jk} = 0$. Thus, the off-diagonal terms vanish, and Eq.~\eqref{Moffdiagonal1} holds for all $i \neq j$.

    Combining Eq.~\eqref{Mdiagonal} with Eq.~\eqref{Moffdiagonal1}, we find that for all $i$ and $j$, $\bra{\delta_{i}(H_{1}, U)} M'(z, H_{1}, U) \ket{\delta_{j}(H_{1}, U)} = \bra{\delta_{i}(H_{1}, U)} M_{\text{\tiny OBS}}(z, H_{1}, U) \ket{\delta_{j}(H_{1}, U)}$. Thus, $M'(z, H_{1}, U) = M_{\text{\tiny OBS}}(z, H_{1}, U)$ for all $z$, $H_{1}$, and $U$. Therefore, any CRIN protocol $\mathbbm{M}'$ must coincide with the OBS protocol $\mathbbm{M}_{\text{\tiny OBS}}$, completing the proof.
\end{proof}

It is interesting to notice that this result is not confined to energy variations--it extends to  the variation of any quantum observable, such as linear and angular momentum, or particle number. Additionally, we demonstrate in the Appendix D that the result holds for explicitly time-dependent observables, with the CRIN conditions appropriately adapted for such cases.

Result 1 may offer valuable insights into the foundations of the measurement of quantum processes. Since we expect the CRIN conditions to hold, then \emph{the OBS protocol emerges as the only consistent framework for determining the fluctuations of VPQs}. Consequently, two-time observables must be considered to characterize variations in energy, charge, particle position, or other observables in quantum systems. Any deviation from the OBS protocol necessarily leads to the violation of at least one CRIN condition. For instance, as we illustrate in the next section, the TPM protocol fails to satisfy condition 1, resulting in the statistical violation of energy conservation. 

Interestingly, since the statistical framework for two-time observables mirrors that of conventional ``one-time'' quantum observables, the CRIN conditions naturally extend standard quantum phenomena to two-time observables. For instance, by assuming the CRIN conditions, entanglement, steering, or quantum superposition can be considered within the scope of two-time quantum observables (see Refs. \cite{Silva2021,Silva2023} for a discussion). Furthermore, we can deduce a two-time uncertainty relation for any initial state $\rho$ \cite{Silva2021}: 
\begin{equation} 
\sigma_{\text{\tiny $\Delta (H_j,U)$}}(\sigma_{\text{\tiny $U^\dagger H_j U$}} + \sigma_{\text{\tiny $H_j$}}) \geq |\braket{[U^{\dagger}H_jU, H_j]}|, \label{Heisenberg} 
\end{equation}
where $j\in\{1,2\}$, $\sigma_{\text{\tiny $O$}} = \sqrt{\braket{O^2} - \braket{O}^2}$ is the variance of an observable $O$ and $\braket{O} = \Tr[O\rho]$ its expectation value. Remarkably, this leads to scenarios where the variation  $\Delta(H_j, U)$ can become perfectly defined (i.e., $\sigma_{\text{\tiny $\Delta(H_j,U)$}} \to 0$), even when the individual energies $H_j$ at $0$ and $U^\dagger H_j U$ at $t$ cannot be completely determinate (i.e., $\sigma_{\text{\tiny $U^\dagger H_j U$}} + \sigma_{\text{\tiny $H_j$}} \to \infty$). We illustrate this feature in the trapped ion example in the next section.

As shown in Refs. \cite{Silva2021,Silva2023,Silva2024} and in the example of the next section, the OBS protocol is derived by measuring an observable commuting with $\Delta(H_1,U)$, allowing us to explicitly determine $\wp_{\text{\tiny OBS}}(z, H_1, U, \rho)$. However, in many experimental scenarios, direct measurements of certain quantities are impractical, and instead, indirect measurement schemes employing auxiliary probes are used \cite{Mohammady2019,Batalhao2014}. A natural question arises: can the same $\wp_{\text{\tiny OBS}}(z, H_1, U, \rho)$ be obtained through the use of a probe to measure the energy? The second key result addresses this challenge:

\begin{result} For any $U$, $H_1$ acting on $\mathcal{H}$, and eigenstate $\ket{\delta_i(H_1, U)}$ of $\Delta(H_1, U)$, there exists a unitary $U'$ acting on an auxiliary Hilbert space $\mathcal{H}'$, an additional Hamiltonian $H_2$ acting on $\mathcal{H} \otimes \mathcal{H}'$, and a vector $\ket{v} \in \mathcal{H}'$ such that: 
\begin{eqnarray} 
&[H_1 \otimes \mathbbm{1}_{\mathcal{H}'} + H_2, U \otimes U'] = 0, \label{commutresult10} \\
&H_2 \ket{\delta_i(H_1, U), v} = E_i \ket{\delta_i(H_1, U), v},\label{result122ndrelation0}\\
&(U^\dagger \otimes U^{'\dagger}) H_2 (U \otimes U') \ket{\delta_i(H_1, U), v} = E_i' \ket{\delta_i(H_1, U), v}, \label{result12ndrelation0} 
\end{eqnarray} 
where $\ket{\delta_i(H_1, U), v} = \ket{\delta_i(H_1, U)} \otimes \ket{v}$, and $E_i$ and $E_i' = E_i - \delta_i(H_1, U)$ are real numbers. \end{result}

Since the full derivation is somewhat lengthy, we present here only the main steps of the proof and defer the technical details to Appendix C.

\begin{proof}
 Since $U$ is unitary and diagonalizable by a countable eigenbasis $\{\ket{u_i}\}$, we assume that for every basis element $\ket{u_i}$, the relations $U\ket{u_i} = u_i\ket{u_i}$ and $u_i = \mathrm{e}^{i\theta_i}$ hold, where $\theta_i \in \mathbbm{R}$. The set of all $\theta_i$ is referred to as $\Theta = \{\theta_i\}$. Based on Lemma 1 in the Appendix C, we show that for any such $U$ and associated $\Theta$, we can define a $U'$ acting on a Hilbert space $\mathcal{H}'$ with countable basis $\{\ket{u_i'}\}$ satisfying the following condition: \textbf{(U)} \emph{Each basis element $\ket{u_i'}$ satisfies $U'\ket{u_i'} = u_i'\ket{u_i'}$, where $u_i' = \mathrm{e}^{i\theta_i'}$ and $\theta_i'$ is real. The set of all $\theta_i'$ associated with $\{\ket{u_i'}\}$ is denoted by $\Theta' = \{\theta_i'\}$ and is countably infinite set. Also, for any indexes $m$, $k$, and $j$, if $\theta_j' \in \Theta'$ and $\theta_m, \theta_k \in \Theta$, there exists infinitely many $\theta_l' \in \Theta'$ such that $\theta_m + \theta_j' = \theta_k + \theta_l'$ mod $2\pi$.}

Considering $U'$ that satisfy \textbf{(U)}, we are interested in constructing $H_2$ that satisfy Eq. \eqref{commutresult10}, whose components with respect to the basis $\ket{u_i,u_k'}$ can be written as
    \begin{equation}
        \ba{l}
        \bra{u_m,u_j}[H_1\otimes\mathbbm{1}_{\mathcal{H}'}+H_2,U\otimes U']\ket{u_k,u_l}=\\
        =(\mathrm{e}^{i(\theta_k+\theta_l')}-\mathrm{e}^{i(\theta_m+\theta_j')})((H_1)_{\text{\scriptsize $mk$}}\delta^{\text{\scriptsize $jl$}}+(H_2)_{\text{\scriptsize $mk$}}^{\text{\scriptsize $jl$}})=0.\label{resultadoparcial1}
        \ea
    \end{equation}
We considered the notation $\bra{u_m}H_1\ket{u_k}=(H_1)_{\text{\scriptsize $mk$}}$, $\bra{u_m,u_j'}H_2\ket{u_k,u_l'}=(H_2)_{\text{\scriptsize $mk$}}^{\text{\scriptsize $jl$}}$ and $\delta^{\text{\scriptsize $jl$}}= \bra{u_j'}\mathbbm{1}_{\mathcal{H}'}\ket{u_l'}$. The subindexes and superindexes refer, respectively, to the components of the basis of $U$ and $U'$.  As a result of property \textbf{(U)}, it follows that for any $m$, $k$ and $j$, there exist at least one $l$ such that $(\mathrm{e}^{i(\theta_k+\theta_l')}-\mathrm{e}^{i(\theta_m+\theta_j')})=0$. For these cases, we can consider $(H_2)_{mk}^{jl}$ as any arbitrary value and still the relation Eq. \eqref{resultadoparcial1} holds. Therefore, we can define free complex variables $h_{mk}^{jl}$ and assume the components of $H_2$ to be 
    \begin{equation}
        \ba{lll}
        (H_2)_{mk}^{jl}=h_{mk}^{jl}&\text{if}&(\mathrm{e}^{i(\theta_k+\theta_l')}-\mathrm{e}^{i(\theta_m+\theta_j')})=0\\
        (H_2)_{mk}^{jl}=-(H_1)_{mk}\delta^{jl}&\text{if}&(\mathrm{e}^{i(\theta_k+\theta_l')}-\mathrm{e}^{i(\theta_m+\theta_j')})\neq 0
        \ea
        \label{componentsHY} 
    \end{equation}
    for either the indexes $k=m$ and $l \geq j$, or the indexes $k>m$ and any $l$. For the other indexes, we define
    \begin{equation}
    (H_2)_{mk}^{jl}=[(H_2)_{km}^{lj}]^*.
        \label{conditionsindex4}
    \end{equation} 
    By taking into account that $H_1$ is hermitian, these two equations guarantee that $H_2$ is hermitian and satisfies Eq.\eqref{commutresult10} (see Lemma 2 of the Appendix C for more details). Therefore, we assume from this point on that $H_2$ is  defined by Eqs. \eqref{componentsHY} and \eqref{conditionsindex4} and all we need to do to prove the rest of the result is to characterize $h_{mk}^{jl}$ such that Eqs. \eqref{result122ndrelation0} and \eqref{result12ndrelation0} are satisfied. 

We begin by considering Eq.~\eqref{result122ndrelation0} and then show that $H_2$, satisfying Eqs.~\eqref{result122ndrelation0} and \eqref{commutresult10}, also satisfies Eq.~\eqref{result12ndrelation0}. For this, let $\ket{v} \in \mathcal{H}'$ be a state whose components $\beta_j = \braket{u_j'|v}$ satisfy $\Re(\beta_j) \neq 0$ and $\Im(\beta_j) \neq 0$, i.e. non-vanishing real and imaginary parts. As shown in Lemma 3 of the Appendix C, it is always possible to define such a $\ket{v}$. We analyze how to define the free variables $h_{mk}^{jl}$ in Eq.~\eqref{componentsHY} so that Eq.~\eqref{result122ndrelation0} is satisfied. For this, consider the components of Eq.~\eqref{result122ndrelation0} in the basis $\ket{u_m, u_j'}$:
\begin{equation}
    E_{i}\alpha_{mi}\beta_j = \sum_{k,l}(H_2)_{mk}^{jl}\alpha_{ki}\beta_l, \label{eigenvalueeq2}
\end{equation}
where $\alpha_{mi} = \braket{u_m|\delta_i(H_1,U)}$. Defining the set $\mathbbm{N}_{mj}$ of all pairs $\{k,l\}$ satisfying $(\mathrm{e}^{i(\theta_k+\theta_l')} - \mathrm{e}^{i(\theta_m+\theta_j')}) = 0$, we can rewrite Eq.~\eqref{eigenvalueeq2}, using Eqs.~\eqref{componentsHY} and \eqref{conditionsindex4}, as:
\begin{equation}
    E_{i}\alpha_{mi}\beta_j = -\sum_{\{k,l\} \notin \mathbbm{N}_{mj}}(H_1)_{mk}\delta^{jl}\alpha_{ki}\beta_l + \sum_{\{k,l\} \in \mathbbm{N}_{mj}}h_{mk}^{jl}\alpha_{ki}\beta_l. \label{eigenvalueeq3}
\end{equation}
By property \textbf{(U)}, there always exists at least one pair $\{k,l\}$ where $\alpha_{ki} \neq 0$ and $\mathbbm{N}_{mj}$ is non-empty. Thus, for each $m$ and $j$, at least one free variable $h_{mk}^{jl}$ is available. In Appendix C we demonstrate that $h_{mk}^{jl}$ can always be tuned to ensure Eq.~\eqref{eigenvalueeq3} holds for any $\alpha_{mi}$, $m$, $j$, $E_i$, and $(H_1)_{mk}$, proving Eq.~\eqref{result122ndrelation0} for any $U$, $H_1$, and $\ket{\delta_{i}(H_1,U)}$.

To complete the proof, we show that the same $H_2$ and $\ket{v} \in \mathcal{H}'$ satisfying Eqs.~\eqref{commutresult10} and \eqref{result122ndrelation0} also satisfy Eq.~\eqref{result12ndrelation0}. Applying $U^{\dagger} \otimes U^{'\dagger}$ to both sides of Eq.~\eqref{commutresult10}, we obtain:
\begin{equation}
    \Delta(H_1, U) \otimes \mathbbm{1}_{\mathcal{H}'} = -\Delta(H_2, U \otimes U'). \label{deltasiguais}
\end{equation}
Since $\Delta(H_1, U) \otimes \mathbbm{1}_{\mathcal{H}'}\ket{\delta_i(H_1,U),v} = \delta_i(H_1,U)\ket{\delta_i(H_1,U),v}$, Eq.~\eqref{deltasiguais} implies: $[(U^{\dagger} \otimes U^{'\dagger}) H_2 (U \otimes U') - H_2]\ket{\delta_i(H_1,U),v} = -\delta_i(H_1,U)\ket{\delta_i(H_1,U),v}$. Using the fact that $H_2$ satisfies Eq.~\eqref{result122ndrelation0} and the relation $(U^{\dagger} \otimes U^{'\dagger}) H_2 (U \otimes U') = H_2 + \Delta(H_2, U \otimes U')$, we deduce: $(U^{\dagger} \otimes U^{'\dagger}) H_2 (U \otimes U') \ket{\delta_i(H_1,U),v} = (E_i - \delta_i(H_1,U))\ket{\delta_i(H_1,U),v}$, completing the proof.
\end{proof}

Result 2 reveals a significant insight into the interplay between the OBS protocol and the CRIN conditions. When employing directly the OBS protocol for an energy operator $H_1$ without considering result 2, the measurement process projects the system into an eigenstate $\ket{\delta_i(H_1, U)}$, with the variation of energy inferred as $\delta_i(H_1, U)$. However, the variation $\delta_i(H_1, U)$ is not directly measured. Instead, the protocol relies on measuring an auxiliary observable $O^\Delta$ that commutes with $\Delta(H_1, U)$~\cite{Silva2021,Silva2024,Silva2023}. Considering result 2, an alternative perspective is offered, enabling a way to indirectly obtain the OBS statistics for $\Delta(H_1, U)$ by means of a probe, while adhering to the CRIN conditions. Specifically, Result 2 guarantees the existence of a probe related with an auxiliary Hamiltonian $H_2$ and dynamics $U'$ such that the total energy $H_1 \otimes \mathbbm{1}_{\mathcal{H}'} + H_2$ is conserved under $U \otimes U'$. Moreover, once the combined state $\ket{\delta_i(H_1, U), v}$ is prepared, it has well-defined values for $H_2$ at both times $0$ and $t$, enabling the variation of $H_2$ to be precisely determined as $-\delta_i(H_1, U)$. By considering CRIN conditions 2 and 3, one can infer that the variation of $H_2$ is $-\delta_i(H_1, U)$ with 100\% certainty (when disregarding practical experimental disturbances). Furthermore, conditions 1 and 4 ensure that this conclusion remains valid, as the conservation of $H_1 \otimes \mathbbm{1}_{\mathcal{H}'} + H_2$ enforces the same probability for $H_1$ to vary by $\delta_i(H_1, U)$. This result applies universally to any observable $O_1$ and $O_2$ by substituting $H_1,H_2 \to O_1,O_2$ in Result 2, therefore not restricting it to energy variations.

We emphasize that Result 2 is an existence result: it shows that, for every eigenstate $\ket{\delta_i(H_1,U)}$ of the variation operator, one can construct an extended energy-conserving dynamics in which this state is assigned definite initial and final values of the auxiliary Hamiltonian $H_2$. In general, the auxiliary Hamiltonian $H_2$, the unitary $U'$, and the vector $\ket{v}$ depend on the chosen eigenstate $\ket{\delta_i(H_1,U)}$. Thus, implementing the construction for all eigenstates may require different auxiliary settings. This does not weaken the statement of the result; rather, it clarifies its operational scope. In particular, the result establishes that each eigenvalue of $\Delta(H_1,U)$ can be embedded into an energy-conserving scenario with definite auxiliary energy changes. Moreover, in relevant cases this result can be used directly, as we show next.

\section{Trapped ion case study}

We consider a trapped ion system similar to that in Ref.~\cite{Lindenfels2019}. The total Hamiltonian is $H = H_{\text{\tiny HO}} + H_{\text{\tiny e}}$. Here, $H_{\text{\tiny HO}} = \frac{P^2}{2m}\otimes \mathbbm{1}_{\text{\tiny s}} + \frac{m\omega^2X^2}{2}\otimes \mathbbm{1}_{\text{\tiny s}} = \hbar\omega (N + 1/2)\otimes \mathbbm{1}_{\text{\tiny s}}$ represents the center-of-mass energy of a Ca$^+$ ion, and $H_{\text{\tiny e}} \approx \hbar (\omega_z/2 + \Delta_S k_{\text{\tiny SW}} X/2) \otimes \sigma_z$ describes the coupling between the ion's center-of-mass position and the spin of its covalent electron for small displacements, mediated by an optical dipole force (see Fig.~\ref{fig:corr}). The system evolves in isolation under $U_t = \exp(-itH/\hbar)$ until time $t$. Since $[H, \sigma_z] = [X, \sigma_z] = [P, \sigma_z] = 0$, then $U_t = U_t'\exp[-it(\hbar \omega_z\sigma_z-m\omega^2 a^2)/2\hbar]$, where we defined $U_t' = \exp(-itH'/\hbar)$, $H' = P^2/(2m) \otimes \mathbbm{1}_{\text{\tiny s}} + m\omega^2/2 \left(X \otimes \mathbbm{1}_{\text{\tiny s}} + a(\mathbbm{1}_{\text{\tiny CM}} \otimes \sigma_z)\right)^2$, and $a = (\hbar \Delta_S k_{\text{\tiny SW}}) / (2m\omega^2)$. Defining $X' = X \otimes \mathbbm{1}_{\text{\tiny s}} + a (\mathbbm{1}_{\text{\tiny CM}} \otimes \sigma_z)$, we find that $[X', P] = i\hbar\mathbbm{1}$ and $[X', X] = 0$. Given the Heisenberg evolution of any operator $O(t)=U_t^{\dagger} O U_t$ \cite{Sakurai1994}, it follows that $X'(t) = U_t^{'\dagger} X' U_t'$ and $P(t) = U_t^{'\dagger} (P \otimes \mathbbm{1}_{\text{\tiny s}}) U_t'$. Differentiating these expressions with respect to $t$ results in  $\partial_t X'(t) = P(t) / m$ and $\partial_t P(t) = -m\omega^2 X'(t)$. These solve to $X'(t) = X'(0)\cos(\omega t) + \frac{P(0)}{m\omega}\sin(\omega t)$ and $P(t) = -m\omega X'(0)\sin(\omega t) + P(0)\cos(\omega t)$, with initial conditions $X'(0) = X \otimes \mathbbm{1}_{\text{\tiny s}} + a (\mathbbm{1}_{\text{\tiny CM}} \otimes \sigma_z)$ and $P(0) = P \otimes \mathbbm{1}_{\text{\tiny s}}$. For the time $\tau = \pi / \omega$, we find: $X'(\tau) = -X \otimes \mathbbm{1}_{\text{\tiny s}} - a (\mathbbm{1}_{\text{\tiny CM}} \otimes \sigma_z)$, $P(\tau) = -P \otimes \mathbbm{1}_{\text{\tiny s}}$.
Considering these expressions, we obtain the Heisenberg operators:
\begin{eqnarray}
    &&H_{\text{\tiny HO}}(\tau)=H_{\text{\tiny HO}}+\Delta (H_{\text{\tiny HO}},U_\tau),\label{HhoandHetau}\\
    &&H_{\text{\tiny e}}(\tau)=H_{\text{\tiny e}}+\Delta (H_{\text{\tiny e}},U_\tau),\label{HhoandHetau2}\\
    &&\Delta(H_{\text{\tiny HO}}, U_{\tau}) = \hbar \Delta_S k_{\text{\tiny SW}} \left(X \otimes \sigma_z + a\right) = -\Delta(H_{\text{\tiny e}}, U_{\tau}). \label{Delta0tauexpressao}
\end{eqnarray}  
Therefore, 
\begin{equation}
    [\Delta (H_{\text{\tiny HO}},U_\tau),H_{\text{\tiny HO}}]=[H_{\text{\tiny HO}}(\tau),H_{\text{\tiny HO}}]=i2\hbar\omega^2 a  P\otimes \sigma_z,
\end{equation} 
and 
\begin{equation}
   \begin{array}{ll}
    [\Delta (H_{\text{\tiny e}},U_\tau),H_{\text{\tiny e}}]&=[H_{\text{\tiny e}}(\tau),H_{\text{\tiny e}}]=[\Delta (H_{\text{\tiny HO}},U_\tau),X\otimes\sigma_z]=\\&=-[\Delta (H_{\text{\tiny e}},U_\tau),X\otimes\sigma_z]=0.
\end{array} 
\end{equation}
Moreover, it follows that
\begin{equation}
    \Delta (H_{\text{\tiny HO}},U_\tau)\ket{x,\pm}=\delta_{x,\pm}\ket{\delta_{x,\pm}}=-\Delta (H_{\text{\tiny e}},U_\tau)\ket{x,\pm}\label{eigendeltamethods}
\end{equation}
where $\delta_{x,\pm}=\hbar \Delta_S k_{\text{\tiny SW}}(\pm x+a)$ and $\ket{\delta_{x,\pm}}=\ket{x,\pm}$. We are now ready to study the consequences of results 1 and 2.

\subsection{Conservation of energy}
\begin{figure*}
    \centering
    \includegraphics[width=0.8\linewidth]{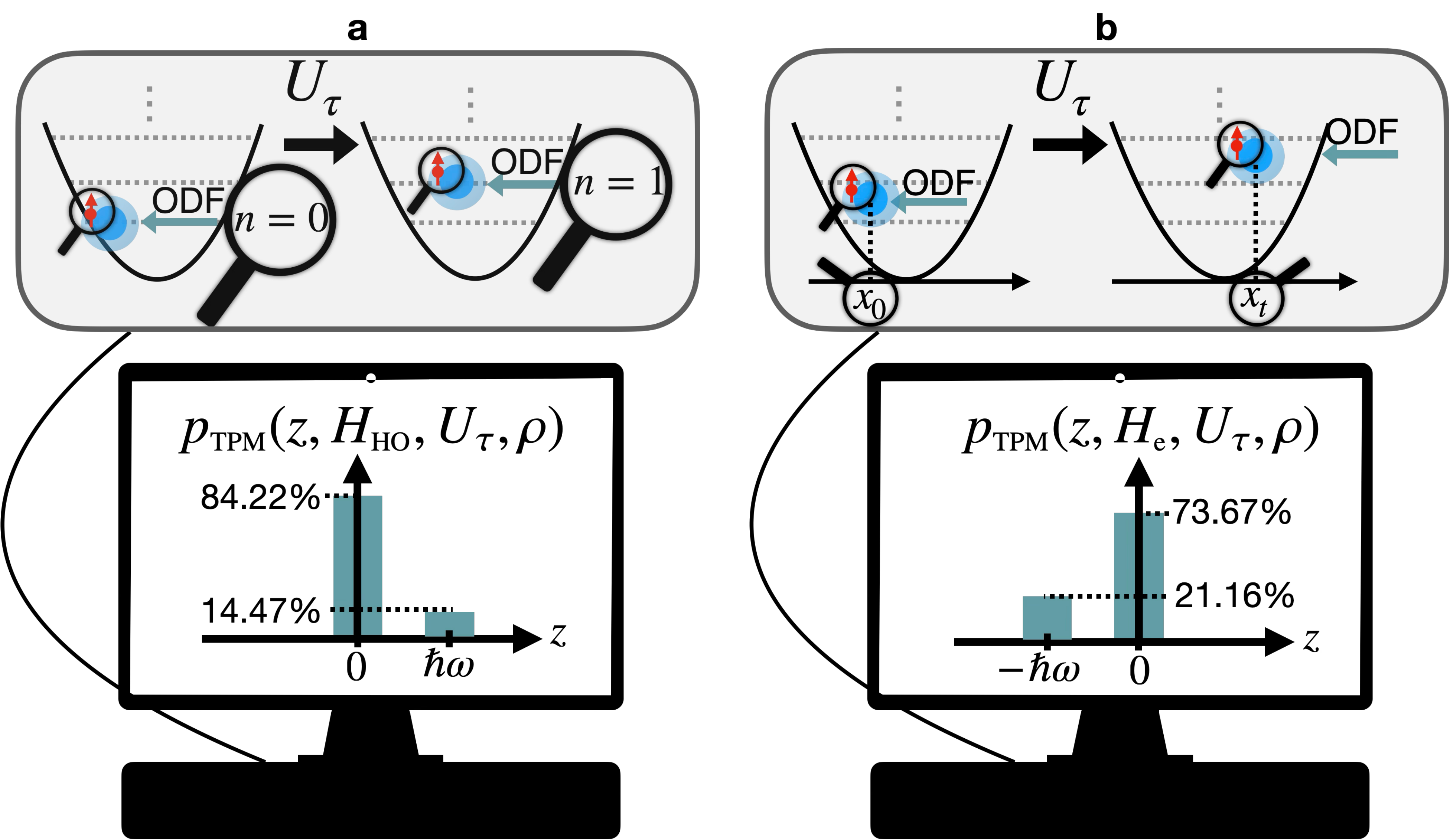}
    \caption{\textbf{Violation of energy conservation in TPM protocol}. The system $\Omega$ consists of a trapped Ca$^+$ ion model inspired in the Ref.~\cite{Lindenfels2019}. The total energy of the system is $H = H_{\text{\tiny HO}} + H_{\text{\tiny e}}$. $H_{\text{\tiny HO}} = \hbar \omega (N + 1/2)\otimes\mathbbm{1}_{\text{\tiny s}}$ is the energy related with the center of mass (CM). The position of the CM is coupled to the spin of a covalent electron (red arrow) through a spin-dependent optical dipole force (ODF), with a coupling energy approximately $H_{\text{\tiny e}} \approx \hbar (\omega_z/2 + \Delta_S k_{\text{\tiny SW}} X/2) \otimes \sigma_z$~\cite{Lindenfels2019}. One round of the TPM measurement of the variations of $H_{\text{\tiny HO}}$ and $H_{\text{\tiny e}}$ are represented in the above part of figures \textbf{a} and \textbf{b}, respectively. After collecting the results of many rounds of experiments, we represent in the lower graphs the two highest values of the probabilities $p_{\text{\tiny TPM}}(z, H_{i}, U_\tau, \rho)$ of the energies $H_i\in\{H_{\text{\tiny HO}},H_{\text{\tiny e}}\}$ of varying $z$ during the evolution $U_\tau$ for the initial state $\rho$. Energy conservation requires the increase of $H_{\text{\tiny HO}}$ to be equal to the decrease of $H_{\text{\tiny e}}$.  However, \textbf{a} and \textbf{b} reveal that TPM predicts $H_{\text{\tiny HO}}$ to be less likely to increase by $\hbar\omega$ than $H_{\text{\tiny e}}$ to decrease by $-\hbar\omega$, violating energy conservation. The calculations are done in the section ``Trapped ion case study'' and Appendix E, considering an initial state $\rho = \ket{0}\bra{0} \otimes \ket{+}\bra{+}$, with $N\ket{0} = 0\ket{0}$ and $\sigma_z\ket{+} = \ket{+}$, and parameters: $\tfrac{\omega}{1.4} = \tfrac{\Delta_S}{2.73}=\tfrac{\omega_z}{13}=2\pi$ MHz, $k_{\text{\tiny SW}} = 2\pi / (280 \,\text{nm})$, and $m = 6.68 \times 10^{-26}$ kg. The interval considered to compute the variation is $[0,\tau]$, where $\tau=\pi/\omega$.}
    \label{fig:corr}
\end{figure*}

To understand better the role of energy conservation in a protocol, let us compute the TPM statistics for the variation of $H_{\text{\tiny HO}} $ and $H_{\text{\tiny e}}$ for a preparation $\rho = \ket{0,+}\bra{0,+}$, where $N\ket{0} = 0\ket{0}$ and $\sigma_z\ket{+} = \ket{+}$. This is schematically represented in Fig.~\ref{fig:corr}. Let us first consider the TPM applied to the energy $H_{\text{\tiny e}}$ (Fig.~\ref{fig:corr}b). In this case, we can use a result similar to Ref.~\cite{Talkner2007} to show that if $[H_{\text{\tiny e}}(\tau), H_{\text{\tiny e}}] = 0$, $\wp_{\text{\tiny TPM}}(z, H_{\text{\tiny e}}, U_\tau, \rho) = \wp_{\text{\tiny OBS}}(z, H_{\text{\tiny e}}, U_\tau, \rho)$ (see Appendix E). Thus, it suffices $\wp_{\text{\tiny OBS}}(z, H_{\text{\tiny e}}, U_\tau, \rho)$ to describe $\wp_{\text{\tiny TPM}}(z, H_{\text{\tiny e}}, U_\tau, \rho)$. Using Eq.~\eqref{eigendeltamethods} and the OBS definition, we find:
\begin{equation}
    M_{\text{\tiny OBS}}(z, H_{\text{\tiny e}}, U_\tau) = \sum_{s\in\{+,-\}}\int_{-\infty}^{\infty}dx \delta^{\text{\tiny \textbf{D}}}[z - \delta_{x,s}(H_{\text{\tiny e}}, U_\tau)] P_{x,s},\label{metobspovmhe}
\end{equation}
where $P_{x,s} = \ket{x,s}\bra{x,s}$ and the sum $\sum_{s \in \{+,-\}}$ runs over the two possible values of $s = \pm$. Substituting $M_{\text{\tiny OBS}}(z, H_{\text{\tiny e}}, U_\tau)$ into Eq.~\eqref{obsdistrib} and simplifying using Eq.~\eqref{eigendeltamethods}, we find:
\begin{equation}
    \ba{rl}
    \wp_{\text{\tiny OBS}}(z, H_{\text{\tiny e}}, U_\tau,\rho)&=\frac{1}{\hbar \Delta_S k_{\text{\tiny SW}}}|\braket{-a-\frac{z}{\hbar \Delta_S k_{\text{\tiny SW}}}|0}|^2,\label{wdistmet1}
    \ea
\end{equation}
where $\braket{-a - \tfrac{z}{\hbar \Delta_S k_{\text{\tiny SW}}} | 0}$ corresponds to $\braket{x | 0}$, with $x \to -a - \tfrac{z}{\hbar \Delta_S k_{\text{\tiny SW}}}$. For the ground state $\ket{0}$, we have:
\begin{equation}
     |\braket{x|0}|^2=\mathcal{N}(x,0,\sigma^2), \label{gaussian0}
 \end{equation}
 where $\sigma\equiv\sqrt{\hbar/(2m\omega)}$ and, from now on, we consider
 \begin{equation}
     \mathcal{N}(y,y_c,\sigma_y^2)=\frac{1}{\left(2\pi \sigma_y^2\right)^{1/2}} \exp\left[-\frac{(y-y_c)^2}{2\sigma_y^2}\right]\label{Normaldef}
 \end{equation}
as a normal distribution of variable $y$, with mean $y_c$ and variance $\sigma_y^2$. Substituting Eq.~\eqref{gaussian0} into Eq.~\eqref{wdistmet1}, and noting that $\wp_{\text{\tiny TPM}}(z, H_{\text{\tiny e}}, U_\tau, \rho) = \wp_{\text{\tiny OBS}}(z, H_{\text{\tiny e}}, U_\tau, \rho)$, we find: $\wp_{\text{\tiny TPM}}(z, H_{\text{\tiny e}}, U_\tau, \rho) = \mathcal{N}(z, z_a, \sigma_z^2)$, where $z_a = -\hbar \Delta_S k_{\text{\tiny SW}} a$ and $\sigma_z = \hbar \Delta_S k_{\text{\tiny SW}} \sigma$. Finally, we compute the probability $p_{\text{\tiny TPM}}(-n\hbar\omega, H_{\text{\tiny e}}, U_\tau,\rho)\coloneqq \int_{I_n^-}\,dz\,\wp_{\text{\tiny TPM}}(z, H_{\text{\tiny e}}, U_\tau,\rho)$ of $H_{\text{\tiny e}}$ of varying by an amount $z$ in intervals $I_n^- = [-(n+1/2)\hbar\omega, -(n-1/2)\hbar\omega]$. Figure~\ref{fig:corr}b shows $p_{\text{\tiny TPM}}(-n\hbar\omega, H_{\text{\tiny e}}, U_\tau, \rho)$ for $n = 0$ and $n = 1$.

Next, we use eq. \eqref{eq:TPMprobgeneral} to find: $\wp_{\text{\tiny TPM}}(z,H_{\text{\tiny HO}}\otimes \mathbbm{1}_{\text{\tiny s}}, U_\tau,\rho)= \sum_{m,n=0}^{\infty}\sum_{r,s\in\{+,-\}} \delta^{\text{\tiny \textbf{D}}}[z - (e_{m,r} - e_{n,s})] |\bra{m,r} U_\tau \ket{n,s}|^2  \bra{n,s}\rho\ket{n,s}$, where $\sum_{r,s \in \{+,-\}}$ sums over $r, s \in \{+,-\}$, and $\ket{n,s}$ are eigenvectors of $H_{\text{\tiny HO}}$ satisfying $H_{\text{\tiny HO}} \ket{n,\pm} = e_{n,\pm} \ket{n,\pm} = \hbar\omega(n + 1/2) \ket{n,\pm}$. Since $[U_\tau, \sigma_z] = 0$ and $\bra{n,s} \rho \ket{n,s} = 1$ for $n = 0$ and $s = +$, and $0$ otherwise, it follows: $ \wp_{\text{\tiny TPM}}(z, H_{\text{\tiny HO}} , U_\tau, \rho) = \sum_{m=0}^{\infty} \delta^{\text{\tiny \textbf{D}}}[z - (e_{m,+} - e_{0,+})] |\bra{m} U_\tau^+ \ket{0}|^2$, where $U_\tau^{+}=\bra{+}U_\tau\ket{+}=\mathrm{e}^{-i\theta_\tau}\exp \left[-\frac{i\tau}{\hbar}\left(\frac{P^2}{2m}+\frac{m\omega^2}{2}X_+^{2}\right)\right]$, $X_+=\bra{+}X'\ket{+}=X+a$, and $\theta_\tau=\tfrac{\hbar\omega_zt}{2\hbar}-\tfrac{m\omega^2a^2t}{2\hbar}$.  

Analogously, $
p_{\text{\tiny TPM}}(n\hbar\omega, H_{\text{\tiny HO}} , U_\tau, \rho) \coloneqq \int_{I_n^+} dz \, \wp_{\text{\tiny TPM}}(z, H_{\text{\tiny HO}} , U_\tau, \rho) = |\bra{n} U_\tau^+ \ket{0}|^2$ is the probability of $H_{\text{\tiny HO}} $ varying in intervals $I_n^+ = [(n-1/2)\hbar\omega, (n+1/2)\hbar\omega]$. We derive $|\bra{n} U_\tau^+ \ket{0}|^2$ analytically, but leave its details to the Appendix E. In Fig.~\ref{fig:corr}a, we show $p_{\text{\tiny TPM}}(n\hbar\omega, H_{\text{\tiny HO}} , U_\tau, \rho)$ for $n = 0$ and $n = 1$.

It can be explicitly proven that the TPM protocol satisfies the reality, state-independence, and no-signaling conditions. Moreover, when $[H_{1},U^\dagger H_1 U]=[H_{2},U^\dagger H_2 U]=0$, it will also satisfy the conservation condition. However, the latter will be violated in general when energy operators $H_{1(2)}$ and $U^\dagger H_{1(2)}U$ do not commute. The result above shows exactly that feature: the TPM probability assigned to a given energy variation on one side of the process does not, in general, coincide with the probability assigned to the corresponding opposite variation on the other side. The same issue does not arise for the OBS protocol. Indeed, OBS is defined directly from the Heisenberg-picture variation operator, and therefore immediately satisfy
\begin{equation}
    \Delta(H_1,U)=U^{\dagger}H_1U-H_1=-(U^{\dagger}H_2U-H_2)=-\Delta(H_2,U).
\end{equation} As a consequence, whenever the total energy is conserved, the corresponding variation operators satisfy the appropriate operator identity, which in turn immediately guarantees condition 1.

\begin{figure}
    \centering
    \includegraphics[width=1\linewidth]{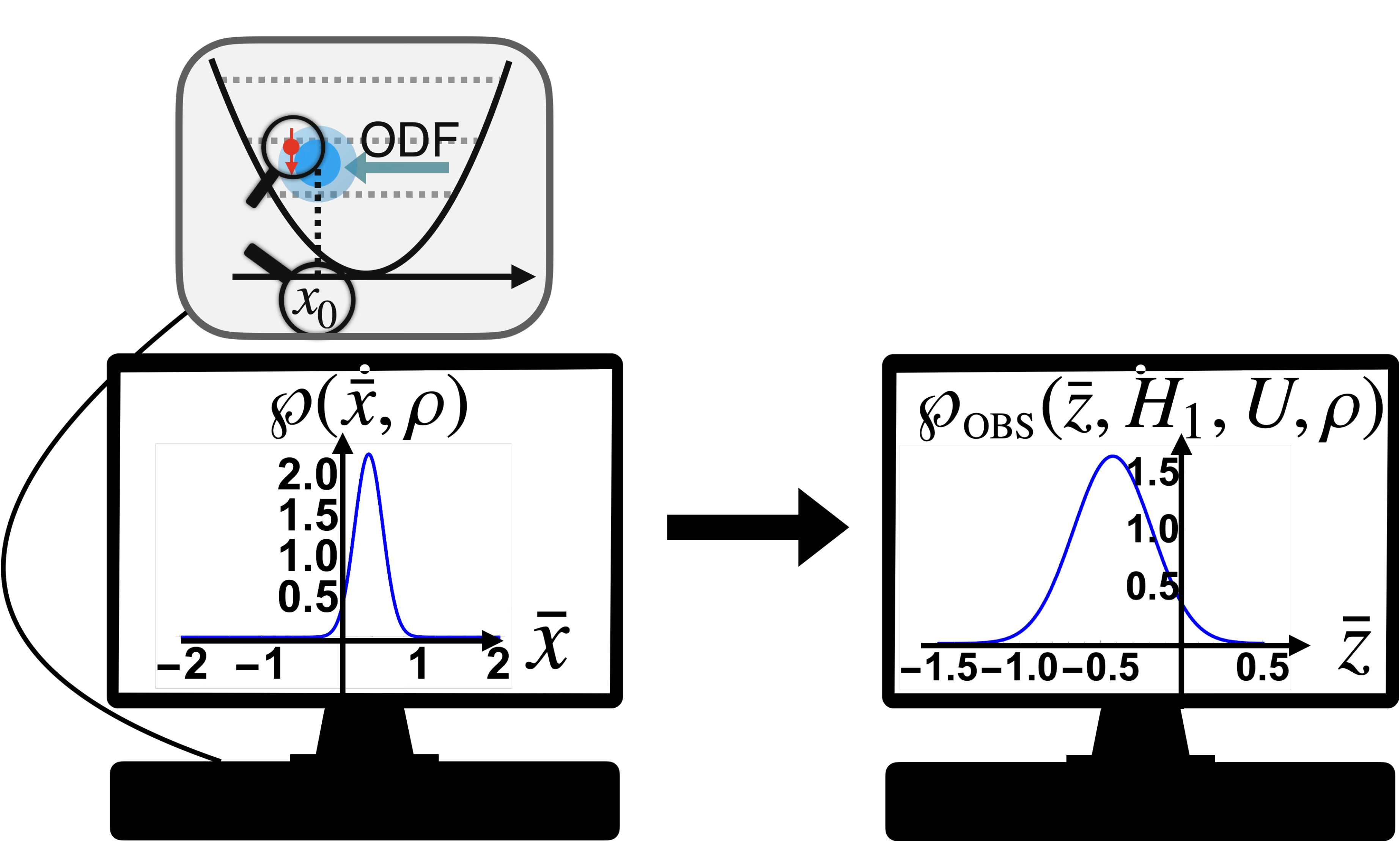}
    \caption{\textbf{Schematic implementation of the OBS protocol for the trapped Ca$^+$ ion system described in Fig. \ref{fig:corr}}. The variation of the center-of-mass energy, $H_1 = H_{\text{\tiny HO}} \otimes \mathbbm{1}_s=\hbar \omega (N + 1/2)\otimes \mathbbm{1}_s$, is inferred entirely from measurements of $X \otimes \sigma_z$, an observable that commutes with $\Delta(H_1, U)$, with $U=U_\tau$. This highlights how OBS protocol can be applied via commuting observables. We consider an initial preparation $\rho=\ket{\alpha,-}\bra{\alpha,-}$, where $\ket{\alpha}$ is the coherent state and $\sigma_z\ket{-}=-\ket{-}$. The left graph below the scheme shows the expected probability distribution $\wp(\bar{x},\rho)$ of measuring $X \otimes \sigma_z$  at time $0$ and finding $\bar{x} = k_{\text{\tiny SW}} x$ for the dimensionless position. Eq. \eqref{obsdistrib} together with $\wp(\bar{x},\rho)$ can be used to obtain the OBS probability distribution $\wp_{\text{\tiny OBS}}(\bar{z}, H_1, U, \rho)$ for the dimensionless variation $\bar{z} = z / \hbar\omega$ (see section ``Trapped ion case study'' and Appendix E). $\wp_{\text{\tiny OBS}}(\bar{z}, H_1, U, \rho)$ is shown in the right graph as a normalized Gaussian centered at $\braket{\Delta(H_1, U)}/(\hbar\omega)$, with variance $\sigma_{\text{\tiny $\Delta(H_1, U)$}}/\hbar\omega = \sqrt{\braket{\Delta^2(H_1, U)}-\braket{\Delta(H_1, U)}^2}/\hbar\omega$.}
    \label{fig:OBSEXAMPLE}
\end{figure}
\subsection{OBS protocol, non-commutative operators, and result 2}

To study in detail the consequences of result 2, let us consider an initial state $\rho = \ket{\alpha, -}\bra{\alpha, -}$, where 
\begin{equation}
    \ket{\alpha}=\mathrm{e}^{-|\alpha|^2/2}\sum_{n=0}^{\infty}\frac{\alpha^n}{\sqrt{n!}}\ket{n},\label{propalpha}
\end{equation}
is the coherent state~\cite{Cohen2020} and $\sigma_z \ket{-} = -\ket{-}$. We show in Fig.~\ref{fig:OBSEXAMPLE}, a graph of the OBS distribution for this state:
\begin{equation}
    \wp_{\text{\tiny OBS}}(z, H_{\text{\tiny HO}} , U_\tau,\rho)=\int_{-\infty}^{\infty}dx \delta^{\text{\tiny \textbf{D}}}[z - \delta_{x,-}] |\braket{x|\alpha}|^2.\label{wobsalphamet}
\end{equation}
Using the coherent state probability $|\braket{x|\alpha}|^2 = \mathcal{N}(x, \bra{\alpha} X \ket{\alpha}, \sigma_X^2)$, with $\bra{\alpha} X \ket{\alpha} = \sqrt{2\hbar/(m\omega)} \Re(\alpha)$ and $\sigma_X = \sqrt{\hbar/(2m\omega)}$ \cite{Cohen2020}, and introducing the dimensionless position $\bar{x} = k_{\text{\tiny SW}} x$, we get the probability $\wp(\bar{x},\rho)= \mathcal{N}(\bar{x}, k_{\text{\tiny SW}} \bra{\alpha} X \ket{\alpha}, k_{\text{\tiny SW}}^2 \sigma_X^2)$ of the center of mass of the ion to be found at dimensionless position $\bar{x}$ when initially prepared in the state $\rho$. The left-hand side of Fig.~\ref{fig:OBSEXAMPLE} shows the theoretical prediction of such a distribution. This could be obtained experimentally by measuring the center of mass of the ion, which can be achieved experimentally~\cite{Lindenfels2019}.

Now, given $|\braket{x|\alpha}|^2$, we can compute $\wp_{\text{\tiny OBS}}(\bar{z}, H_{\text{\tiny HO}} , U_\tau, \rho)$ with respect to the dimensionless variation $\bar{z} = z/(\hbar\omega)$, represented in the right-hand side of Fig.~\ref{fig:OBSEXAMPLE}. Substituting $|\braket{x|\alpha}|^2 = \mathcal{N}(x, \bra{\alpha} X \ket{\alpha}, \sigma_X^2)$ into Eq.~\eqref{wobsalphamet}, we find our theoretical prediction: $\wp_{\text{\tiny OBS}}(\bar{z}, H_{\text{\tiny HO}} , U_\tau,\rho)= \mathcal{N}(x,\bar{z}_\alpha,\sigma_{\bar{z}}^2)$, where $\bar{z}_\alpha = -\frac{2m\omega^2 a \bra{\alpha} X \ket{\alpha}}{\hbar} + \frac{2m\omega |a|^2}{\hbar}$ and $\sigma_{\bar{z}} = \frac{2m\omega a \sigma_X}{\hbar}$. The right-hand side of Fig.~\ref{fig:OBSEXAMPLE} displays $\wp_{\text{\tiny OBS}}(\bar{z}, H_{\text{\tiny HO}} , U_\tau, \rho)$, with parameters as described in the figure. 

We now compute the Heisenberg two-time uncertainty relations for $H_{\text{\tiny HO}} $ using Eqs.~\eqref{HhoandHetau} and \eqref{Delta0tauexpressao}. Using the coherent state properties from \cite{Cohen2020}, we arrive in the results: $\sigma_{\bar{z}}\coloneqq\sigma_{\Delta(H_{\text{HO}},U_\tau)}/(\hbar\omega)=\sqrt{ \hbar\Delta_S^2 k_{SW}^2/\omega^3}$, $\sigma_{H_{\text{HO}}(\pi/\omega)}/(\hbar\omega)=\sqrt{(\Re(\alpha)-\sigma_{\bar{z}})^2+\Im^2(\alpha)}$,  $\sigma_{H_{\text{HO}}}/(\hbar\omega)=|\alpha|$, and $|\braket{[H_{\text{\tiny HO}}(\pi/\omega),H_{\text{\tiny HO}}(0)]}|/(\hbar^2\omega^2)=2\sigma_{\bar{z}}\Im(\alpha)$. Notice that for sufficiently large $\omega$, $\Re(\alpha)$ and finite $\Im(\alpha)$, $\sigma_{\Delta(H_{\text{HO}},U_\tau)}/(\hbar\omega)\to 0$ while $\sigma_{H_{\text{HO}}(\pi/\omega)}/(\hbar\omega),\sigma_{H_{\text{HO}}}/(\hbar\omega) \to \infty$ and $|\braket{[H_{\text{\tiny HO}}(\pi/\omega),H_{\text{\tiny HO}}(0)]}|/(\hbar^2\omega^2)$ finite. This is the situation which we predicted in section \ref{sec:mainresults}. 

Finally, let us analyze how result 2 can be approximately seen for the coherent state. We depict in Fig.~\ref{fig:comparison}a the probabilities of measuring $H_{\text{\tiny HO}}(\pi/\omega)$ and $H_{\text{\tiny HO}}(0)$. At $t=0$, $H_{\text{\tiny HO}}(0)\ket{n,-} = \hbar\omega(n + 1/2)\ket{n,-}$, and the probability $p(n, H_{\text{\tiny HO}}(0))$ is: $p(n,H_{\text{\tiny HO}}(0))=|\braket{n,-|\alpha,-}|^2=\mathrm{e}^{-|\alpha|^2}\frac{|\alpha|^{2n}}{n!}$.
At $t = \tau$, we show in the Appendix E, $p(n, H_{\text{\tiny HO}}(\tau)) = \mathrm{e}^{-|\alpha'|^2} \frac{|\alpha'|^{2n}}{n!}$, where $\alpha' = [\Re(\alpha) - 2a\sqrt{m\omega/(2\hbar)}] + i\Im(\alpha)$. Fig.~\ref{fig:comparison}(a) shows $p(n, H_{\text{\tiny HO}}(0))$ (blue squares) and $p(n, H_{\text{\tiny HO}}(\tau))$ (red circles). We compare these probabilities with the distributions depicted in Fig.~\ref{fig:comparison}b $\wp(\bar{e}, H_{\text{\tiny e}}(t))$ of $H_{\text{\tiny e}}$ at $t=0$ and $t=\tau$. Using Eqs.~\eqref{HhoandHetau} and \eqref{Delta0tauexpressao}, we compute: $\wp(\bar{e}, H_{\text{\tiny e}}(0)) = \mathcal{N}(\bar{e}, (\hbar\omega)^{-1} \braket{H_{\text{\tiny e}}(0)}, (\hbar\omega)^{-2} \sigma_{H_{\text{\tiny e}}}^2)$ and $
\wp(\bar{e}, H_{\text{\tiny e}}(\tau)) = \mathcal{N}(\bar{e}, (\hbar\omega)^{-1} \braket{H_{\text{\tiny e}}(\tau)}, (\hbar\omega)^{-2} \sigma_{H_{\text{\tiny e}}}^2)$, where $\sigma_{H_{\text{\tiny e}}} = \hbar \Delta_S k_{\text{\tiny SW}} \sigma_X / 2$, $\braket{H_{\text{\tiny e}}(0)} = -(\hbar/2)(\omega_z - \Delta_S k_{\text{\tiny SW}} \bra{\alpha}X\ket{\alpha})$, and $\braket{H_{\text{\tiny e}}(\tau)} = (\hbar/2)(-\omega_z + \Delta_S k_{\text{\tiny SW}}(\bra{\alpha}X\ket{\alpha} - 2a))$. Fig.~\ref{fig:comparison}(b) shows $\wp(\bar{e}, H_{\text{\tiny e}}(0))$ (blue solid line) and $\wp(\bar{e}, H_{\text{\tiny e}}(\tau))$ (red dashed line). Notice that the absolute value of the energy variation is the same in both \textbf{a} and \textbf{b}, while the energy fluctuations in the former are significantly larger than in the later. This difference is consistent with Eq. \eqref{Heisenberg}: in \textbf{a}, the commutator $[U^\dagger H_1 U, H_1] \neq 0$ is non-zero, so that when the variation is approximately well defined, its energy at times $0$ and $\tau$ are not; on the other hand, in \textbf{b}, $[U^\dagger H_2 U, H_2] = 0$, so that is possible to have low fluctuation energies at the same times. Moreover, notice that the figures illustrate that energy variation of $H_1$ are more readily discernible when measuring $H_2$ at times $0$ and $\tau$, showing one example of Result 2 being applied.

\begin{figure}
    \centering
    \includegraphics[width=1\linewidth]{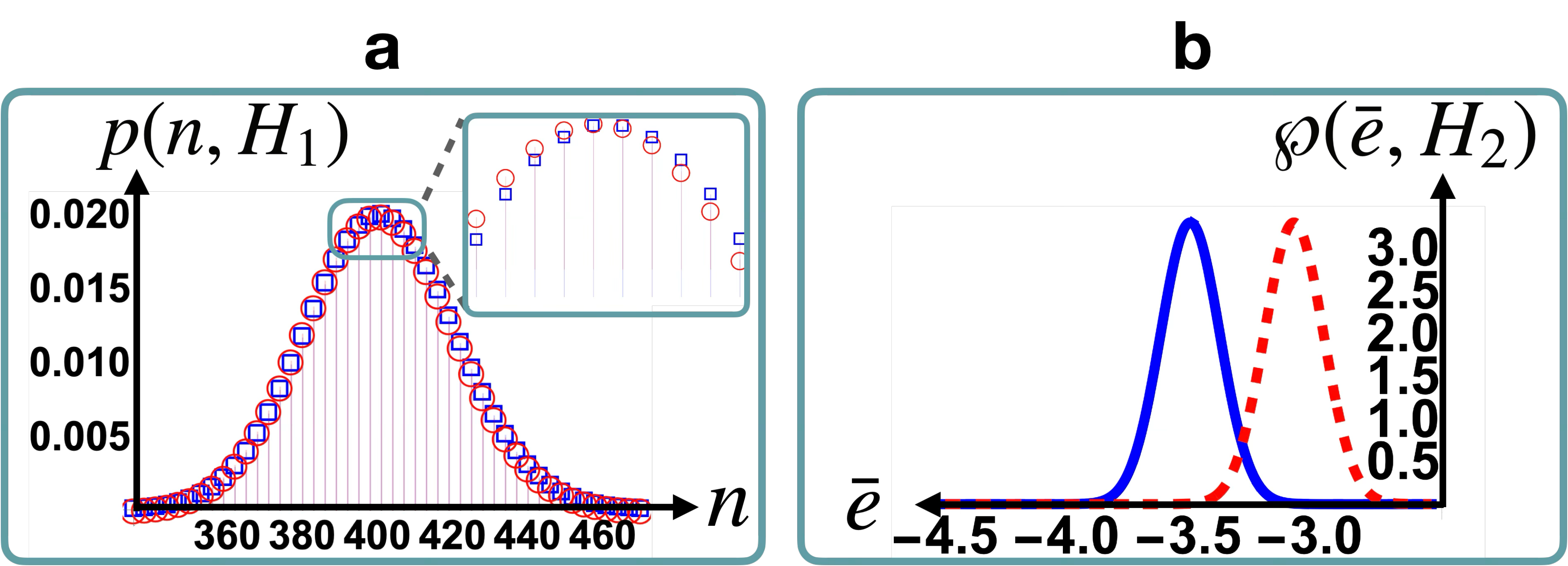}
    \caption{\textbf{The role of commutation in Eq. \eqref{Heisenberg} for the trapped Ca$^+$ ion system}. \textbf{(a)}  For $H_1=H_{\text{\tiny HO}}$, we plot the probabilities $\bra{n, -} \rho \ket{n, -}$ (blue squares) and $\bra{n, -} U \rho U^\dagger \ket{n, -}$ (red circles) of the energy $H_1$ being $E_n = (n + 1/2)\hbar\omega$ at $t = 0$ and $t = \tau$. The inset zooms into $395 \leq n \leq 405$, where small differences in the distributions cause small differences in the averages $\braket{U^\dagger H_1 U} - \braket{H_1} = \braket{\Delta(H_1, U)}$ correspond to the center of the normal distribution in Fig. \ref{fig:OBSEXAMPLE}. The variation of $H_1$ has relatively low fluctuations, with $\sigma_{\text{\tiny $\Delta(H_1, U)$}} \approx 0.243 \hbar\omega$. However, the uncertainties $\sigma_{\text{\tiny $H_1$}}\approx \sigma_{\text{\tiny $U^\dagger H_1 U$}} \approx |\Im(\alpha)|\hbar\omega = 20\hbar\omega$ are large, consistent with the lower bound in Eq. \eqref{Heisenberg}, $|\braket{[U^\dagger H_1 U,H_1]}| \approx 9.707 \hbar^2\omega^2$. \textbf{(b)} The probabilities distributions of finding $H_2= H - H_1 = H_{\text{\tiny e}}$ with the dimensionless value $\bar{e}=e/(\hbar\omega)$ at $t = 0$ (solid blue line) and $t = \tau$ (dashed red line) show low fluctuations. Here, $[U^\dagger H_2 U, H_2] = 0$, allowing for $\sigma_{\text{\tiny $H_2$}} =\sigma_{\text{\tiny $U^\dagger H_2 U$}} \approx 0.121\hbar\omega$ and $\sigma_{\text{\tiny $\Delta(H_2,U)$}} \approx \sigma_{\text{\tiny $\Delta(H_1,U)$}}\approx 0.243\hbar\omega$. Although the absolute value of the energy variation is the same in both panels, the energy fluctuations in \textbf{a} are significantly larger than in \textbf{b}. This difference is consistent with Eq. \eqref{Heisenberg}: in \textbf{a}, the commutator $[U^\dagger H_1 U, H_1] \neq 0$ is non-zero, while in \textbf{b}, $[U^\dagger H_2 U, H_2] = 0$. Notice that the figures illustrate that energy differences in $H_1$ are more readily discernible when measuring $H_2$ compared to $H_1$, supplementing Result 2.}
    \label{fig:comparison}
\end{figure}
\section{Final remarks}

To conclude, we have shown that the OBS protocol is unique in complying with the CRIN principles.
Given the fundamental nature of these principles, we expect them to be fulfilled by any POVM protocol aiming to measure VPQs. In this sense, results 1 and 2 provide strong support for considering the OBS protocol as a standard for measuring variation of quantum observables. 

This result strengthens and extends our previous findings in Ref.~\cite{Silva2024}. There, we showed that, among the \emph{established protocols} considered in the literature, only the OBS protocol simultaneously satisfies energy conservation and state independence while recovering the classical limit. Here, we go a significant step further: \emph{any alternative protocol} whose statistics differ from those of OBS must violate at least one of the CRIN conditions and is therefore physically inconsistent within this framework. Moreover, since OBS protocol satisfy the classical-limit criterion introduced in Ref.~\cite{Silva2024}, a immediate consequence of our results is that OBS is the only protocol compatible with both the CRIN conditions and the classical limit. Moreover, our results apply to the variation of arbitrary observables upon replacing the energy operators with the corresponding observables in the results. As shown in the Appendix D, the framework can also be extended to explicitly time-dependent observables, provided that the conservation law requirements are suitably adapted.

Looking forward, several promising directions emerge. For instance, our findings open intriguing directions for exploring the observable properties of two-time observables. Can Bell's inequalities applied to two-time observables reveal new insights for quantum cryptography protocols or uncover deeper aspects of quantum nonlocality? How does this framework connect with relativity principles? Moreover, could entanglement and other quantum correlations enhance the capabilities of quantum machines when applied to processes rather than instantaneous (one-time) observables? Since the OBS protocol is the only measurement scheme consistent with the CRIN conditions, these questions are not merely speculative, as they emerge naturally within a physically framework consistent with the CRIN conditions.

Another avenue is to apply the OBS protocol in quantum computing platforms \cite{Arute2019,Neill2017,Kreppel2023}, to gain insights into charge fluctuations and energy dissipation in quantum processors. Is it possible to approximate to eigenstates like $\ket{\delta_i(H_1, U)}$ with virtually no fluctuation in currents or energy variation in quantum devices? The OBS protocol framework also opens new possibilities for advancing measurement techniques in complex quantum systems. For instance, Result 2 demonstrated that attaching an auxiliary system governed by $H_2$ and $U'$ can enable indirect measurement of variations of observables within a specific system. This approach could be extended to quantum many-body systems \cite{lukin2019}, where adding few extra degrees of freedom and leveraging conservation laws could provide insights into highly nontrivial observables. Moreover, our findings invite deeper questions about conservation laws and measurement precision. For example, the WAY theorem \cite{Loveridge2011,Gisin2018} implies a fundamental tradeoff between energy conservation and the precise measurement of non-commuting observables. Could an auxiliary system be introduced, as in result 2, to restore energy conservation (by means of condition 1) while enabling precise measurements? 

It is important to remark that the framework presented in this work is grounded in a universal perspective, with an open quantum system (OQS) part of a larger, closed universe treated as a special case. Adapting the OBS protocol to account for scenarios involving noise and environmental interactions, perhaps using the adjoint Lindblad generator \cite{Breuer2002}, is a necessary step. Within this line of research, we are currently investigating the relation between the statistics of variation of observables and the lack of detailed balance at equilibrium (nonreciprocity) and persistent quantum currents \cite{Alicki2023}.

\section{Acknowledgments}
We thank Renato Angelo, Shay Blum, {\v C}aslav Brukner, Ohad Cremerman, Karen Hovhannisyan, Michael Iv, Gabriel Landi, Uri Peskin, Saar Rahav, and Ferdinand Schmidt-Kaler for fruitful conversations. T.A.B.P.S. was supported by the Helen Diller Quantum Center - Technion under Grant No. 86632417 and by the ISRAEL SCIENCE FOUNDATION (grant No. 2247/22).  D.G.K. is supported by the ISRAEL SCIENCE FOUNDATION (grant No. 2247/22) and by the Council for Higher Education Support Program for Hiring Outstanding Faculty Members in Quantum Science and Technology in Research Universities.

\section{Contributions}

Both authors contributed to the theoretical conceptualization, deductions, and writing of the manuscript.

\section{Competing interests}

The authors declare no competing interests.


\clearpage
\onecolumngrid

\section{APPENDIX A: Preliminaries}

Before presenting the main results and derivations of the paper in detail, we introduce key formalities and preliminary results that will be essential throughout the Appendices. We primarily consider the same systems discussed in the main text. Specifically, we analyze a general quantum system, $\Omega$, composed of arbitrary subsystems, under minimal assumptions: $\Omega$ is initially prepared in an arbitrary quantum state $\rho$ in a Hilbert space $\mathcal{H}$ and evolves from time $0$ to $t$ under an arbitrary unitary operator $U$, which has a countable eigenbasis. Our focus is on the measurement of the variation of \emph{a part} of the total energy of $\Omega$, represented by a time-independent Hermitian operator $H_1$ acting on $\mathcal{H}$.

\subsection{General two-time observables}
In the ``Main results'' section of the main text, we presented Result 1 under the assumption that the variation observable $\Delta(H_1,U)$ is diagonalizable in a discrete, countable and non-degenerate basis. Here, in the Appendices, we extend our analysis to cases where $\Delta(H_1,U)$ may exhibit degeneracy and/or possess a continuous spectrum. In this subsection, we outline key details related to these more general scenarios.
 
\subsubsection{Degenerate observables}

We first consider the case where $\Delta(H_1,U)$ is degenerate but still diagonalizable in a discrete, countable basis. For this, we assume the existence of a basis $\ket{\delta_i^{j}(H_1,U)}$ such that 
\begin{equation} \Delta(H_1,U)\ket{\delta_i^{j}(H_1,U)}=\delta_i(H_1,U)\ket{\delta_i^{j}(H_1,U)} \end{equation}
where $j\in{1,2,3,\dots,g(i)}$, meaning that each eigenvalue $\delta_i(H_1,U)$ has a degeneracy $g(i)$. This implies that the basis vectors satisfy the orthonormality condition
\begin{equation} \braket{\delta_{i'}^{j'}(H_1,U)|\delta_i^{j}(H_1,U)}=\delta^{\text{\tiny \textbf{K}}}_{ii'}\delta^{\text{\tiny \textbf{K}}}_{jj'},
\end{equation}
where $\delta^{\text{\tiny \textbf{K}}}_{ii'}$ and $\delta^{\text{\tiny \textbf{K}}}_{jj'}$ are Kronecker's delta, and that the operator
\begin{equation}
    P_i(H_1,U)=\sum_{j}\ket{\delta_i^{j}(H_1,U)}\bra{\delta_i^{j}(H_1,U)}\label{projectorbasis1}
\end{equation}
is a projector onto the subspace $\mathcal{E}_i$ associated with the eigenvalue $\delta_i(H_1,U)$~\cite{Cohen2020}. We can then express $\Delta(H_1,U)$ in terms of these projectors as
\begin{equation}
    \Delta (H_1,U)=\sum_{i}\delta_i(H_1,U) P_{i}(H_1,U).\label{projectorbasis12}
\end{equation}
Using this decomposition, we define the OBS protocol POVM $\mathbbm{M}_{\text{\tiny OBS}}(H_1,U)$ for a given energy operator $H_1$ and evolution $U$ in terms of the operators
\begin{equation}
    M_{\text{\tiny OBS}}(z,H_1,U)=\sum_{i}\delta^{\text{\tiny \textbf{D}}}[z-\delta_i(H_1,U)] P_{i}(H_1,U).
        \label{OBSDEGENERENCES}
\end{equation}
where $\delta^{\text{\tiny \textbf{D}}}$ is the Dirac's delta.
Notice that the case where $\Delta (H_1,U)$ has no degeneracies is simply a special case where $g(i)=1$ and $P_{i}(H_1,U)=\ket{\delta_i(H_1,U)}\bra{\delta_i(H_1,U)}$ for all $i$.
\subsubsection{Continuous observables}
The next step is to describe cases where $\Delta(H_1,U)$ has a continuous spectrum. Although the unitary operator $U$ remains diagonalizable in a discrete, countable basis, it is possible that not only $\Delta(H_1,U)$ but also the energy operators $H_1$ or $U^\dagger H_1 U$ exhibit continuous spectra.

When $\Delta(H_1,U)$ has a continuous spectrum, we introduce the projectors $\ket{w(H_1,U),y}\bra{w(H_1,U),y}dw\,dy$, which satisfy 
\begin{equation} 
\Delta (H_1,U)\ket{w(H_1,U),y}\bra{w(H_1,U),y}dw\,dy=w(H_1,U)\ket{w(H_1,U),y}\bra{w(H_1,U),y}dw\,dy, 
\end{equation}
allowing us to express $\Delta(H_1,U)$ in the spectral decomposition
\begin{equation}
    \Delta (H_1,U)=\int_{-\infty}^{\infty}dw w(H_1,U)\, P_{w}(H_1,U)\label{continuousoperatorspectral}
\end{equation}
where the projectors onto the eigenspaces associated with $w(H_1,U)$ are given by 
\begin{equation}
    P_{w}(H_1,U)=\int_{-\infty}^{\infty}dy\ket{w(H_1,U),y}\bra{w(H_1,U),y}.\label{continuousprojector}
\end{equation}
Here, we assume a general case where each eigenvalue $w(H_1,U)$ has an uncountable degeneracy. However, degeneracy may also be discrete, in which case
\begin{equation} P_{w}(H_1,U)=\sum_{y=1}^{g(w)}\ket{w(H_1,U),y}\bra{w(H_1,U),y}, \end{equation}
where $g(w)$ denotes the degeneracy of each eigenvalue $w(H_1,U)$. If there is no degeneracy, the projectors reduce to
\begin{equation} P_{w}(H_1,U)=\ket{w(H_1,U)}\bra{w(H_1,U)}. \end{equation}
With this continuous formulation of $\Delta(H_1,U)$, we define the OBS POVM for $H_1$ and $U$ as
\begin{equation}
    M_{\text{\tiny OBS}}(z,H_1,U)=\int_{-\infty}^{\infty}dw\,\delta^{\text{\tiny \textbf{D}}}[z-w(H_1,U)]P_{w}(H_1,U)=P_{z}(H_1,U)=\int_{-\infty}^{\infty}dy \ket{z,y}\bra{z,y},\label{obscontinuouspovm}
\end{equation}
where $\ket{z,y}$ is an eigenvector of $\Delta(H_1,U)$ with eigenvalue $z$, satisfying $\Delta(H_1,U)\ket{z,y}=z\ket{z,y}$.

To extend our analysis to cases where $H_1$ or $U^\dagger H_1 U$ have continuous spectra, we need to adapt the CRIN conditions accordingly. While Conditions 1, 3, and 4 remain well-defined for both discrete and continuous spectra, Condition 2 (the Reality Condition) is not generally well-posed in the continuous case.

To illustrate this issue, consider an energy operator $H_1$ that is diagonalized in a continuous and unbounded basis, such as $H_{\text{\tiny e}}$ in the main text. In such a scenario, there may exist an eigenvector $\ket{e}$ satisfying

\begin{equation}
    H_1\ket{e}=e\ket{e} \quad \text{and}\quad  U^\dagger H_1 U\ket{e}=\epsilon_e\ket{e}\label{conditionenergyttimes} 
\end{equation}
but $\ket{e}$ is not \emph{normalizable}. In this case, Condition 2 becomes ill-defined, as no \emph{normalized} state $\rho$ can simultaneously assign a well-defined energy value at both times. This raises the question: how should Condition 2 be adapted to accommodate such continuous scenarios?

To address this question, we first analyze the role of the Reality Condition in shaping the operator $M(z,H_1,U)$ in the discrete case. In this setting, if a normalized state $ \ket{e} $ is an eigenvector of both $H_1$ and $U^\dagger H_1 U$ with respective eigenvalues $e$ and $\epsilon_e$, then it must also be an eigenstate $\ket{\delta_j^k(H_1,U)}$ of $\Delta(H_1,U)$ with eigenvalue $\delta_j(H_1,U) = \epsilon_e - e$. Consequently, the Reality Condition imposes the constraint:
\begin{equation}
    \bra{\delta_j^k(H_1,U)}M(z,H_1,U)\ket{\delta_j^k(H_1,U)}=\delta^{\text{\tiny \textbf{D}}}[z-\delta_j^k(H_1,U)].
\end{equation}
This implies that any $M(z,H_1,U)$ satisfying the Reality Condition must take the form:
\begin{equation}
    M(z,H_1,U)=\delta^{\text{\tiny \textbf{D}}}[z-\delta_j^k(H_1,U)]\ket{\delta_j^k(H_1,U)}\bra{\delta_j^k(H_1,U)}+M_n(z)\label{trialdiscrete}
\end{equation}
where $M_n(z)$ is a Hermitian operator that satisfies $\bra{\delta_j^k(H_1,U)}M_n(z)\ket{\delta_j^k(H_1,U)}=0$. 

Now, let us examine the continuous case. As in the discrete scenario, if Eq. \eqref{conditionenergyttimes} holds, then $\ket{e}$ must also be an eigenstate of $\Delta(H_1,U)$. Given that $\Delta(H_1,U)$ has a continuous and potentially degenerate spectrum, as described in Eq. \eqref{continuousoperatorspectral}, the state $\ket{e}$ must take the form $\ket{w_e, y_e}$, where $w_e = \epsilon_e - e$ represents the eigenvalue of $\Delta(H_1,U)$, and $y_e$ is a real parameter accounting for the degeneracy.

We cannot simply assume for the continuous case that $\bra{w_e,y_e}M(z,H_1,U)\ket{w_e,y_e}=\delta^{\text{\tiny \textbf{D}}}(z-w_e)$, since $\ket{w_e,y_e}$ is not normalized. However, we can adapt the reasoning from Eq. \eqref{trialdiscrete}. Intuitively, given that $\ket{w_e,y_e}$ is an eigenstate of both $H_1$ and $U^\dagger H_1 U$, we can assume:
\begin{equation}
    M(z,H_1,U)=\delta^{\text{\tiny \textbf{D}}}(z-w_e)\ket{w_e,y_e}\bra{w_e,y_e}dwdy+M_{n}'(z)=\ket{z,y_e}\bra{z,y_e}dwdy+M_{n}'(z) \,\,\text{for $z=w_e$}\label{Mtrial}
\end{equation}
where $dw'dy'\bra{w_e,y_e}M_{n}'(z)\ket{w_e,y_e}dwdy=0$. This means that $M(z,H_1,U)$ has its diagonal element $\ket{w_e,y_e}\bra{w_e,y_e}dwdy$ defined as $\delta^{\text{\tiny \textbf{D}}}(z-w_e)$, while the remaining terms, represented by $M_{n}'(z)$, are arbitrary but must have a vanishing diagonal component.

Since the differentials $dw$ and $dy$ are not rigorously defined, expressing the reality condition in the form of Eq. \eqref{Mtrial} lacks precision. To formalize this further, we assume that there exist two open intervals $I_{w_e}$ and $I_{y_e}$ with bounded lengths $d_{w_e}=\int_{I_{w_e}}dw$ and $d_{y_e}=\int_{I_{y_e}}dy$, such that:
\begin{equation}
    M(z,H_1,U)=\int_{I_{y_e}}dy \ket{z,y}\bra{z,y}+M_{nd}(z), \quad \text{for $z=w_e$}\label{Mrealityofficial}
\end{equation}
where $M_{nd}(z)$ is Hermitian and satisfies:
\begin{equation}
    \bra{w(H_1,U),y}M_{nd}(z)\ket{w'(H_1,U),y'}=0,\quad \text{for all $w,w',y,y'$ such that $w,w'\in I_{w_e}$ and $y,y'\in I_{y_e}$.}\label{Mrealityofficial2}
\end{equation}
By defining $M(z,H_1,U)$ as in Eqs. \eqref{Mrealityofficial} and \eqref{Mrealityofficial2}, we ensure that the reality condition is well-defined for the continuous case. To see this, consider an arbitrary normalized state:
\begin{equation} \ket{\psi}=\int_{I_1}dw\int_{I_2}dy\ket{w(H_1,U),y}\psi(w(H_1,U),y), \end{equation}
where $\psi(w(H_1,U),y)=\braket{w(H_1,U),y|\psi}$ and $I_1$ and $I_2$ are real intervals. If $I_1$ and/or $I_2$ do not contain $w_e$ and $y_e$, respectively, then the reality condition implemented via Eq. \eqref{Mrealityofficial} imposes no restriction on $\bra{\psi}M(z,H_1,U)\ket{\psi}$. However, if $I_1\subset I_w$ and $I_2\subset I_y$, meaning they contain $w_e$ and $y_e$, respectively, then:
\begin{equation} \bra{\psi}M(z,H_1,U)\ket{\psi}=\int_{I_2}dy|\psi(z,y)|^2. \end{equation}
which leads to:
\begin{equation} \bra{\psi}\int_{I_1}dzM(z,H_1,U)\ket{\psi}=1. \end{equation}
This implies that whenever the state $\ket{\psi}$ is within the resolution range $I_{w_e}$ and $I_{y_e}$, i.e., sufficiently close to an eigenstate of $H_1$ and $U^{\dagger}H_1U$, its variation of energy is necessarily confined to the interval $I_{w_e}$ around $w_e$.

We thus assume that Eqs. \eqref{Mrealityofficial} and \eqref{Mrealityofficial2} will be taken to define the continuous analogue of the reality condition. Formally we then impose that:
\begin{itemize}
    \item \textbf{Reality condition for the continuous case}:  \emph{For any system $\Omega$, operators $H_1$, evolution $U$, if there is an eigenvector $\ket{w_e,y_e}$ of $\Delta(H_1,U)$, such that $H_1\ket{w_e,y_e}=e\ket{w_e,y_e}$,  $U^{\dagger}H_1 U\ket{w_e,y_e}=\epsilon_e\ket{w_e,y_e}$, and $\Delta(H_1,U)\ket{w_e,y_e}=w_e\ket{w_e,y_e}=(\epsilon_e-e)\ket{w_e,y_e}$ then there are intervals $I_{w_e}$ and $I_{y_e}$ with bounded lengths such that the operators $\{M(z, H_1,U)\}$ satisfy Eqs. \eqref{Mrealityofficial} and \eqref{Mrealityofficial2}}.
\end{itemize}

\subsection{Some formalities regarding POVMs}

Throughout the main text, we expressed the POVM elements as ${M(z,H_1,U)}$, ensuring that they satisfy $\int_{-\infty}^{\infty}M(z,H_1,U)dz=\mathbbm{1}$ and $M(z,H_1,U)\geq 0$. For an initial state $\rho$, the probability density was defined as $\wp(z,H_1,U,\rho)=\Tr[M(z,H_1,U)\rho]$. However, for continuous variables, POVM elements are not strictly defined as $M(z,H_1,U)$. This notation is adopted throughout the main text and these Appendices for simplicity, but here we clarify its formal basis.

When considering energy variation as a continuous variable $z$, POVMs are formally defined as operators $\mathcal{M}(I,H_1,U)$ acting on open intervals $I\subset\mathbbm{R}$ such that:
\begin{equation}
    \Tr [\mathcal{M}(I,H_1,U)\rho]=P(I,H_1,U,\rho) \label{POVMFORMAL}
\end{equation}
where $P(I,H_1,U,\rho)$ gives the probability that the energy variation falls within $I$. POVMs must satisfy $\mathcal{M}((-\infty,\infty),H_1,U)=\mathbbm{1}$ (as energy variation can take any real value) and $\mathcal{M}(I,H_1,U)\geq 0$ for all $I\subset \mathbbm{R}$.

The probability density function $\wp(z,H_1,U,\rho)$ is thus defined as $dP/dz$ (see page 54 of Ref. \cite{Athreya2006}), allowing us to express:
\begin{equation} P(I,H_1,U,\rho)=\int_I dz\, \wp(z,H_1,U,\rho). \end{equation}

A key concept in measure theory \cite{Athreya2006,Fristedt2013} is that of ``almost everywhere in z'' ($z$-a.e.), meaning a statement holds for all $z\in \mathbbm{R}$ except on subsets $I_0$ of measure zero, i.e., open intervals $I_0$ such that
\begin{equation} \int_{I_0} dz=0. \end{equation}
Two probability density functions $\wp(z,H_1,U,\rho)$ and $\wp'(z,H_1,U,\rho)$ are equivalent if $\wp=\wp'$ $z$-a.e. (see pages 35 and 116 of \cite{Fristedt2013}), meaning they yield the same probabilities for any interval $I$:
\begin{equation} \int_I dz\, \wp(z,H_1,U,\rho) =\int_I dz\,\wp'(z,H_1,U,\rho). \end{equation}
For example, if $\wp(z_0,H_1,U,\rho) \neq \wp'(z_0,H_1,U,\rho)$ for a single point $z_0$ but $\wp(z,H_1,U,\rho) = \wp'(z,H_1,U,\rho)$ everywhere else, they still define the same probability distribution since a single point has zero measure.

Similarly, it is common to define a ``density operator'' $M(z,H_1,U)$ for POVM elements:
\begin{equation} \mathcal{M}(I,H_1,U)=\int_I dz\,M(z,H_1,U). \end{equation}
This allows $\Tr[M(z,H_1,U)\rho]$ to represent the probability density. In both these Appendices and the main text, we referred to $M(z,H_1,U)$ as POVM elements in $\mathbbm{M}(H_1,U)$ for simplicity. Formally, they are not exactly elements of a POVM set, and they are defined by the POVM map $\mathcal{M}(I,H_1,U)$ in Eq. \eqref{POVMFORMAL} (see \cite{Heinosaari2012,Busch1996} for rigorous POVM definitions). The crucial point for our proofs is the equivalence relation: if $M'(z,H_1,U)=M(z,H_1,U)$ $z$-a.e., they define the same POVM $\mathcal{M}(I,H_1,U)$ for all interval of measure nonzero $I\subset \mathbbm{R}$, since:
\begin{equation} \mathcal{M}(I,H_1,U)=\int_I dz\,M(z,H_1,U)=\int_I dz\, M'(z,H_1,U). \end{equation}
Thus, to prove that two POVMs $\mathcal{M}(I,H_1,U)$ and $\mathcal{M}'(I,H_1,U)$ are identical for all $I\subset \mathbbm{R}$, it suffices to show:
\begin{equation} M'(z,H_1,U)=M(z,H_1,U) \quad \text{$z$-a.e.} \end{equation}
This formalism will be essential in extending the proof of Result 1 in Appendix B.

\subsection{Condition 1 and 3 imply $[H_1+H_2,U]=0\implies M(z,H_1,U)=M(-z,H_2,U)$}

In this subsection we show that for every POVM $\mathbbm{M}$ satisfying condition 1 and 3, it follows that $M(z,H_1,U)=M(-z,H_2,U)$. To prove this result, we first notice that, by Condition 1, whenever $[H_1 + H_2, U] = 0$, the following relation holds for any initial state $\rho$:
\begin{equation}
    \wp(z,H_1,U,\rho)=\Tr[M(z,H_1,U)\rho]=\Tr[M(-z,H_2,U)\rho]=\wp(-z,H_2,U,\rho)\label{condition1and3assumptions}
\end{equation}
This leads to two possible scenarios:
\begin{itemize}
    \item (i) $M(z,H_1,U)$ and $M(-z,H_2,U)$ act on a Hilbert space $\mathcal{H}$ with countable basis $\{\ket{i}\}$, or
    \item (ii) $M(z,H_1,U)$ and $M(-z,H_2,U)$ act on a Hilbert space $\mathcal{H}$ with continuous basis.
\end{itemize}
We consider both cases:
\begin{itemize}
    \item \textbf{Case (i):} In this scenario, $M(z,H_1,U)$ and $M(-z,H_2,U)$ can only differ if there exist basis elements $\ket{j}$ and $\ket{k}$ in $\{\ket{i}\}$ such that
\begin{equation}
        \bra{j}M(z,H_1,U)\ket{k}\neq \bra{j}M(-z,H_2,U)\ket{k}.\label{conditionjj}
    \end{equation}
    However, this contradicts Eq.~\eqref{condition1and3assumptions} in all possible cases. To see why, consider the following two scenarios:
    \begin{itemize}
        \item \textbf{Case $\mathbf{k= j}$:} If Eq.~\eqref{conditionjj} holds for $k = j$, then by Condition 3, $M(z,H_1,U)$ and $M(-z,H_2,U)$ must be independent of $\rho$, meaning we can freely choose $\rho$ while still satisfying Eq.~\eqref{condition1and3assumptions}. Setting $\rho = \ket{j}\bra{j}$, we obtain:
    \begin{equation}
        \Tr[M(z,H_1,U) \rho] = \bra{j}M(z,H_1,U)\ket{j} \neq \bra{j}M(-z,H_2,U)\ket{j} = \Tr[M(-z,H_2,U) \rho].
    \end{equation}
    This directly contradicts Eq.~\eqref{condition1and3assumptions}.

    \item \textbf{Case $\mathbf{k\neq j}$:} Suppose Eq.~\eqref{conditionjj} holds for some $k \neq j$, so that we define the difference:
    \begin{equation}
        \Phi_{jk} = \bra{j}M(z,H_1,U)\ket{k} - \bra{j}M(-z,H_2,U)\ket{k},
    \end{equation}
    where $|\Phi_{jk}| \neq 0$. By Condition 3, we can consider any state $\rho = \ket{\psi} \bra{\psi}$ without altering $M(z,H_1,U)$ or $M(-z, H_2,U)$. Choosing 
    \begin{equation}
        \ket{\psi} = \frac{a\ket{j} + b\ket{k}}{|a|^2 + |b|^2},
    \end{equation}
    where $a = 1$ and $b = \Phi_{jk}^*$ (i.e., the conjugate of $\Phi_{jk}$), we compute:
    \begin{equation}
        \Tr[M(z,H_1,U)\rho] - \Tr[M(-z,H_2,U)\rho] = \frac{ab^*\Phi_{jk}^*+a^* b \Phi_{jk}}{|a|^2+|b|^2} = \frac{|\Phi_{jk}|^2+  |\Phi_{jk}|^2}{1+|\Phi_{jk}|^2} > 0,
    \end{equation}
    which contradicts Eq.~\eqref{condition1and3assumptions}. 
\end{itemize}
Given that we consider arbitrary $j$ and $k$, it follows that for all $j$ and $k$ it cannot be the case in which Eq.~\eqref{conditionjj} holds. As a result, we conclude:
    \begin{equation}
        M(z,H_1,U) = M(-z,H_2,U).
    \end{equation}
    \item \textbf{Case (ii):} Since the space $\mathcal{H}$ cannot be described by a countable basis, we consider arbitrary projectors $\ket{y}\bra{y}dy$, with $dy$ real, so that any operator $A$ acting on $\mathcal{H}$ can be written as:
    \begin{equation}
        A = \iint_{-\infty}^{\infty} dy\,dy' \ket{y}\bra{y} A \ket{y'}\bra{y'}.
        \label{decompositioncont}
\end{equation}
Assuming that $M(z,H_1,U) \neq M(-z,H_2,U)$, and given that $M(z,H_1,U)$ acts on and has its image in a space dense in $L_2(-\infty,\infty)$ (i.e., the space of normalizable measurable functions defined by the 2-norm that is bounded. See Section 3.2 of \cite{Athreya2006}), then there must exist states $\ket{\psi}$ and $\ket{\phi}$ such that 
\begin{equation}
    \bra{\phi}(M(z,H_1,U) - M(-z,H_2,U))\ket{\psi} \neq 0.\label{initialassumption}
\end{equation}
Thus, there must be intervals $I_1$ and $I_2$ with nonzero measure, i.e., $\int_{I_1} dy \neq 0$ and $\int_{I_2} dy \neq 0$, such that 

\begin{equation}
    \ket{\psi_1} = \frac{\int_{I_1} dy \ket{y} \psi(y)}{\sqrt{\int_{I_1} dy |\psi(y)|^2}}, \quad \ket{\phi_1} =\frac{ \int_{I_2} dy \ket{y} \phi(y)}{\sqrt{\int_{I_2} dy |\phi(y)|^2}},
\end{equation}
and
\begin{equation}
    \bar{\Phi} \coloneqq \bra{\phi_1}DM(z)\ket{\psi_1}=\int_{I_2} dy \int_{I_1} dy' \phi_1^{*}(y) \bra{y}DM(z)\ket{y'} \psi_1(y') \neq 0, \label{auxmzmmz}
\end{equation}
where we define, for simplicity, $DM(z) = M(z,H_1,U) - M(-z,H_2,U)$, as well as $\braket{y|\psi_1} = \psi_1(y)$ and $\braket{y|\phi_1} = \phi_1(y)$, with $|\bar{\Phi}| > 0$. If, instead, Eq.~\eqref{auxmzmmz} does not hold for all nonzero measure intervals $I_1$ and $I_2$, then we obtain 
\begin{equation}
    \bra{\phi} DM(z) \ket{\psi} = \bra{\phi}(M(z,H_1,U) - M(-z,H_2,U))\ket{\psi} = 0,
\end{equation}
which contradicts our initial assumption in Eq. \eqref{initialassumption}. There are only two possible cases: either (ii)a $I_1 \cap I_2 \neq \emptyset$ or (ii)b $I_1 \cap I_2 = \emptyset$. We analyze both scenarios, demonstrating that assuming $\bra{\phi_1}(M(z,H_1,U) - M(-z,H_2,U))\ket{\psi_1} \neq 0$ leads to a contradiction:

\begin{itemize}
    \item \textbf{Case (ii)a:} Suppose first that $I_1 = I_2$. In this case, we consider the state $\rho_\psi = \ket{\psi}\bra{\psi}$, where
    \begin{equation}
        \ket{\psi} = \frac{a\ket{\psi_1} + b\ket{\phi_1}}{|a|^2 + |b|^2}
    \end{equation}
    with $a$ and $b$ being complex numbers. By Condition 3, it follows that
    \begin{equation}
        \bra{\psi_1} DM(z) \ket{\psi_1} = \bra{\phi_1} DM(z) \ket{\phi_1} = 0,
    \end{equation}
    which, considering that both $M(z,H_1,U)$ and $M(-z,H_2,U)$ are Hermitian, results in
    \begin{equation}
        \Tr[DM(z)\rho_\psi] = \frac{a^* b \bar{\Phi}^* + ab^* \bar{\Phi}}{|a|^2 + |b|^2}.
    \end{equation}
    Choosing $a = 1$ and $b = \bar{\Phi}$, we obtain
    \begin{equation}
        \Tr[DM(z)\rho_\psi] = \frac{2|\bar{\Phi}|^2}{1 + |\bar{\Phi}|^2} > 0,
    \end{equation}
    which contradicts Condition 3 and Eq.~\eqref{condition1and3assumptions}. Therefore, Eq.~\eqref{auxmzmmz} cannot hold for $I_1 = I_2$.

    Next, consider the case where $I_1 \cap I_2 = I_3'$. We define the intervals $I_1' = I_1 \setminus I_3'$ and $I_2' = I_2 \setminus I_3'$, where $B \setminus A$ denotes the complement of $A$ with respect to $B$. Thus, there is no intersection between $\{I_1', I_2', I_3'\}$. It must then follow that
    \begin{equation}
        \Phi_{jk} = \int_{I_j'} dy \int_{I_k'} dy' \phi_1^{*}(y) \bra{y}DM(z)\ket{y'} \psi_1(y') \neq 0, \label{auxjkM}
    \end{equation}
     must hold for some $j \neq k$, with $j \in \{2,3\}$ and $k \in \{1,3\}$. If this were not the case, then, considering the argument that $\int_{I_j'} dy \int_{I_j'} dy' \phi_1^{*}(y) \bra{y}DM(z)\ket{y'} \psi_1(y') = 0$, Eq.~\eqref{auxmzmmz} could not hold without the existence of $j \neq k$ such that Eq.~\eqref{auxjkM} is valid. Therefore, for the indices $j \neq k$ for which Eq.~\eqref{auxjkM} holds, we define the state $\rho_{jk} = \ket{\psi_{jk}}\bra{\psi_{jk}}$ such that
    \begin{equation}
        \ket{\psi_{jk}} = \frac{a\ket{\psi_k} + b\ket{\psi_j}}{|a|^2 + |b|^2},
    \end{equation}

    with $a = 1$, $b = \Phi_{jk}^*$, and
    \begin{equation}
        \ket{\psi_j} = \frac{\int_{I_j'} \ket{y} \phi_1(y) dy}{\sqrt{\int_{I_j'} |\phi_1(y)|^2 dy}}, \quad \ket{\psi_k} = \frac{\int_{I_k'} dy \psi_{1}(y)}{\sqrt{\int_{I_k'} |\psi_{1}(y)|^2 dy}}.
    \end{equation}
    By Condition 3, $\bra{\psi_k}DM(z)\ket{\psi_k} = \bra{\psi_j}DM(z)\ket{\psi_j}$, leading to
    \begin{equation}
        \Tr[DM(z)\rho_{jk}] = \frac{|\bar{\Phi}_{jk}|^2 + |\bar{\Phi}_{jk}|^2}{\sqrt{N_j' N_k'}(1 + |\bar{\Phi}_{jk}|^2)} > 0, \label{auxjkM2}
    \end{equation}
    contradicting Eq.~\eqref{condition1and3assumptions}.

    \item \textbf{Case (ii)b:} The proof follows analogously by substituting $I_j' \to I_1$, $I_k' \to I_2$, $\ket{\psi_j} \to \ket{\phi_1}$, and $\ket{\psi_k} \to \ket{\psi_1}$ in Eqs.~\eqref{auxjkM}-\eqref{auxjkM2}.
\end{itemize}
By considering the only possibilities (ii)a and (ii)b for the continuous case (ii), we showed that there can be no intervals $I_1$ and $I_2$ such that Eq. \eqref{auxmzmmz} holds, and, therefore, it cannot be true that $M(z,H_1,U)\neq M(-z,H_2,U)$.
\end{itemize}

Since for the only possible cases (i) and (ii), $M(z,H_1,U)=M(-z,H_2,U)$, then we can only conclude that conditions 1 and 3 together with $[H_1+H_2,U]=0$ imply $M(z,H_1,U)=M(-z,H_2,U)$.
\section{ APPENDIX B: Completion of Result 1 -- degeneracies and continuous basis}

For the proof of Result 1 in the ``Main results'' section of the main text, we demonstrated that the OBS protocol satisfies the CRIN conditions for any energy operator $H_1$ and unitary evolution $U$. This proof holds independently of whether $\Delta(H_1,U)$ has degeneracies or a continuous spectrum. However, to establish that the OBS protocol is the \emph{only} CRIN protocol, we considered the case where $\Delta(H_1,U)$ has a discrete, non-degenerate basis. Here, we extend this proof by employing the formalism introduced in the Preliminaries section, completing the proof of Result 1 for cases where $\Delta(H_1,U)$ exhibits degeneracies and/or has a continuous spectrum.

\subsection{Discrete degeneracies}

To prove that the OBS protocol is the \emph{only} CRIN  protocol when $\Delta(H_1,U)$ have discrete basis with degeneracies, we consider $\Delta (H_1,U)$ as in Eq. \eqref{projectorbasis12}, diagonalized by a basis $\{\ket{\delta_i^{j}(H_1, U)}\}$. Similar to the non-degenerate case, we aim here to demonstrate that for any CRIN protocol $\mathbbm{M}'$, and any $H_1$ and $U$, the POVM $\mathbbm{M}'(H_1, U)$ should coincide with the OBS protocol $\mathbbm{M}_{\text{\tiny OBS}}(H_1, U)$. For that, as we discussed in the Appendix A, is enough for us to prove that  
    \begin{equation}
        M'(z, H_1, U) = M_{\text{\tiny OBS}}(z, H_1, U), \quad \text{z-a.e..}\label{assumptioninitialsm}
    \end{equation}
    Proving this relation establishes that there is no CRIN protocol $\mathbbm{M}'$ such that $\mathbbm{M}'(H_1, U) \neq \mathbbm{M}_{\text{\tiny OBS}}(H_1, U)$, completing the proof.

    To this end, consider an arbitrary CRIN protocol $\mathbbm{M}'$. We first establish that, because $\mathbbm{M}'$ is CRIN, then it follows:  
    \begin{equation}
        \bra{\delta_{i}^{k}(H_1,U)} M'(z,H_1,U)\ket{\delta_{i}^{k}(H_1,U)} = \delta^{\text{\tiny \textbf{D}}}[z-\delta_{i}(H_1,U)]  
        = \bra{\delta_{i}^{k}(H_1,U)}M_{\text{\tiny OBS}}(z, H_1,U)\ket{\delta_{i}^{k}(H_1,U)},
        \label{Mdiagonalsm}
    \end{equation}
    for all elements $\{\ket{\delta_{i}^{k}(H_1,U)}\}$ of the eigenbasis. To prove this, we consider similar arguments as in the non-degenerate case. Consider operators $U'$ and $H_2$ acting on $\mathcal{H}'$ and $\mathcal{H}\otimes \mathcal{H}'$, respectively, along with a vector $\ket{v}\in \mathcal{H}'$, such that:
    \begin{eqnarray}
        &[H_1 \otimes \mathbbm{1}_{\mathcal{H}'} + H_2, U \otimes U'] = 0, \label{commutresult10sm} \\
        &H_2 \ket{\delta_i^{k}(H_1, U), v} = E_i^{k} \ket{\delta_i^{k}(H_1, U), v}, \\
        &(U^\dagger \otimes U^{'\dagger}) H_2 (U \otimes U') \ket{\delta_i(H_1, U), v} = E_i^{'k} \ket{\delta_i^{k}(H_1, U), v}, \label{result12ndrelation0sm}
    \end{eqnarray}
    where $E_i^{'k} = E_i^{k} - \delta_i(H_1, U)$. By Result 2, the existence of such $H_2$, $U'$, and $\ket{v}$ is guaranteed. Since $\mathbbm{M}'$ satisfies all CRIN conditions for all arbitrary energy and evolution operators, Conditions 1, 3 and 4 (Conservation Laws, Independent of the initial state, No-Signaling) imply the following for the evolution $U \otimes U'$ in Eq.~\eqref{commutresult10sm} (see section ``Condition 1 and 3 imply...'' of Appendix A):
    \begin{eqnarray}
        M'(z, H_1 \otimes \mathbbm{1}_{\mathcal{H}'}, U \otimes U') = M'(-z, H_2, U \otimes U'), \label{req2proofsm} \\
        M'(z, H_1 \otimes \mathbbm{1}_{\mathcal{H}'}, U \otimes U') = M'(z, H_1, U) \otimes \mathbbm{1}_{\mathcal{H}'}. \label{localtoglobalsm}
    \end{eqnarray}
    Additionally, using the Reality Condition (Condition 2), we can consider the initial state $\rho_i = \ket{\delta_{i}(H_1,U),v}\bra{\delta_{i}(H_1,U),v}$, which is an eigenstate of $(U^\dagger \otimes U^{'\dagger}) H_2 (U \otimes U')$ and $H_2$ with eigenvalues $E_i^{'k}$ and $E_i^{k}$, respectively. For any arbitrary $z'$, we then deduce:
    \begin{equation}
        \Tr [M'(z', H_2, U \otimes U') \rho_i] = \delta^{\text{\tiny \textbf{D}}}[z' + \delta_{i}(H_1,U)]. \label{Mdiagonal2sm}
    \end{equation}
    Substituting $z' \to -z$, we have:
    \begin{equation}
        \Tr [M'(-z, H_2, U \otimes U') \rho_i] = \delta^{\text{\tiny \textbf{D}}}[z - \delta_{i}(H_1,U)]. \label{Mdiagonal3sm}
    \end{equation}
    Combining Eqs.~\eqref{req2proofsm}, \eqref{localtoglobalsm}, \eqref{Mdiagonal3sm}, and the fact that $\mathbbm{M}'$ is state independent (according to the condition of Independence on state), we obtain Eq.~\eqref{Mdiagonalsm}. This deduction is valid for arbitrary $z$, and $\ket{\delta_{i}(H_1,U)}$, demonstrating Eq.~\eqref{Mdiagonalsm} for any CRIN protocol $M'(z, H_1, U)$.

    Our next goal is to prove that for any element $M'(z, H_{1}, U)$ of an CRIN protocol $\mathbbm{M}'$, it follows that
    \begin{equation}
            \bra{\delta_{i}^k(H_{1},U)}M'(z,H_{1},U)\ket{\delta_{j}^{k'}(H_{1},U)}=0=\bra{\delta_{i}^{k}(H_{1},U)}M_{\text{\tiny OBS}}(z,H_{1},U)\ket{\delta_{j}^{k'}(H_{1},U)}.\quad \text{z-a.e.}\label{Moffdiagonal1sm}
    \end{equation}
    for either $i \neq j$ and any $k$ and $k'$ or for $i=j$ and $k\neq k'$. To show this, first note that $M'(z, H_{1}, U)$ is Hermitian, non-negative, and can be diagonalized as $M'(z, H_{1}, U) = \int_{-\infty}^{\infty} w(z) \ket{w(z)}\bra{w(z)}dw$, with $w(z) \geq 0$. Notice that here we considered the continuous basis $\ket{w(z)}\bra{w(z)}dw$ seeking generality but we could consider the discrete in a similar way as in the main text. As a result, we get:
    \begin{equation}
        \ba{l}
            M'(z, H_{1}, U) = \int dw\sum_{ik} w(z) |\gamma_{ik}(z,w)|^2 \ket{\delta_{i}^{k}(H_1, U)}\bra{\delta_{i}^{k}(H_1, U)} + \\
            + \int dw\sum_{i,k\neq k'} w(z) \gamma_{ik'}^{*}(z,w) \gamma_{ik}(z,w)\ket{\delta_{i}^k(H_1, U)}\bra{\delta_i^{k'}(H_1, U)}+\int dw\sum_{i\neq i',k, k'} w(z) \gamma_{i'k'}^{*}(z,w) \gamma_{ik}(z,w)\ket{\delta_{i}^k(H_1, U)}\bra{\delta_{i'}^{k'}(H_1, U)}. \label{DiagMruimsm}
        \ea
    \end{equation}
    where the integrals are over $(-\infty,\infty)$ and $\gamma_{ik}(z,w) = \braket{\delta_{i}^k(H_1, U) | w(z)}$.
    Two and only two cases arise: either (1) $z \neq \delta_{i}(H_1, U)$ for all $i$ or (2) there exists $j$ such that $z = \delta_{j}(H_1, U)$. Let us consider both cases:
    
    \emph{Case (1)}: In this case, Eq.~\eqref{Mdiagonalsm} implies $\bra{\delta_i^{k}(H_1, U)} M'(z, H_{1}, U) \ket{\delta_i^{k}(H_1, U)} = 0$ for all $i$ and $k$. From Eq.~\eqref{DiagMruimsm} and the fact that $w(z)\geq 0$, leads to $\int dw \,w(z) |\gamma_{ik}(z,w)|^2 = 0$ for all $i$ and $k$, and hence $\int_I dw w(z) \gamma_{ik}(z,w)=\int_I dw w(z) \gamma_{ik}^{*}(z,w) = 0$ for any interval $I$ of measure non-zero. Thus, all the off-diagonal terms of $M'(z, H_{1}, U)$ with respect to the basis $\{\ket{\delta_i^{k}(H_1, U)}\}$ vanish, and Eq.~\eqref{Moffdiagonal1sm} holds for all $i \neq j$ and $k$ and $k'$ and all $i=j$ with $k\neq k'$.

    \emph{Case 2}: In this case, from Eq.~\eqref{Mdiagonalsm}, $\bra{\delta_{i}^k(H_1, U)} M'(z, H_{1}, U) \ket{\delta_{i}^k(H_1, U)} = 0$ for all $j \neq i$. From Eq.~\eqref{DiagMruimsm}, this implies $\int dw \,w(z) |\gamma_{ik}(z,w)|^2 = 0$ for all $i \neq j$ and all $k$. Since $w(z) \geq 0$ and real, we must have  $\int_I dw w(z) \gamma_{ik}(z,w)=\int_I dw w(z) \gamma_{ik}^{*}(z,w) = 0$ for any non-empty interval $I$, for all $i \neq j$ and all $k$. Consequently, Eq.~\eqref{Moffdiagonal1sm} holds for all $i \neq j$ and any $k$ and $k'$. All left is to prove that for $i=j$ and $k\neq k'$, Eq.~\eqref{Moffdiagonal1sm} also holds. In this direction, notice that considering the expansion in Eq. \eqref{DiagMruimsm} and considering Eq.~\eqref{Mdiagonalsm} and the fact that Eq.~\eqref{Moffdiagonal1sm} holds for all $i \neq j$ and $k$ and $k'$, it follows that, for the given $j$ such that $z=\delta_{j}(H_1,U)$ and any given $k\neq k'$,
    \begin{equation}
           \bra{\delta_{j}^{k}(H_1, U)}M'(z, H_{1}, U)\ket{\delta_{j}^{k'}(H_1, U)} =  \Gamma_{jkk'}(z). \label{DiagMruim2sm}
    \end{equation} 
    where 
    \begin{equation}
        \Gamma_{jkk'}(z)=\int dw w(z) \gamma_{jk'}^{*}(z,w) \gamma_{jk}(z,w).
    \end{equation}
    Let us assume that there are two indexes $l\neq l'$ and a non-empty interval $I\subset \mathbbm{R}$ such that $\int_I\Gamma_{ill'}(z)dz=\mathcal{J}\neq 0$, i.e. such that Eq. \eqref{Moffdiagonal1sm} does not hold for $i=j$ and $l\neq l'$. Since $\Gamma_{ill'}(z)=0$ for $i\neq j$, then it is necessary that $\delta_{j}(H_1,U)\in I$. In this sense, consider an initial state $\rho_\psi=\ket{\psi}\bra{\psi}$ where
    \begin{equation}
        \ket{\psi}=\frac{\alpha\ket{\delta_j^{l}(H_1,U)}+\beta\ket{\delta_j^{l'}(H_1,U)}}{\sqrt{|\alpha|^2+|\beta|^2}},
    \end{equation} 
     $\alpha=1$, and $\beta=\mathcal{J}^{*}$. As a result, we get from Eqs. \eqref{Mdiagonalsm} and \eqref{DiagMruim2sm}, and the fact that $\delta_{j}(H_1,U)\in I$, that 
    \begin{equation}
        \int_{I}dz\Tr[ M'(z,H_{1},U)\rho_\psi ]=\frac{|\alpha|^2+|\beta|^2+2\Re[\int_{I}dz\Gamma_{\bar{i}ll'}(z)\alpha^* \beta]}{|\alpha|^2+|\beta|^2}=1+\frac{2|\mathcal{J}|^2}{1+|\mathcal{J}|^2}>1,
    \end{equation}
    This, however, contradicts the fact that $M'(z,H_{1},U)$ are elements of a POVM, since for any interval $I$ and any state $\rho$, the POVM elements $M'(z,H_{1},U)$ should satisfy $\int_{I}dz\Tr[ M(z,H_{X},U)\rho ]\leq 1$. Therefore, it cannot be the case that there is non-empty interval $I$ and indexes $l\neq l'$ such that $\int_I\Gamma_{jll'}(z)dz\neq 0$. In other words, it means that $\Gamma_{jll'}(z)=0$ $z$-a.e. and Eq. \eqref{Moffdiagonal1sm} is thus proved for all $i=j$ and $k\neq k'$.

    Combining Eq.~\eqref{Mdiagonalsm} and Eq.~\eqref{Moffdiagonal1sm} we find that for all $i$ and $j$, 
    \begin{equation}
        \bra{\delta_{i}(H_{1}, U)} M'(z, H_{1}, U) \ket{\delta_{j}(H_{1}, U)} = \bra{\delta_{i}(H_{1}, U)} M_{\text{\tiny OBS}}(z, H_{1}, U) \ket{\delta_{j}(H_{1}, U)} \quad \text{$z$-a.e.}.
    \end{equation} 
    Thus, 
    \begin{equation}
        M'(z, H_{1}, U) = M_{\text{\tiny OBS}}(z, H_{1}, U) \quad \text{$z$-a.e.}
    \end{equation}
    for all $H_{1}$, and $U$. Therefore, we conclude that any CRIN protocol $\mathbbm{M}'$ must be equivalent to the OBS protocol $\mathbbm{M}_{\text{\tiny OBS}}$, completing the proof for the degenerate, discrete case.

\subsection{Continuous, degenerate case}

To prove that the OBS protocol is the \emph{only} CRIN  protocol when $\Delta(H_1,U)$ have continuous, degenerate basis, we consider $\Delta (H_1,U)$ as in Eq. \eqref{continuousoperatorspectral}, diagonalized in terms of the projectors $\{\ket{w(H_1,U),y}\bra{w(H_1,U),y}\}$. Similar to the non-degenerate case, we aim here to demonstrate that for any CRIN protocol $\mathbbm{M}'$, and any $H_1$ and $U$, the POVM $\mathbbm{M}'(H_1, U)$ should coincide with the OBS protocol $\mathbbm{M}_{\text{\tiny OBS}}(H_1, U)$. For that, as we discussed in Appendix A, it is enough for us to prove that  
    \begin{equation}
        M'(z, H_1, U) = M_{\text{\tiny OBS}}(z, H_1, U), \quad \text{z-a.e..}\label{assumptioninitialsmcont}
    \end{equation}
    Proving this relation establishes that there is no CRIN protocol $\mathbbm{M}'$ such that $\mathbbm{M}'(H_1, U) \neq \mathbbm{M}_{\text{\tiny OBS}}(H_1, U)$, completing the proof. To this end, consider an arbitrary CRIN protocol $\mathbbm{M}'$ and its associated POVM $\mathbbm{M}'(H_1,U)$ and elements $M(z,H_1,U)$. Regarding $H_1$ and $U$,  consider operators $U'$ and $H_2$ acting on $\mathcal{H}'$ and $\mathcal{H}\otimes \mathcal{H}'$, respectively, along with a vector $\ket{v}\in \mathcal{H}'$, such that:
    \begin{eqnarray}
        &[H_1 \otimes \mathbbm{1}_{\mathcal{H}'} + H_2, U \otimes U'] = 0, \label{commutresult10smcont} \\
        &H_2 \ket{w(H_1,U),y}\otimes \ket{v} = E_w^{y} \ket{w(H_1,U),y}\otimes \ket{v}, \\
        &(U^\dagger \otimes U^{'\dagger}) H_2 (U \otimes U') \ket{w(H_1,U),y}\otimes \ket{v} = E_w^{'y} \ket{w(H_1,U),y}\otimes \ket{v}, \label{result12ndrelation0smcont}
    \end{eqnarray}
    where $E_w^{'y} = E_w^{y} - w(H_1, U)$. By Result 2, the existence of such $H_2$, $U'$, and $\ket{v}$ is guaranteed. Since $\mathbbm{M}'$ satisfies all CRIN conditions for all arbitrary energy and evolution operators, Conditions 1, 3 and 4 (Conservation Laws, Independent of the initial state, No-Signaling) imply the following for the evolution $U \otimes U'$ in Eq.~\eqref{commutresult10smcont} (see Appendix A):
    \begin{eqnarray}
        M'(z, H_1 \otimes \mathbbm{1}_{\mathcal{H}'}, U \otimes U') = M'(-z, H_2, U \otimes U'), \label{req2proofsmcont} \\
        M'(z, H_1 \otimes \mathbbm{1}_{\mathcal{H}'}, U \otimes U') = M'(z, H_1, U) \otimes \mathbbm{1}_{\mathcal{H}'}. \label{localtoglobalsmcont}
    \end{eqnarray}
    Additionally, since $\ket{w(H_1,U),y}\otimes \ket{v}$ is a non-normalized eigenvector of $H_2$ and $(U^\dagger\otimes U^{'\dagger})H_2(U\otimes U^{'})$, with difference in the eigenvalues as $-w(H_1,U)$, then we can consider the Reality Condition (Condition 2), defined for the continuous case in Appendix A, to deduce that there exist open intervals $I_{w}$ and $I_{y}$ with bounded lengths $d_{w}=\int_{I_{w}}dw$ and $d_{y}=\int_{I_{y}}dy$, such that
\begin{equation}
    M'(z', H_2, U \otimes U')=\int_{I_{y}}dy \ket{z,y,v}\bra{z,y,v}+M_{nd}(z', H_2, U \otimes U'), \quad \text{for $z'=-w(H_1,U)$}
\end{equation}
where $M_{nd}(z', H_2, U \otimes U')$ is hermitian and for $z'=-w(H_1,U)$ satisfies
\begin{equation}
    \bra{w'(H_1,U),y',v}M_{nd}(z', H_2, U \otimes U')\ket{w''(H_1,U),y'',v}=0,
\end{equation}
for all $w,w',y,y'$ such that $w',w''\in I_{w}$ and $y',y''\in I_{y}$.
Since $z'$ here is arbitrary, we can redefine it as $z' \to -z$, to obtain:
    \begin{equation}
        M'(-z, H_2, U \otimes U')=\int_{I_{y}}dy \ket{z,y,v}\bra{z,y,v}+M_{nd}(-z, H_2, U \otimes U') \quad \text{for $z=w(H_1,U)$} \label{Mdiagonal3smcont}
    \end{equation}
    Because $\ket{v}$ is normalized, then we can combine Eqs.~\eqref{req2proofsmcont}, \eqref{localtoglobalsmcont}, \eqref{Mdiagonal3smcont}, to deduce that
    \begin{equation}
        \ba{rl}
        \Tr_{\mathcal{H}'}[M'(-z, H_2, U \otimes U')(\mathbbm{1}_{\mathcal{H}}\otimes\ket{v}\bra{v})]&=\Tr_{\mathcal{H}'}[M'(z, H_1\otimes\mathbbm{1}, U \otimes U')(\mathbbm{1}_{\mathcal{H}}\otimes\ket{v}\bra{v})]\\
        &=\Tr_{\mathcal{H}'}[M'(z, H_1, U )\otimes \mathbbm{1}_{\mathcal{H}'}(\mathbbm{1}_{\mathcal{H}}\otimes\ket{v}\bra{v})]\\
        &=M'(z, H_1, U ).\label{MMcontinuous}
        \ea
    \end{equation}
    Considering Eqs. \eqref{Mdiagonal3smcont} and \eqref{MMcontinuous}, we then have that
    \begin{equation}
        M'(z, H_1, U )=\Tr_{\mathcal{H}'}[M'(-z,H_2, U \otimes U')(\mathbbm{1}_{\mathcal{H}}\otimes\ket{v}\bra{v})]=M_d'(z)+M_{nd}'(z)\label{mdnd0}
    \end{equation}
    where
    \begin{equation}
        M_{d}'(z)=\int_{I_{y}}dy \ket{z,y}\bra{z,y}\quad \text{for $z=w(H_1,U)$}\label{mdl1}
    \end{equation}
    and $M_{nd}'(z)$ is hermitian and satisfies
    \begin{equation}
        \bra{w'(H_1,U),y'}M_{nd}'(z)\ket{w''(H_1,U),y''}=0,\label{mndl1}
    \end{equation}
for all $w, w',y,y'$ such that $w',w''\in I_{w}$ and $y',y''\in I_{y}$.
Notice that the results deduced until now hold for arbitrary $w$ and $y$, so that Eqs. \eqref{mdnd0}-\eqref{mndl1} should be valid for any $w$ and $y$. Therefore, for $M'(z, H_1, U )$ to satisfy Eqs. \eqref{mdnd0}-\eqref{mndl1} for all $w$ and $y$, then it must have the form
    \begin{equation}
         M'(z, H_1, U )=M_{\text{\tiny OBS}}(z,H_1,U)+\bar{M}_{nd}(z)\label{obstomcont}
    \end{equation}
    where $M_{\text{\tiny OBS}}(z,H_1,U)$ was defined in Eq. \eqref{obscontinuouspovm} and $\bar{M}_{nd}(z)$ is hermitian and satisfy
    \begin{equation}
        \bra{w'(H_1,U),y'}\bar{M}_{nd}(z)\ket{w''(H_1,U),y''}=0,
    \end{equation}
for all $w,w',y,y'$ such that $w',w''\in I_{z}$, where $I_z$ is some interval with no-zero measure $d_z=\int_{I_z}dw>0$. To end the deduction, all we need to prove is that
\begin{equation}
    \bar{M}_{nd}(z)=0, \quad \text{$z$-a.e..}
\end{equation}
To do so, let us consider the opposite, so that there is some interval with nonzero measure $I$ such that
\begin{equation}
    \int_I\bar{M}_{nd}(z)\neq0
\end{equation}
In this case, there must be two normalized vectors $\ket{\psi}$ and $\ket{\phi}$ such that
\begin{equation}
    \bra{\phi}\int_I\bar{M}_{nd}(z)\ket{\psi}\neq 0.\label{mndphipsi}
\end{equation}
Without loss of generality, let us assume that $\psi(w,y)=\braket{w(H_1,U),y|\psi}$ and $\psi(w,y)=\braket{w(H_1,U),y|\psi}$ are nonzero only when $w\in I_2$ and $w\in I_3$, respectively, with $I_2$ and $I_3$ open intervals. Therefore,
\begin{equation}
    \ket{\psi}=\int_{I_2}dw\int_{-\infty}^{\infty}dy \ket{w(H_1,U),y}\psi(w,y)
\end{equation}
and
\begin{equation}
    \ket{\phi}=\int_{I_3}dw\int_{-\infty}^{\infty}dy \ket{w(H_1,U),y}\phi(w,y).
\end{equation}
Considering that $I_2,I_3\subset I$, then since 
\begin{equation}
    \bra{w'(H_1,U),y'}\bar{M}_{nd}(z)\ket{w''(H_1,U),y''}=0
\end{equation}
for any $w',w''\in I$, it cannot be the case that Eq.\eqref{mndphipsi} holds. On the other hand, suppose that $I_2\cap I=I_3\cap I=\emptyset$, then since 
\begin{equation}
    \bra{\psi}\int_{I_2}dzM'(z, H_1, U )\ket{\psi}=\bra{\phi}\int_{I_3}dzM'(z, H_1, U )\ket{\phi}=1,
\end{equation}
it follows that 
\begin{equation}
    \bra{\psi}\int_{I}dzM'(z, H_1, U )\ket{\psi}=\bra{\phi}\int_{I}dzM'(z, H_1, U )\ket{\phi}=0.
\end{equation}
To deduce this, notice that $M'(z, H_1, U )$ are elements of POVM and for any state $\ket{\psi'}$ and interval $I'$,  $\bra{\psi'}\int_{I'}dzM'(z, H_1, U )\ket{\psi'}\geq 0$, $\bra{\psi}\int_{-\infty}^{\infty}dzM'(z, H_1, U )\ket{\psi}=1$. Assuming that Eq.\eqref{mndphipsi} holds, we can define
\begin{equation}
    \Phi_{23}=\bra{\phi}\int_Idz\bar{M}_{nd}(z)\ket{\psi}\neq 0
\end{equation}
so that, by assuming the state
\begin{equation}
    \ket{\psi_{23}}=\frac{a\ket{\psi}+b\ket{\phi}}{|a|^2+|b|^2}
\end{equation}
we get
\begin{equation}
    \bra{\psi_{23}}\int_{I}dzM'(z, H_1, U )\ket{\psi_{23}}=\frac{ab^{*}\Phi_{23}+ba^{*}\Phi_{23}^{*}}{|a|^2+|b|^2}
\end{equation}
Considering $b=\Phi_{23}$ and $a=-1$, we get that
\begin{equation}
    \bra{\psi_{23}}\int_{I}dzM'(z, H_1, U )\ket{\psi_{23}}=-\frac{2|\Phi_{23}|^2}{1+|\Phi_{23}|^2}<0,
\end{equation}
which cannot be the case for a POVM. Therefore, for $I_2\cap I=I_3\cap I=\emptyset$, Eq. \eqref{mndphipsi} cannot hold. 

Now, let us consider the case in which $I_2\subset I$ and $I_3\cap I=\emptyset$. Then, since
\begin{equation}
    \bra{\phi}\int_{I_3}dzM'(z, H_1, U )\ket{\phi}=1,
\end{equation}
and, therefore,
\begin{equation}
    \bra{\phi}\int_{I}dzM'(z, H_1, U )\ket{\phi}=0,
\end{equation}
it follows that we can consider the state in the form $\ket{\psi_{23}}$ once again to obtain
\begin{equation}
    \bra{\psi_{23}}\int_{I}dzM'(z, H_1, U )\ket{\psi_{23}}=\frac{|a|^2+ab^{*}\Phi_{23}+ba^{*}\Phi_{23}^{*}}{|a|^2+|b|^2}
\end{equation}
Considering $a=1$ and $b=\Phi_{23}$, we get
\begin{equation}
     \bra{\psi_{23}}\int_{I}dzM'(z, H_1, U )\ket{\psi_{23}}=\frac{1+2|\Phi_{23}|^2}{1+|\Phi_{23}|^2}>1,
\end{equation}
which cannot hold for a POVM. Therefore, for $I_2\subset I$ and $I_3\cap I=\emptyset$, Eq. \eqref{mndphipsi} cannot hold. A similar argument can be given to deduce that for the case in which $I_3\subset I$ and $I_2\cap I=0$, Eq. \eqref{mndphipsi} cannot hold as well. 

With the previous arguments, we can thus conclude that for any vectors $\ket{\psi}$ and $\ket{\phi}$ with wave functions $\braket{w(H_1,U)|\psi}$ and $\braket{w(H_1,U)|\psi}$ with nonzero values for $w\in I_2$ and $w\in I_3$ such that $I_2\cap I=\emptyset$, $I_2\subset I$, $I_3\cap I=\emptyset$ or $I_3\subset I$, Eq. \eqref{mndphipsi} cannot hold. 

Let us consider the general case in which $I_2$ and $I_3$ are not necessarily subsets of $I$ and $I_2\cap I\neq \emptyset$ and $I_3\cap I\neq \emptyset$. In this case, we can separate $I_{2}'=I_2\cap I$ and $I_{3}'=I_3\cap I$ and their complements $I_4'=I_2\setminus I_{2}^{'}$ and $I_5'=I_3\setminus I_{3}^{'}$. Now, assuming Eq. \eqref{mndphipsi} to hold implies that there is a $k\in\{2,4\}$ and $j\in\{3,5\}$, such that
\begin{equation}
    \ket{\psi_{k}}=\frac{1}{N_k^{'}}\int_{I_k'}dw\int_{-\infty}^{\infty}dy \ket{w(H_1,U)}\psi(w,y),\quad N_k^{'}= \int_{I_k'}dw\int_{-\infty}^{\infty}dy|\psi(w,y)|^2
\end{equation}
and
\begin{equation}
    \ket{\phi_{j}}=\frac{1}{N_j^{'}}\int_{I_j'}dw\int_{-\infty}^{\infty}dy \ket{w(H_1,U)}\phi(w,y),\quad N_j^{'}= \int_{I_j'}dw\int_{-\infty}^{\infty}dy|\phi(w,y)|^2
\end{equation}
and
\begin{equation}
    \bra{\phi_{j}}\int_I\bar{M}_{nd}(z)\ket{\psi_{k}}\neq 0\label{mndphipsi2}
\end{equation}
holds. If Eq. \eqref{mndphipsi2} does not hold for all $k\in\{2,4\}$ and $j\in\{3,5\}$, then Eq.\eqref{mndphipsi} cannot hold. However, notice that here either $I_j',I_k'\cap I=\emptyset$ or $I_j',I_k'\subset I$. For these cases, we already concluded that Eq. \eqref{mndphipsi2} cannot hold in general. As a result, for general $I_2$ and $I_3$ and $I$, it cannot be the case that Eq. \eqref{mndphipsi} holds. Consequently, there is no nonzero measure interval $I$ such that 
\begin{equation}
    \int_Idz\bar{M}_{nd}(z)\neq 0.
\end{equation}
As a result, $\bar{M}_{nd}(z)=0$ $z$-a.e.. Considering this result, together with Eq.\eqref{obstomcont}, we conclude that
\begin{equation}
    M'(z, H_1, U )=M_{\text{\tiny OBS}}(z,H_1,U) \quad \text{$z$-a.e..}\label{mlmobs22continuous}
\end{equation}
Given that we suppose arbitrary $H_1$, $U$, and arbitrary CRIN protocol $\mathbbm{M}'$, then Eq.\eqref{mlmobs22continuous} is valid for any $H_1$, $U$ and $\mathbbm{M}'$. Therefore, there cannot be a CRIN protocol $\mathbbm{M}'$ different then a OBS protocol.

\subsection{Final remarks regarding result 1}
By considering $\Delta(H_1,U)$ to have continuous or discrete, degenerate or non-degenerate spectrum, we proved that there is no CRIN protocol different from the OBS protocol. As a result, for any $\Delta(H_1,U)$ this result is valid and therefore, Result 1 is fully proved.

\section{Appendix C: Complete proof of result 2}
We derive in the present section result 2. As in the scope of result 1, we consider a unitary evolution $U$ with countable basis and an energy operator of interest $H_1$ acting on the same Hilbert space $\mathcal{H}$ as $U$. We denote $\{\ket{\delta_j(H_1,U)}\}$ as the basis that diagonalizes $\Delta(H_1,U)$, which can be discrete, continuous, degenerate or not. Considering this notation, we restate result 2 here for completeness:

\begin{result2} 
For any unitary evolution $U$, energy operator $H_1$ acting on a Hilbert space $\mathcal{H}$, and eigenstate $\ket{\delta_i(H_1, U)}$ of $\Delta(H_1, U)$, there exists a unitary $U'$ acting on an auxiliary Hilbert space $\mathcal{H}'$, an additional Hamiltonian $H_2$ acting on $\mathcal{H} \otimes \mathcal{H}'$, and a vector $\ket{v} \in \mathcal{H}'$ such that the following equations are satisfied: 
\begin{eqnarray} 
&[H_1 \otimes \mathbbm{1}_{\mathcal{H}'} + H_2, U \otimes U'] = 0, \label{commutresult10} \\
&H_2 \ket{\delta_i(H_1, U), v} = E_i \ket{\delta_i(H_1, U), v},\label{result122ndrelation0}\\
&(U^\dagger \otimes U^{'\dagger}) H_2 (U \otimes U') \ket{\delta_i(H_1, U), v} = E_i' \ket{\delta_i(H_1, U), v}, \label{result12ndrelation0} 
\end{eqnarray} 
where $\ket{\delta_i(H_1, U), v} = \ket{\delta_i(H_1, U)} \otimes \ket{v}$, and $E_i$ and $E_i' = E_i - \delta_i(H_1, U)$ are real numbers. 
\end{result2}

To deduce this result, we separated the proof in many steps, considering a sequence of Lemmas. We present these Lemmas and then consider its main proof. 

\subsection{Lemmas necessary for the deduction of Result 2}

Throughout our main text, we assumed $U$ to have a countable basis $\{\ket{u_i}\}$, then it follows that 
\begin{equation}
    U\ket{u_i} = u_i\ket{u_i},\quad U^{\dagger}\ket{u_i} = u_i^{-1}\ket{u_i},\quad u_i = \mathrm{e}^{i\theta_i}\label{relU}
\end{equation} 
hold, where $\theta_i \in \mathbbm{R}$. We define the countable set $\Theta\subset\mathbbm{R}$ containing all the arguments $\theta_m$ defining the eigenvalues $e^{i\theta_m}$ of $U$. Taking into account these definitions, we consider the following lemma:

\begin{lemas}
    For any unitary $U$ with countable basis, with the eigenvalues' arguments on $\Theta$, there exist at least one unitary operator $U'$ acting on a Hilbert space $\mathcal{H}'$ such that:
    \begin{enumerate}
        \item $U'$ is diagonalized by a countable basis $\{\ket{u_i'}\}$
        \begin{equation}
            U'\ket{u_{j}'} = u_{j}'\ket{u_{j}'},\quad U^{'\dagger}\ket{u_{j}'} = u_{j}^{'-1}\ket{u_{j}'},\quad u_{j}' = \mathrm{e}^{i\theta_{j}'}.\label{relUlinha}
        \end{equation} 
        \item The countable set $\Theta'$ containing all $\theta_j'$ satisfying Eq. \eqref{relUlinha} is such that for any $\theta_j'\in \Theta'$, any $\theta_k\in \Theta$, and any $\theta_l\in \Theta$, there exist countably infinitely many $\theta_m'\in \Theta'$ such that
        \begin{equation}
            \theta_m'+\theta_l=\theta_j'+\theta_k \mod 2\pi,
        \end{equation}
        where $\mathrm{mod}\, 2\pi$ denotes the sum modulo $2\pi$.
    \end{enumerate}\label{lemaulinhaexiste}
\end{lemas}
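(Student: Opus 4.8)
The plan is to construct $U'$ explicitly as a diagonal operator whose spectrum is controlled by the subgroup of $\mathbbm{R}/2\pi\mathbbm{Z}$ generated by the pairwise differences of the arguments in $\Theta$.

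First I would introduce the set of \emph{argument differences} $D=\{\theta_k-\theta_l:\theta_k,\theta_l\in\Theta\}\subset\mathbbm{R}$, which is countable because $\Theta$ is countable, and which is symmetric in the sense that $d\in D$ if and only if $-d\in D$. Let $q:\mathbbm{R}\to\mathbbm{R}/2\pi\mathbbm{Z}$ denote the canonical projection, and let $G\subset\mathbbm{R}/2\pi\mathbbm{Z}$ be the subgroup generated by $q(D)$, i.e. $G=\{q(d_1+\cdots+d_N):N\in\mathbbm{N},\ d_1,\dots,d_N\in D\}$; this is again a countable set and it contains $q(0)=0$. I would then take $\Theta'$ to be the \emph{full preimage} $\Theta'=q^{-1}(G)=\{x\in\mathbbm{R}:x\bmod 2\pi\in G\}$. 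This set is still countable, being the countable union over $g\in G$ of the fibers $q^{-1}(\{g\})$, each of which is a coset of $2\pi\mathbbm{Z}$ and hence countably infinite; in particular $2\pi\mathbbm{Z}\subseteq\Theta'$, so $\Theta'$ is countably infinite.

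Next I would let $\mathcal{H}'$ be the separable Hilbert space with orthonormal basis $\{\ket{u_\theta'}:\theta\in\Theta'\}$ and define $U'=\sum_{\theta\in\Theta'}\mathrm{e}^{i\theta}\ket{u_\theta'}\bra{u_\theta'}$. Being diagonal with all diagonal entries of unit modulus, $U'$ is unitary, is diagonalized by the countable basis $\{\ket{u_\theta'}\}$, and the set of arguments of its eigenvalues is exactly $\Theta'$; after relabelling $\ket{u_\theta'}\to\ket{u_j'}$, $\theta\to\theta_j'$, this establishes property 1 of the Lemma. For property 2, fix $\theta_j'\in\Theta'$, $\theta_k\in\Theta$ and $\theta_l\in\Theta$. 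Since $q$ is a group homomorphism and $G$ is a group, $q(\theta_j'+\theta_k-\theta_l)=q(\theta_j')+q(\theta_k-\theta_l)\in G$, because $q(\theta_j')\in G$ by construction of $\Theta'$ and $q(\theta_k-\theta_l)\in q(D)\subseteq G$. Hence every real number congruent to $\theta_j'+\theta_k-\theta_l$ modulo $2\pi$ lies in $q^{-1}(G)=\Theta'$, and there are countably infinitely many such reals, namely a fixed representative plus an arbitrary integer multiple of $2\pi$. Each of them is an admissible $\theta_m'$ satisfying $\theta_m'+\theta_l=\theta_j'+\theta_k\bmod 2\pi$, which is exactly property 2.

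I do not expect a genuine obstacle here, since the construction is essentially forced by the closure requirement in property 2; the points that need care are (i) keeping $\Theta'$ countable while ensuring that each eigenvalue $\mathrm{e}^{ig}$, $g\in G$, is attained with infinite multiplicity — this is precisely why one passes to the full preimage $q^{-1}(G)$ rather than to a set of coset representatives inside $[0,2\pi)$ — and (ii) checking that a diagonal operator with a countable family of unit-modulus eigenvalues is a well-defined unitary on the associated separable Hilbert space, which is standard. One should also confirm that the modular arithmetic behaves well, i.e. that $q$ being a homomorphism makes ``addition modulo $2\pi$'' compatible with the group structure of $G$; this is immediate.
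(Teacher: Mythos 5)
Your construction is correct, and it reaches the Lemma by a genuinely different route than the paper. The paper builds $U'$ by an explicit ``permutation algorithm'': it enumerates, level by level, every finite integer combination $\sum_j m_j\theta_j$ of the arguments of $U$ and assigns these values to the basis projectors $\ket{n}\bra{n}$, so that each relevant value recurs at every level $l>l_{\max}$; condition 2 is then met by exact equality of real numbers, with the required infinitude coming from the repetition of the same value at infinitely many basis indices (plus a separate patch, the extension $\Theta_e$, for finite-dimensional $U$). You instead work modulo $2\pi$ from the outset: you form the countable subgroup $G\subset\mathbbm{R}/2\pi\mathbbm{Z}$ generated by the argument differences $q(\theta_k-\theta_l)$ and take $\Theta'=q^{-1}(G)$, so closure under adding $\theta_k-\theta_l$ is automatic from the group structure, and the countably infinite family of admissible $\theta_m'$ comes from the $2\pi\mathbbm{Z}$ fibers of the quotient map rather than from repeated listing. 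This buys a shorter argument with no induction or case split on $\dim\mathcal{H}$, and it makes the phrase ``countably infinitely many $\theta_m'\in\Theta'$'' literally true for distinct set elements (in the paper's construction the infinitude is really about repeated basis indices carrying equal values). What matters downstream, in the proof of Result 2, is only that for each congruence there are infinitely many basis vectors $\ket{u_l'}$ with $\theta_l'+\theta_k=\theta_j'+\theta_m \bmod 2\pi$; both constructions deliver this, yours via eigenvalues of infinite multiplicity indexed by the preimage fibers, the paper's via infinite repetition in the enumeration, so your $U'$ can be substituted without affecting the rest of the derivation.
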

\begin{proof}
    To prove our result, we only need to show that for an arbitrary $U$, there is at least one $U'$ that satisfy the statement of the Lemma. In this sense, for an arbitrary $U$ satisfying Eq. \eqref{relU}, we propose the following unitary:
    \begin{equation}
        U'=\sum_{n=0}^{\infty}\mathrm{e}^{i\theta_{n}'}\ket{n}\bra{n}\label{defulinhafirst}
    \end{equation}
    acting on a Hilbert space $\mathcal{H}$ with countably infinite eigenbasis $\{\ket{n}\}$ (e.g. $\ket{n}$ can be the energy basis of an oscillator). Here, $\theta_n'$ are real variables, each one related with a projector $\ket{n}\bra{n}$. As a direct result of the definition \eqref{defulinhafirst}, the set $\{\theta_n'\}=\Theta'$ of all arguments $\theta_n'$ is countable as well. 
    
    The definition of each $\theta_n'$ is the most crucial so that condition 2 is satisfied. In this direction, we consider a procedure to define $\theta_j'$ and then we prove that by using this procedure, all $\theta_{j}'$ can be defined. For that consider an arbitrary natural $l\geq 1$ and define
    \begin{equation}
        k_l=\left(\sum_{m=1}^{l-1}(2m+1)^m\right)+1.\label{kl}
    \end{equation}
    The $\theta_{p}'$ for $k_l\leq p\leq k_{l+1}-1$ is defined using the following algorithm, which we call the permutation algorithm. 
    
    \vspace{0.1in}
    
    \textbf{Permutation algorithm}
    
    \vspace{0.1in}
    
    Set $k=k_{l}$ and then do the following:
    \begin{itemize}
        \item For $i_1$ from $-l$ to $l$ do:
        \begin{itemize}
            \item For $i_2$ from $-l$ to $l$ do:

            $\ddots$
            \begin{itemize}
                \item For $i_l$ from $-l$ to $l$ do:
                \begin{itemize}
                    \item Define: $\theta_{k}'=\sum_{j=1}^{l}i_{j}\theta_j$, given that $\theta_j\in\Theta$ defined in Eq. \eqref{relU};
                    \item Set $k\to k+1$;
                \end{itemize}
                \item End $i_l$ loop.
            \end{itemize}

            \reflectbox{$\ddots$}
            \item End $i_2$ loop
        \end{itemize}
        \item End $i_1$ loop
    \end{itemize} 
    \vspace{0.1in}
    
    \textbf{End of the Algorithm}
    
    \vspace{0.1in}
    In other words, the set of $\{\theta_p'\}$ defined in this procedure contains all the $k_{l+1}-k_{l}=(2l+1)^l$ combinations of the type
    \begin{equation}
        \sum_{j=1}^{l}m_{j}\theta_j
    \end{equation}
    with integers $m_{j}$ satisfying $|m_{j}|<l$. Now, we show that using the permutation algorithm, we are able to define all $\theta_j'$ in Eq. \eqref{defulinhafirst}. First, notice that by considering $l=1$, then we have, from Eq. \eqref{kl} that $k_1=1$ and, using the permutation algorithm:
    \begin{equation}
        \theta_1'=-\theta_1\,\,\,\,
        \theta_2'=0\,\,\,\,
        \theta_3'=\theta_1\,\,\,\,
    \end{equation}
    Similarly, for $l=2$, it follows that $k_2=4$ and:
    \begin{equation}
        \ba{l}
        \theta_4'=-2\theta_2-2\theta_1,\,\,\,\,
        \theta_5'=-2\theta_2-\theta_1,\,\,\,\,
        \theta_6'=-2\theta_2,\,\,\,\,
        \theta_{7}'=-2\theta_2+\theta_1,\,\,\,\,
        \theta_{8}'=-2\theta_2+2\theta_1,\,\,\,\,
        \theta_{9}'=-\theta_2-2\theta_1,\,\,\,\,\\
        \theta_{10}'=-\theta_2-\theta_1,\,\,\,\,
        \theta_{11}'=-\theta_2,\,\,\,\,
        \theta_{12}'=-\theta_2+\theta_1,\,\,\,\,
        \theta_{13}'=-\theta_2+2\theta_1,\,\,\,\,
        \theta_{14}'=-2\theta_1,\,\,\,\,
        \theta_{15}'=-\theta_1,\,\,\,\,
        \theta_{16}'=0,\,\,\,\,\\
        \theta_{17}'=\theta_1,\,\,\,\,
        \theta_{18}'=2\theta_1,\,\,\,\,
        \theta_{19}'=\theta_2-2\theta_1,\,\,\,\,
        \theta_{20}'=\theta_2-\theta_1,\,\,\,\,
        \theta_{21}'=\theta_2,\,\,\,\,
        \theta_{22}'=\theta_2+\theta_1,\,\,\,\,
        \theta_{23}'=\theta_2+2\theta_1,\,\,\,\,
        \theta_{24}'=2\theta_{2}-2\theta_1,\,\,\,\,\\
        \theta_{25}'=2\theta_{2}-\theta_1,\,\,\,\,
        \theta_{26}'=2\theta_{2},\,\,\,\,
        \theta_{27}'=2\theta_{2}+\theta_1,\,\,\,\,
        \theta_{28}'=2\theta_{2}+2\theta_1,\,\,\,\,
        \ea
    \end{equation}
    For $l=3$, it follows that $k_3=29$ and so on. Now, there are two options for the basis of $U$: either it is countably infinite or finite. Let us first consider that $U$ is diagonalized by an countably infinite basis. Consider that $\theta_{p}'$ is defined for $1\leq p\leq k_{n}-1$, where $n$ is an arbitrary natural. Then, since the permutation algorithm used for the $l=n$, defines $\theta_{p}'$ for all $p$ such that $k_n\leq p\leq k_{n+1}-1$, it follows that $\theta_{p}'$ becomes defined for $1\leq p\leq k_{n+1}-1$. Therefore, by induction, we can repeat the procedure for arbitrary large $l$, and $\theta_{p}'$ is defined for all $p\geq 1$. The set $\Theta'$ containing all $\theta_{p}'$ defined by this induction procedure is thus countably infinite. 
    
    In the case that $U$ is diagonalized by a countably finite basis of dimension $D<\infty$, it follows that $\theta_j$ is defined for $1\leq j\geq D$. We can define an extension $\Theta_e=\Theta\bigcup_{k=1}^\infty \{\theta_{D+k}\}$ of $\Theta$ by adding to it the extra variables $\theta_{D+1}=\theta_1$, $\theta_{D+2}=\theta_2$, $\cdots$, $\theta_{2D}=\theta_D$, $\theta_{2D+1}=\theta_{1}$, $\theta_{2D+2}=\theta_{2}$, and inductively repeat this procedure. In this case, $\theta_j'$ can be defined using the same procedure as in the case in which $U$ is diagonalized by a countably infinite basis, but considering the elements $\Theta_e$ instead of $\Theta$. 

    Now, we prove that all sums of the form
    \begin{equation}
        \sum_{j=1}^{N}m_{j}\theta_j\label{mjthetaj}
    \end{equation}
    with finite natural $N\geq 1$ and finite integers $m_j$, are inside $\Theta'$. To prove this, notice that, because $m_j$ are finite, then there is an integer $s$ such that $|m_j|\leq s$. There are two possibilities: either $s\geq N$ or $s<N$. Let us consider that $s\geq N$. In this case, if we consider the permutation algorithm for $l=s$, then the set all possible combinations
    \begin{equation}
        \sum_{j=1}^{s}i_{j}\theta_j
    \end{equation}
    for $|i_j|\leq s$ will be contained in the set of all $\theta_k'$ defined in the algorithm for $l=s$. Therefore, there is at least one $\theta_k'$ defined in the algorithm for $l=s$ equal to the sum in Eq. \eqref{mjthetaj}: the sum in Eq. \eqref{mjthetaj} is contained in $\Theta'$. Now, consider that $s<N$. If we consider the permutation algorithm for $l=N$, then the set all possible combinations
    \begin{equation}
        \sum_{j=1}^{N}i_{j}\theta_j
    \end{equation}
    for $|i_j|\leq N$ will be contained in the set of all $\theta_k'$ defined in the algorithm for $l=N$. Therefore, there is at least one $\theta_k'$ defined in the algorithm for $l=N$ equal to the sum in Eq. \eqref{mjthetaj}. Since $N$ is arbitrary as well as $m_j$ then every combination as in Eq. \eqref{mjthetaj} will be contained in $\Theta'$. 

    It is also straightforward to show (just by looking at the permutation algorithm) that all $\theta_j'\in\Theta'$ are of the form of Eq. \eqref{mjthetaj}. As a result, all terms inside $\Theta'$ are of the form of Eq. \eqref{mjthetaj} and all the terms of the form of Eq. \eqref{mjthetaj} are inside $\Theta'$.
    
    To prove condition 2, select arbitrary $\theta_j'\in \Theta'$, $\theta_k\in \Theta$, and $\theta_{q}\in \Theta$. Because $\theta_j'$ is inside $ \Theta'$, then it can be written as
    \begin{equation}
        \theta_j'=\sum_{n=1}^{N}m_{n}\theta_n
    \end{equation}
    where $N$, and $|m_{n}|$ are finite. As a result, 
    \begin{equation}
        \theta_j'+\theta_k-\theta_q=\sum_{n=1}^{N}m_{n}\theta_n+\theta_k-\theta_q
    \end{equation}
    Now, there are infinitely many $\theta_{m}$ such that 
    \begin{equation}
       \theta_{m}=\theta_j'+\theta_k-\theta_q.
    \end{equation}
To demonstrate this, let us define $l_{\max} = \max\{ N, k, q\}$. For any finite natural $l > l_{\max}$, it is straightforward to check that, by applying the permutation algorithm for $l$, the element $\theta_j' + \theta_k - \theta_q$ is defined. In other words, for every $l > l_{\max}$, there exists at least one $m$ such that $k_l \leq m \leq k_{l+1}$ and  
\begin{equation}
    \theta_m = \theta_j' + \theta_k - \theta_q.
\end{equation}
Since the number of $l$'s such that $l> l_{\max}$ is countably infinite, then the set of all such  $\theta_m$ is also countably infinite. Given that $j$, $k$, and $n$ are arbitrary, this establishes condition 2 of the lemma.
\end{proof}

We call the conditions 1 and 2 of the Lemma as condition $\mathbf{U}$ from now on. Moreover, from this point on, we consider $\Theta$ and $\Theta'$ and $U$ and $U'$ as the sets and operators defined in Eq. \eqref{relU}, \eqref{relUlinha}, and Lemma \ref{lemaulinhaexiste}. Now, we prove the final two Lemmas needed for the proof of result 1.
\begin{lemas}
    Consider a given $H_1$ and $U$ acting on a Hilbert space $\mathcal{H}$, such that the countable basis $\{\ket{u_i}\}$ of $U$ satisfy Eq. \eqref{relU}. Also, suppose that there is an hermitian operator $H_2$ and an unitary $U'$ acting on a Hilbert space $\mathcal{H}'$, and that the eigenvalues $u_j'=\mathrm{e}^{i\theta_j'}$ of $U'$ satisfy Eq. \eqref{relUlinha}. Moreover, suppose that 
    \begin{equation}
        0=\bra{u_m,u_j'}[H_1\otimes\mathbbm{1}_{\mathcal{H}'}+H_2,U\otimes U']\ket{u_k,u_l'}=(u_ku_l'-u_mu_j')(\bra{u_m}H_1\ket{u_k}\delta_{jl}+\bra{u_m,u_j'}H_2\ket{u_k,u_l'})\label{lemmabasico}
    \end{equation}
    for every $m,j,k,l$ such that
    \begin{equation}
            k=m\,\, \&\,\, l>j\quad \text{or}\quad k>m \,\, \& \,\, \text{any}\,\, l.\label{conditionsindex}
    \end{equation} Then it follows that
    \begin{equation}
    [H_1\otimes\mathbbm{1}_{\mathcal{H}'}+H_2,U\otimes U']=0.
    \end{equation}
    \label{lemacomponents}
\end{lemas}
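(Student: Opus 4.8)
The plan is to prove that the operator $C:=[H_1\otimes\mathbbm{1}_{\mathcal{H}'}+H_2,U\otimes U']$ vanishes by showing that all of its matrix elements in the product basis $\{\ket{u_m,u_j'}\}$ of $U\otimes U'$ are zero. Since $\{\ket{u_i}\}$ and $\{\ket{u_j'}\}$ are countable orthonormal bases of $\mathcal{H}$ and $\mathcal{H}'$ --- the former by hypothesis, the latter by Eq.~\eqref{relUlinha} --- the set $\{\ket{u_m,u_j'}\}$ is a basis of $\mathcal{H}\otimes\mathcal{H}'$, so this suffices (to be read, as elsewhere in the paper, on the dense domain spanned by these vectors when $H_1,H_2$ are unbounded). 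Using $(U\otimes U')\ket{u_k,u_l'}=u_ku_l'\ket{u_k,u_l'}$, the identity $\bra{u_m,u_j'}(U\otimes U')=u_mu_j'\bra{u_m,u_j'}$ following from unitarity, and the decomposition $\bra{u_m,u_j'}(H_1\otimes\mathbbm{1}_{\mathcal{H}'}+H_2)\ket{u_k,u_l'}=\bra{u_m}H_1\ket{u_k}\delta_{jl}+\bra{u_m,u_j'}H_2\ket{u_k,u_l'}=:X_{mj,kl}$, one recovers exactly Eq.~\eqref{lemmabasico}:
\begin{equation}
\bra{u_m,u_j'}C\ket{u_k,u_l'}=(u_ku_l'-u_mu_j')\,X_{mj,kl}.
\end{equation}

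Next I would record two elementary observations. First, on the diagonal $(m,j)=(k,l)$ the prefactor $u_ku_l'-u_mu_j'$ is identically zero, so $\bra{u_m,u_j'}C\ket{u_m,u_j'}=0$ automatically, without invoking the hypothesis. Second, since $H_1,H_2$ are Hermitian and $\delta_{jl}=\delta_{lj}$, the array $X$ obeys $X_{mj,kl}=\overline{X_{kl,mj}}$; as a product of two complex numbers vanishes iff one of its factors vanishes, this yields the ``transpose equivalence''
\begin{equation}
\bra{u_m,u_j'}C\ket{u_k,u_l'}=0\iff\bra{u_k,u_l'}C\ket{u_m,u_j'}=0,
\end{equation}
valid for all index tuples.

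It then remains to treat the off-diagonal elements, $(m,j)\neq(k,l)$. Let $S$ be the set of tuples $(m,j,k,l)$ appearing in Eq.~\eqref{conditionsindex}, namely those with $k=m$ and $l>j$, or with $k>m$ and arbitrary $l$. The combinatorial heart of the argument is that, for every $(m,j)\neq(k,l)$, either $(m,j,k,l)\in S$ or its transpose $(k,l,m,j)\in S$: if $m=k$ then $j\neq l$, so either $l>j$ (first option) or $j>l$ (transpose lies in $S$); if $m\neq k$ then either $k>m$ (first option) or $m>k$ (transpose lies in $S$). In the first case the hypothesis directly gives $\bra{u_m,u_j'}C\ket{u_k,u_l'}=0$; in the second it gives $\bra{u_k,u_l'}C\ket{u_m,u_j'}=0$, which the transpose equivalence converts to $\bra{u_m,u_j'}C\ket{u_k,u_l'}=0$. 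Together with the diagonal case, every matrix element of $C$ vanishes, hence $C=0$, which is the claim.

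I do not expect a genuine obstacle here: Eq.~\eqref{lemmabasico} already supplies the matrix-element formula, and what remains is index bookkeeping together with Hermiticity of $H_1$ and $H_2$. The one delicate point is the ``degenerate'' subcase in which $(m,j)\neq(k,l)$ yet $u_mu_j'=u_ku_l'$ (possible when $U\otimes U'$ has repeated eigenvalues): there one cannot --- and need not --- conclude $X_{mj,kl}=0$, because the commutator element already vanishes through the prefactor alone. For that reason the argument is best run in terms of the products $(u_ku_l'-u_mu_j')X_{mj,kl}$, and never in terms of the array $X$ by itself.
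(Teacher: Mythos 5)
Your proposal is correct and follows essentially the same route as the paper's proof: expand the commutator in the eigenbasis of $U\otimes U'$, note the diagonal elements vanish through the prefactor, and use Hermiticity of $H_1$ and $H_2$ (your ``transpose equivalence'', the paper's cases (i)/(ii)) to transfer the vanishing from the tuples in Eq.~\eqref{conditionsindex} to their transposes, covering all remaining off-diagonal index combinations. Your explicit remark that the argument must be run on the product $(u_ku_l'-u_mu_j')X_{mj,kl}$ rather than on $X$ alone matches the paper's handling of the degenerate-eigenvalue case.
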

\begin{proof}
    To prove this result, all we need is to show that for $m,j,k,l$ \emph{not} satisfying Eq. \eqref{conditionsindex}, it can still be true that $0=\bra{u_m,u_j'}[H_1\otimes\mathbbm{1}_{\mathcal{H}'}+H_2,U\otimes U']\ket{u_k,u_l'}$. For that, we first notice that if Eq. \eqref{lemmabasico} is satisfied for $m,j,k,l$ satisfying Eq. \eqref{conditionsindex}, then it follows that 
    \begin{equation}
        (u_ku_l'-u_mu_j')(\bra{u_m}H_1\ket{u_k}\delta_{jl}+\bra{u_m,u_j'}H_2\ket{u_k,u_l'})=0
    \end{equation}
    which means that either (i) $u_ku_l'-u_mu_j'=\mathrm{e}^{i\theta_k+\theta_l'}-\mathrm{e}^{i\theta_m+\theta_j'}=0$, or that (ii) $\bra{u_m}H_1\ket{u_k}\delta_{jl}+\bra{u_m,u_j'}H_2\ket{u_k,u_l'}=0$, or both (i) and (ii) are simultaneously satisfied. If (i) is satisfied, then we can deduce that
    \begin{equation}
        \bra{u_k,u_l'}[H_1\otimes\mathbbm{1}_{\mathcal{H}'}+H_2,U\otimes U']\ket{u_m,u_j'}=(u_mu_j'-u_ku_l')(\bra{u_k}H_1\ket{u_m}\delta_{jl}+\bra{u_k,u_l'}H_2\ket{u_m,u_j'})=0.\label{lemmabasico2}
    \end{equation}
    The same holds when (i) and (ii) are valid simultaneously. If, on the other hand, (i) is not valid, and (ii) holds, then $\bra{u_m}H_1\ket{u_k}\delta_{jl}=-\bra{u_m,u_j'}H_2\ket{u_k,u_l'}$ and, given that $H_1$ and $H_2$ are hermitian, it follows that $\bra{u_k}H_1\ket{u_m}^*\delta_{jl}=-\bra{u_k,u_l'}H_2\ket{u_m,u_j'}^*$, which also implies in $\bra{u_k}H_1\ket{u_m}\delta_{jl}=-\bra{u_k,u_l'}H_2\ket{u_m,u_j'}$. As a result, it implies once again Eq. \eqref{lemmabasico2}. Therefore, for any $m,j,k,l$ such that $\bra{u_m,u_j'}[H_1\otimes\mathbbm{1}_{\mathcal{H}'}+H_2,U\otimes U']\ket{u_k,u_l'}=0$ it follows that $\bra{u_k,u_l'}[H_1\otimes\mathbbm{1}_{\mathcal{H}'}+H_2,U\otimes U']\ket{u_m,u_j'}=0$. As a result, it follows that $\bra{u_m,u_j'}[H_1\otimes\mathbbm{1}_{\mathcal{H}'}+H_2,U\otimes U']\ket{u_k,u_l'}=0$ for any $m,j,k,l$ satisfying 
    \begin{equation}
        \ba{lll}
            k=m & \& & j>l\\
            k=m & \& & j<l\\
            k<m & \& & \text{any}\,\, l\\
            k>m & \& & \text{any}\,\, l.
        \ea\label{conditionsindex2}
    \end{equation}
    The only case not considered to state that $\bra{u_m,u_j'}[H_1\otimes\mathbbm{1}_{\mathcal{H}'}+H_2,U\otimes U']\ket{u_k,u_l'}=0$ is the case where $k=m$ and $j=l$. In this case, it is easy to show that $(u_m u_j'-u_k u_l')=(u_m u_j'-u_m u_j')=0$. We can thus concluded that $\bra{u_k,u_l'}[H_1\otimes\mathbbm{1}_{\mathcal{H}'}+H_2,U\otimes U']\ket{u_m,u_j'}=0$ for every $m,j,k,l$, which results in $[H_1\otimes\mathbbm{1}_{\mathcal{H}'}+H_2,U\otimes U']=0$. 
\end{proof}
\begin{lemas}
    Consider an arbitrary discrete and countable basis $\{\ket{n}\}$ of an arbitrary discrete Hilbert space $\mathcal{H}$. We can always define a normalized vector $\ket{v}\in \mathcal{H}$ such that $\braket{v|v}=1$ and the real and imaginary parts are non-null, $\Re(\braket{n|v})\neq 0$ $\Im(\braket{n|v})\neq 0$, for any of its components $\braket{n|v}$.
    \label{lemmav}
\end{lemas}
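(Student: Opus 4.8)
\emph{Proof proposal.} The plan is to construct $\ket{v}$ explicitly by choosing its components $c_n\coloneqq\braket{n|v}$ to be square-summable with unit norm and all lying on the ray $\{r\mathrm{e}^{i\pi/4}:r>0\}$, so that every $c_n$ automatically has strictly positive real and imaginary parts. Since $\mathcal{H}$ is, by hypothesis, a discrete Hilbert space with $\{\ket{n}\}$ as a (countable) orthonormal basis, any sequence $\{c_n\}$ with $\sum_n|c_n|^2=1$ determines a legitimate normalized vector $\ket{v}=\sum_n c_n\ket{n}\in\mathcal{H}$; hence it suffices to exhibit one such sequence with the non-vanishing property, and the claim follows.

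I would split according to the cardinality of $\{\ket{n}\}$. If the basis is finite with $D=\dim\mathcal{H}$, set $c_n=\tfrac{1}{\sqrt{2D}}(1+i)=\tfrac{1}{\sqrt{D}}\,\mathrm{e}^{i\pi/4}$ for each of the $D$ values of $n$; then $|c_n|^2=1/D$, so $\sum_n|c_n|^2=1$, while $\Re(c_n)=\Im(c_n)=1/\sqrt{2D}\neq 0$. If the basis is countably infinite, relabel it as $\{\ket{n}\}_{n=0}^{\infty}$, fix any $r\in(0,1)$, and set $c_n=\sqrt{1-r^2}\,r^{n}\,\mathrm{e}^{i\pi/4}$; the geometric series gives $\sum_{n=0}^{\infty}|c_n|^2=(1-r^2)\sum_{n=0}^{\infty}r^{2n}=1$, and $\Re(c_n)=\Im(c_n)=\sqrt{(1-r^2)/2}\;r^{n}>0$ for every $n$, so all components have non-null real and imaginary parts. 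In either case $\ket{v}=\sum_n c_n\ket{n}$ satisfies $\braket{v|v}=\sum_n|c_n|^2=1$ and has the desired property.

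There is essentially no obstacle here; the only points that merit a word of care are (i) that $\ket{v}$ genuinely lies in $\mathcal{H}$ in the infinite-dimensional case, which is guaranteed by square-summability of $\{c_n\}$ together with the completeness of $\mathcal{H}$, and (ii) that a single common phase $\mathrm{e}^{i\pi/4}$ — rather than a phase that is an integer multiple of $\pi/2$ — suffices, since all the moduli $|c_n|$ chosen above are strictly positive, so multiplying them by $\mathrm{e}^{i\pi/4}$ keeps both $\Re(c_n)$ and $\Im(c_n)$ away from zero. This completes the proof.
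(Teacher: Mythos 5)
Your proof is correct, and it follows the same basic strategy as the paper's: exhibit an explicit square-summable sequence of components whose phases keep both real and imaginary parts away from zero. The difference is in the concrete choice. The paper takes coherent-state-like amplitudes, $\braket{m|v}=\mathrm{e}^{i\eta_m}\mathrm{e}^{-|\alpha|^2/2}\alpha^m/\sqrt{m!}$ with $\alpha\neq 0$, and then needs a case-by-case assignment of the phases $\eta_m$ (setting $\eta_m=\pi/4$ only when $\Re(\alpha^m)$ or $\Im(\alpha^m)$ vanishes), because for complex $\alpha$ the powers $\alpha^m$ can land on the real or imaginary axis. Your geometric choice $c_n=\sqrt{1-r^2}\,r^n\mathrm{e}^{i\pi/4}$ sidesteps that entirely: the moduli are strictly positive reals, so a single uniform phase $\mathrm{e}^{i\pi/4}$ suffices and no case analysis is needed. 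You also treat the finite-dimensional case explicitly (uniform components), where the paper simply asserts it can be checked. Both arguments establish the lemma; yours is marginally more economical, while the paper's coherent-state form ties in with the physical setting it has in mind.
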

\begin{proof}
    For a finite dimensional case, the result of the Lemma can be straightforwardly checked. Let us thus consider the infinite scenario. Since the basis is countable, then we can always assume a natural number $m$ to identify the $m$-th element $\ket{m}$ of the basis. Then, considering that $m\geq 0$, we can define $\braket{m|v}$ as 
    \begin{equation}
        \braket{m|v}=\mathrm{e}^{i\eta_{m}}\mathrm{e}^{-|\alpha|^2/2}\frac{(\alpha)^{m}}{\sqrt{(m)!}},
    \end{equation}
    where $\alpha\neq 0$ is a complex number and $\eta_{n}$ is real. Notice that this state has the exactly same form as the components of the coherent state in the quantum oscillator for a complex $\alpha$, up to local phases $\eta_{m}$. Moreover, $\ket{v}$ is normalized, since
    \begin{equation}
        \braket{v|v}=\bra{v}\left(\sum_{m=0}^{\infty}\ket{m}\bra{m}\right)\ket{v}=\mathrm{e}^{-|\alpha|^2}\sum_{m=0}^{\infty}\frac{|\alpha|^{2m}}{m!}=1.
    \end{equation}
    Now, to show that $\Re(\braket{m|v}),\Im(\braket{m|v})\neq 0$, first, notice that $|\braket{m|v}|\neq 0$ for all $m$, given that $\alpha\neq 0$. Now, we can define $\eta_{m}$ in the following form 
    \begin{equation}
        \ba{ll}
        \eta_{m}=0&\text{if}\,\, \Re\left(\alpha^{m}\right)\neq 0\,\, \&\,\, \Im\left(\alpha^{m}\right)\neq 0\\
        \eta_{m}=\pi/4 &\text{else}.
        \ea\label{etamdef}
    \end{equation}
    Since $|\braket{m|v}|\neq 0$ for any natural $m$, then Eq. \eqref{etamdef} ensure that $\Re(\braket{m|v}),\Im(\braket{m|v})\neq 0$.
\end{proof}
\subsection{The main proof of Result 2}
Given all the Lemmas in the previous subsection, we are now in position to prove the result 2.


\begin{proof}
     First, we consider a unitary $U'$ that satisfies the $\mathbf{U}$ condition of Lemma \ref{lemaulinhaexiste}. Therefore, it acts on a space $\mathcal{H}'$ that satisfy Eq. \eqref{relUlinha}, whose set $\Theta'=\{\theta_{j}'\}$ of the arguments $\theta_{j}'$ of its eigenvalues $\mathrm{e}^{i\theta_{j}'}$. Because of Lemma \ref{lemaulinhaexiste}, we know that such $U'$ can always be defined. We thus consider the components $[H_1\otimes\mathbbm{1}_{\mathcal{H}'}+H_2,U\otimes U']$, in the basis $\ket{u_i,u_k'}$ that diagonalizes $U\otimes U'$ as follows: 
    \begin{equation}
        \bra{u_m,u_j}[H_1\otimes\mathbbm{1}_{\mathcal{H}'}+H_2,U\otimes U']\ket{u_k,u_l}=(\mathrm{e}^{i(\theta_k+\theta_l')}-\mathrm{e}^{i(\theta_m+\theta_j')})(\bra{u_m}H_1\ket{u_k}\delta_{jl}+\bra{u_m,u_j'}H_2\ket{u_k,u_l'})
    \end{equation}
    To prove our result, we want to define $H_2$ so as to make this expression to have a null value. In this sense, we recall that, because $U'$ satisfy the $\mathbf{U}$ condition, then for any $\theta_j'\in \Theta'$ and for any $m$ and $k$, there exist infinitely many $l$ such that $\theta_l'\in \Theta'$ and
    \begin{equation}
        \theta_m+ \theta_j'=\theta_k+ \theta_l' \mod 2\pi.\label{condicao2noresult2}
    \end{equation} 
    As a result, it follows that for any $m$, $k$ and $j$, there exist infinitely many $l$ such that $(\mathrm{e}^{i(\theta_k+\theta_l')}-\mathrm{e}^{i(\theta_m+\theta_j')})=0$. Similarly, for any $m$, $k$ and $j$, there exist infinitely many $l$ such that $(\mathrm{e}^{i(\theta_k+\theta_j')}-\mathrm{e}^{i(\theta_m+\theta_l')})=0$. For the cases in which $(\mathrm{e}^{i(\theta_k+\theta_l')}-\mathrm{e}^{i(\theta_m+\theta_j')})=0$, we can consider any arbitrary value for $\bra{u_m,u_j'}H_2\ket{u_k,u_l'}$ and still the following relation holds:
    \begin{equation}
        \bra{u_m,u_j'}[H_1\otimes\mathbbm{1}_{\mathcal{H}'}+H_2,U\otimes U']\ket{u_k,u_l'}=\left(\mathrm{e}^{i(\theta_k+\theta_l')}-\mathrm{e}^{i(\theta_m+\theta_j')}\right)(\bra{u_m}H_1\ket{u_k}\delta_{jl}+\bra{u_m,u_j'}H_2\ket{u_k,u_l'})=0.\label{resultadoparcial1}
    \end{equation}
    As a consequence, we can define the free complex variables $h_{mjkl}$ such that $h_{klmj}=h_{mjkl}^*$, so that the components of $H_2$ are defined as
    \begin{equation}
        \ba{lll}
        \bra{u_m,u_j'}H_2\ket{u_k,u_l'}=h_{mjkl}&\text{if}&(\mathrm{e}^{i(\theta_k+\theta_l')}-\mathrm{e}^{i(\theta_m+\theta_j')})=0\\
        \bra{u_m,u_j'}H_2\ket{u_k,u_l'}=-\bra{u_m}H_1\ket{u_k}\delta_{jl}&\text{if}&(\mathrm{e}^{i(\theta_k+\theta_l')}-\mathrm{e}^{i(\theta_m+\theta_j')})\neq 0
        \ea
        \label{componentsHY} 
    \end{equation}
    for the indexes
    \begin{equation}
        k=m\,\, \&\,\, l\geq j\quad \text{or}\quad k>m \,\, \& \,\, \text{any}\,\, l,
    \end{equation}
    and, by taking into account that $H_1$ is hermitian so that $\bra{u_m}H_1\ket{u_k}=\bra{u_k}H_1\ket{u_m}^*$,
    \begin{equation}
        \bra{u_k,u_l'}H_2\ket{u_m,u_j'}=\bra{u_m,u_j'}H_2\ket{u_k,u_l'}^*\label{conditionsindex4}
    \end{equation}
    for the indexes
    \begin{equation}
        k=m\,\, \&\,\, l< j\quad \text{or}\quad k<m \,\, \& \,\, \text{any}\,\, l.
    \end{equation}
    Considering Eq. \eqref{componentsHY}, we completely characterize the components $\bra{u_m,u_j'}H_2\ket{u_k,u_l'}$ of $H_2$ such as to ensure that $H_2$ is hermitian and satisfy 
    \begin{equation}
         0= \bra{u_m,u_j'}[H_1\otimes\mathbbm{1}_{\mathcal{H}'}+H_2,U\otimes U']\ket{u_k,u_l'}=0 
    \end{equation}
    for the indexes $k=m$ $\&$ $l\geq j$ or $k>m$ $\&$ any $l$. Considering Lemma \ref{lemacomponents}, this means that by defining $H_2$ as in Eqs. \eqref{componentsHY} and \eqref{conditionsindex4}, it follows that
    \begin{equation}
        [H_1\otimes\mathbbm{1}_{\mathcal{H}'}+H_2,U\otimes U']=0,\label{commutresult10smproof}
    \end{equation}
    so as to satisfy the first equation of result 2, \emph{viz}. Eq. \eqref{commutresult10}. We assume from this point on that $H_2$ is  defined by Eqs. \eqref{componentsHY} and \eqref{conditionsindex4}.
    
    Consider a state $\ket{v}\in\mathcal{H}'$ whose only requirement is that its components $\braket{u_j'|v}$ in the basis $\ket{u_j'}$ satisfy $\Re(\braket{u_j'|v})\neq 0$ and $\Im(\braket{u_j'|v})\neq 0$, i.e. the real and imaginary part of $\braket{u_j'|v}$ are non-null. From Lemma \ref{lemmav} and the fact that the basis $\{\ket{u_j'}\}$ is countable, we can always define such $\ket{v}$. Considering this $\ket{v}$, we can thus analyze how to define the free variables $h_{klmj}=h_{mjkl}^*$ in \eqref{componentsHY} so that the eigenvalue/eigenvector equation 
    \begin{equation}
           H_2\ket{\delta_i(H_1,U)}\otimes\ket{v}=E_i\ket{\delta_i(H_1,U)}\otimes\ket{v}\equiv E_i\ket{\delta_i(H_1,U),v}\label{eigenvvHy}
    \end{equation} 
    can be satisfied. For that, let us consider the components of \eqref{eigenvvHy} with respect to the specific element of the basis $\ket{u_m,u_j'}$:
    \begin{equation}
        \bra{u_m,u_j'}E_i\ket{\delta_i(H_1,U),v}=E_{i}\alpha_{mi}\beta_j=\bra{u_m,u_j'}H_2\ket{\delta_i(H_1,U),v}=\sum_{k,l}\bra{u_m,u_j'}H_2\ket{u_k,u_l'}\alpha_{ki}\beta_l\label{eigenvalueeq2}
    \end{equation}
    where we named $\alpha_{mi}=\braket{u_m|\delta_i(H_1,U)}$ and $\beta_j=\braket{u_j'|v}$. Consider the set $\mathbbm{N}_{mj}$ of all duple $\{k,l\}\in \mathbbm{N}_{mj}$ that, for the given component $\ket{u_m,u_j'}$ satisfy $(\mathrm{e}^{i(\theta_k+\theta_l')}-\mathrm{e}^{i(\theta_m+\theta_j')})= 0$. We can thus rewrite Eq. \eqref{eigenvalueeq2}, considering Eqs. \eqref{componentsHY} and \eqref{conditionsindex4}, as 
    \begin{equation}
        E_{i}\alpha_{mi}\beta_j=-\sum_{\{k,l\}\notin \mathbbm{N}_{mj}}\bra{u_m}H_1\ket{u_k}\delta^{\text{\tiny \textbf{K}}}_{jl}\alpha_{ki}\beta_l+\sum_{\{k,l\}\in \mathbbm{N}_{mj}}h_{mjkl}\alpha_{ki}\beta_l.
        \label{eigenvalueeq3}
    \end{equation}
    Therefore, if we expect that $H_2$ satisfy Eq. \eqref{eigenvvHy}, then we need to properly define the free variables $h_{mjkl}$ so that Eq. \eqref{eigenvalueeq3} holds. To do so, we first notice that there are only two possible alternatives: either $\alpha_{mi}=0$ and $\alpha_{mi}\neq 0$. We treat each of them separately. 
    
    In the case in which $\alpha_{mi}=0$, for Eq. \eqref{eigenvalueeq3} for $m$ and any arbitrary $j$, it is necessary that
    \begin{equation}
        \sum_{\{k,l\}\in \mathbbm{N}_{mj}}h_{mjkl}\alpha_{ki}\beta_l=\sum_{\{k,l\}\notin \mathbbm{N}_{mj}}\bra{u_m}H_1\ket{u_k}\delta^{\text{\tiny \textbf{K}}}_{jl}\alpha_{ki}\beta_l
        \label{eigenvalueeq4}
    \end{equation}
    holds. If $\ket{\delta_i(H_1,U)}$ is a null vector so that $\alpha_{ki}=0$ for all $k\neq m$, then Eq. \eqref{eigenvvHy} holds directly for any value of $h_{mjkl}$. If, on the other hand, $\ket{\delta_i(H_1,U)}$ is not a null vector, then there must be at least one $k\neq m$ such that $\alpha_{ki}\neq 0$. Moreover, since $\theta_j'\in \Theta'$, then, from \eqref{condicao2noresult2}, for any $m$, $k$ and $j$ we can find infinitely many $l$ such that $\theta_l'\in \Theta'$ and $\theta_k+ \theta_l'=\theta_m+\theta_j'\mod 2\pi$, i.e. such that $\{k,l\}\in \mathbbm{N}_{mj}$. Therefore, there must exist a $\{k,l\}\in \mathbbm{N}_{mj}$ such that $\alpha_{ki}\neq 0$. Furthermore, since, by the definition of $\ket{v}$, $\beta_l\neq 0$ for all $l$, then for the specific $m$ and $j$ in Eq. \eqref{eigenvalueeq4} such that $\alpha_{mi}=0$, there exist $\{k,l\}\in \mathbbm{N}_{mj}$ and $k\neq m$, such that  $\alpha_{ki}\beta_l\neq 0$. As a result, the terms $h_{mjkl}$, being complex variables, can \emph{always} be tuned to satisfy the Eq. \eqref{eigenvalueeq4} for all $m$ and $j$ such that $\alpha_{mi}=0$. To show this, all we need is to show that there is at least one way of defining $h_{mjkl}$ that make Eq. \eqref{eigenvalueeq4} to hold. In this sense, consider the set $\mathbbm{N}_{mj}^{\neq}$ of all the elements $\{k,l\}\in \mathbbm{N}_{mj}$ such that $\alpha_{ki}\neq 0$. Moreover, consider the indexes $\bar{k}$ and $\bar{l}$ as the smaller indexes in which $\bar{k}+\bar{l}=\min_{\{k,l\}\in \mathbbm{N}_{mj}^{\neq}}k+l$. If there exist more then two duples $\{\bar{k}',\bar{l}'\}$ and $\{\bar{k}'',\bar{l}''\}$ such that $\bar{k}'+\bar{l}'=\bar{k}''+\bar{l}''=\min_{\{k,l\}\in \mathbbm{N}_{mj}^{\neq}}k+l$, then we denote $\{\bar{k},\bar{l}\}$ as the duple with smallest from $\bar{k}'$ and $\bar{k}''$. Let us define $h_{mjkl}$ for all $m$ and $j$ such that $\alpha_{mi}=0$ as follows:
    \begin{equation}
        \ba{rll}
        h_{mj\bar{k}\bar{l}}&=h_{\bar{k}\bar{l}mj}^*=\frac{1}{\alpha_{\bar{k}i}\beta_{\bar{l}}} \sum_{\{k,l\}\notin \mathbbm{N}_{mj}}\bra{u_m}H_1\ket{u_k}\delta^{\text{\tiny \textbf{K}}}_{jl}\alpha_{ki}\beta_l &\text{for $\{\bar{k},\bar{l}\}$}\\
        h_{klmj}^*&=0 &\text{for all $\{k,l\}\in \mathbbm{N}_{mj}$, such that $\{k,l\}\neq \{\bar{k},\bar{l}\}$.}
        \ea\label{hmi0}
    \end{equation}  
    Defining $h_{mjkl}$ in this way, Eq. \eqref{eigenvalueeq4} will hold for any $m$ and $j$ such that $\alpha_{mi}=0$. Importantly, notice that these definitions will not alter the other components of Eq. \eqref{eigenvvHy}. To check that, we consider any other components $\ket{u_q,u_r'}$ of Eq. \eqref{eigenvvHy}, such that $\{q,r\}\neq \{m,j\}$:
    \begin{equation}
        \bra{u_q,u_r'}H_2\ket{\delta_j(H_1,U)}\otimes\ket{v}=E_{i}\alpha_{qi}\beta_r=-\sum_{sn\notin \mathbbm{N}_{qr}}\bra{u_q}H_1\ket{u_s}\delta^{\text{\tiny \textbf{K}}}_{rn}\alpha_{si}\beta_n+\sum_{sn\in\mathbbm{N}_{qr}}h_{qrsn}\alpha_{si}\beta_n
        \label{eigenvalueeq5}
    \end{equation}
    The only way that Eq. \eqref{eigenvalueeq5} can be dependent on the terms $h_{mjkl}$ for the specific $m$, $j$ of Eq. \eqref{eigenvalueeq4} or their complex conjugate $h_{mjkl}^*=h_{klmj}$ is either if, for the last sum in Eq. \eqref{eigenvalueeq5}, we set (i) $q=m$, $r=j$, $s=k$, and $n=l$ or (ii) $q=k$, $r=l$, $s=m$, and $n=j$. Since $\{q,r\}\neq \{m,j\}$ then the case (i) is already not valid. On the other hand, in the case in which we consider (ii), then $h_{qrsn}\alpha_{si}\to h_{klmj}\alpha_{mi}$ that will result in $0$ for any $h_{klmj}$, since $\alpha_{mi}=0$. So that, Eq. \eqref{eigenvalueeq5} will be independent of $h_{mjkl}$ and $h_{mjkl}^*=h_{klmj}$. Since this holds for any $\{q,r\}\neq\{m,j\}$, then we proved that for any component $\ket{u_m,u_j'}$ of Eq. \eqref{eigenvvHy} in which $\alpha_{mi}= 0$, we can define the independent variables $h_{mjkl}$ as in Eq. \eqref{hmi0} so as to make 
    \begin{equation}
        \bra{u_{m},u_{j}'}H_2\ket{\delta_i(H_1,U),v}=E_i\alpha_{mi}\beta_{j}=E_i\braket{u_{m},u_{j}'|\delta_i(H_1,U),v}=0\quad \text{for $\alpha_{mi}=0$}
        \label{eigenvalueeq}
\end{equation}
    to hold, in a way that does not interfere in the other components $\ket{u_q,u_r'}\neq \ket{u_m,u_j'}$ of Eq. \eqref{eigenvvHy}. We thus assume from now on $h_{mjkl}$ as in Eq. \eqref{hmi0} for all $m$ such that $\alpha_{mi}=0$.

    Now, let us consider the case in which $\alpha_{mi}\neq 0$. In this case, we rewrite Eq. \eqref{eigenvalueeq3} as
    \begin{equation}
        \ba{rl}
        \displaystyle E_{i}\alpha_{mi}\beta_j&\displaystyle=-\sum_{k\neq m,\{k,l\}\notin \mathbbm{N}_{mj}}\bra{u_m}H_1\ket{u_k}\delta^{\text{\tiny \textbf{K}}}_{jl}\alpha_{ki}\beta_l+\sum_{k\neq m,\{k,l\}\in \mathbbm{N}_{mj}}h_{mjkl}\alpha_{ki}\beta_l-\sum_{l\notin \mathbbm{N}_{j}'}\bra{u_m}H_1\ket{u_m}\delta^{\text{\tiny \textbf{K}}}_{jl}\alpha_{mi}\beta_l+\sum_{l\in \mathbbm{N}_{j}'}h_{mjml}\alpha_{mi}\beta_l\\
        &\displaystyle=-\sum_{k\neq m,\{k,l\}\notin \mathbbm{N}_{mj}}\bra{u_m}H_1\ket{u_k}\delta^{\text{\tiny \textbf{K}}}_{jl}\alpha_{ki}\beta_l+\sum_{k\neq m,\{k,l\}\in \mathbbm{N}_{mj}}h_{mjkl}\alpha_{ki}\beta_l+\sum_{l\in \mathbbm{N}_{j}'}h_{mjml}\alpha_{mi}\beta_l.
        \ea
        \label{eigenvalueeq6}
    \end{equation}
    where we denoted $\mathbbm{N}_{j}'$ as the set of indexes $\{l\}$ such that $\theta_{l}'=\theta_j'$ modulo $2\pi$. Therefore, $l\in \mathbbm{N}_{j}'$ and $l\notin \mathbbm{N}_{j}'$ means that $\mathrm{e}^{i\theta_l'}=\mathrm{e}^{i\theta_j'}$ and $\mathrm{e}^{i\theta_l'}\neq\mathrm{e}^{i\theta_j'}$, respectively. Our goal, just as in the case in which $\alpha_{mi}=0$, is to show that for each $m$ and $j$ there is at least one independent complex variable that can be tuned. Here, however, we have a difficulty since although $h_{mjmj}$ is an independent variable, it must be \emph{real} since $H_2$ is hermitian.  Moreover, the other terms $h_{mjml}$ or $h_{mjkl}$ for $\{k,l\}\neq \{m,j\}$ cannot be considered as independent, since they can interfere in other components. 

    Our way to solve these problems is to consider an induction process. First, we consider 
    \begin{equation}
         h_{mjml}=E_i\delta^{\text{\tiny \textbf{K}}}_{jl} \label{definicaohfundamentaliguaistotal}
    \end{equation}
for any $m$ such that $\alpha_{mi}\neq 0$ and any $j$ and $l$.
Moreover, consider that we order all indexes (because they are countable, we can always do this) and assume $m(1)$ as the first index in which $\alpha_{m(1)i}\neq 0$. Consider the first index $j=1$, such that Eq. \eqref{eigenvalueeq6} can be rewritten, considering Eq. \eqref{definicaohfundamentaliguaistotal}, as:

    \begin{equation}
        \ba{rl}
        \displaystyle E_{i}\alpha_{m(1)i}\beta_1&\displaystyle=-\sum_{k\neq m(1),\{k\}\notin \mathbbm{N}_{m(1)}''}\bra{u_{m(1)}}H_1\ket{u_k}\alpha_{ki}\beta_{1}+\sum_{k\neq m(1),\{k,l\}\in \mathbbm{N}_{m(1)1}}h_{m(1)1kl}\alpha_{ki}\beta_l+E_{i}\alpha_{m(1)i}\beta_1.
        \ea
        \label{eigenvalueeq7}
    \end{equation}
where $\mathbbm{N}_{m(1)}''$ is the set of indexes $\{n\}$ such that $\theta_{n}=\theta_{m(1)}\mod 2\pi$. Now, let us consider the first $k'$ such that $\theta_{k'}\notin \mathbbm{N}_{m(1)}''$ and $\alpha_{k'i}\neq 0$. By the definition of $U'$ and of $\Theta'$ as satisfying condition $\mathbf{U}$, for all $m(i)$, and $k'$, there exist infinitely many $l$ such that $\theta_{l}'+\theta_{k'}=\theta_{1}'+\theta_{m(1)}$, i.e. there are infinitely many $l$ such that $\{k',l\}\in \mathbbm{N}_{m(1)1}$. Let us define the smallest $l$ such that $\{k',l\}\in \mathbbm{N}_{m(1)1}$ as $l(k')$. Moreover, from now on, let us denote, for any $j$, $\beta_{j}=z_{j}\mathrm{e}^{i\gamma_j}$, where by the definition of $\beta_j$, $z_{j}>0$ is the real modulus and $\gamma_j\neq 0,\pi/2$ mod $2\pi$, is the argument. We then define the coefficients $h_{m(1)1k'l}$, with $\{k',l\}\in \mathbbm{N}_{m(1)1}$ as follows:
\begin{equation}
    \begin{array}{ccc}
         h_{m(1)1k'l}&=\bra{u_{m(1)}}H_1\ket{u_{k'}}\mathrm{e}^{i(\gamma_{1}-\gamma_l)}\frac{z_{1}}{z_{l}} & \text{if $l=l(k')$, $\{k',l\}\in \mathbbm{N}_{m(1)1}$, and $k'\notin \mathbbm{N}_{m(1)}'$} \\
         h_{m(1)1k'l}&=0 & \text{if $l\neq l(k')$ and $\{k',l\}\in \mathbbm{N}_{m(1)1}$, or $k'\in \mathbbm{N}_{m(1)}'$}
         \label{definicaohfundamental}
    \end{array}
\end{equation}
Notice that, by definition, it follows that 
\begin{equation}
    \begin{array}{ccc}
         h_{k'lm(1)1}&=\bra{u_{k'}}H_1\ket{u_{m(1)}}\mathrm{e}^{i(\gamma_l-\gamma_{1})}\frac{z_{1}}{z_{l}} & \text{if $l=l(k')$, $\{k',l\}\in \mathbbm{N}_{m(1)1}$, and $k'\notin \mathbbm{N}_{m(1)}'$} \\
         h_{k'lm(1)1}&=0 & \text{if $l\neq l(k')$ and $\{k',l\}\in \mathbbm{N}_{m(1)1}$, or $k'\in \mathbbm{N}_{m(1)}'$}
    \end{array}
\end{equation}
By defining $h_{m(1)1k'l}$ for $\{k',l\}\in \mathbbm{N}_{m(1)1}$ as in Eq. \eqref{definicaohfundamental}, we have that
\begin{equation}
    \begin{array}{ccc}
         h_{m(1)1k'l}\alpha_{k'i}\beta_l&=\bra{u_{m(1)}}H_1\ket{u_k'}\alpha_{k'i}\beta_{1} & \text{if $l=l(k')$, $\{k',l\}\in \mathbbm{N}_{m(1)1}$, and $k'\notin \mathbbm{N}_{m(1)}'$} \\
         h_{m(1)1k'l}\alpha_{k'i}\beta_l&=0 & \text{if $l\neq l(k')$ and $\{k',l\}\in \mathbbm{N}_{m(1)1}$, or $k'\in \mathbbm{N}_{m(1)}'$}\label{definicaohfundamental2}
    \end{array}
\end{equation}
Inserting this expression in Eq. \eqref{eigenvalueeq7}, we have that
\begin{equation}
        \ba{rl}
        \displaystyle E_{i}\alpha_{m(1)i}\beta_1&\displaystyle=-\sum_{k\neq m(1),k\neq k',\{k\}\notin \mathbbm{N}_{m(1)}'}\bra{u_{m(1)}}H_1\ket{u_k}\alpha_{ki}\beta_{1}+\sum_{k\neq m(1),k\neq k',\{k,l\}\in \mathbbm{N}_{m(1)1}}h_{m(1)1kl}\alpha_{ki}\beta_l+E_{i}\alpha_{m(1)i}\beta_1.
        \ea
        \label{eigenvalueeq8}
    \end{equation}
In order words, we eliminated the first term $k'$ of the first and second summation. Since the definition of $h_{m(1)1k'l}$ does not alters the definition of  $h_{m(1)1kl}$ for other $k\neq m(1)$, then we can consider a similar definition for all the other $k\neq m(1)$ in Eq.\eqref{eigenvalueeq8}, as follows:
\begin{equation}
    \begin{array}{ccc}
         h_{m(1)1kl}&=\bra{u_{m(1)}}H_1\ket{u_k}\mathrm{e}^{i(\gamma_{1}-\gamma_l)}\frac{z_{1}}{z_{l}} & \text{if $l=l(k)$, $\{k,l\}\in \mathbbm{N}_{m(1)1}$, $k\neq m(1)$, $\alpha_{ki}\neq 0$, $k\notin \mathbbm{N}_{m(1)}'$} \\
         h_{m(1)1kl}&=0 & \text{if $l\neq l(k)$, $\{k,l\}\in \mathbbm{N}_{m(1)1}$, $k\neq m(1)$, and $\alpha_{ki}\neq 0$, or $k\in \mathbbm{N}_{m(1)}'$}\label{definicaohfundamental3}
    \end{array}
\end{equation}
where $l(k)$ is the smallest index $l$ in which $\{k,l\}\in \mathbbm{N}_{m(1)1}$. Using the definition in Eq. \eqref{definicaohfundamental3} for all $k\neq m$ with $\alpha_{ki}\neq 0$ in Eq. \eqref{eigenvalueeq8}, we then obtain
\begin{equation}
        \bra{u_{m(1)},u_{1}'}H_2\ket{\delta_i(H_1,U),v}=E_{i}\alpha_{m(1)i}\beta_{1}
        \label{eigenvalueeq8}
\end{equation}
as we aimed.

Now, let us consider the second smallest index $m(2)$ such that $\alpha_{m(2)i}\neq 0$. We then have the component $\ket{u_{m(2)},u_{1}}$ of Eq. \eqref{eigenvvHy}, considering definition \eqref{definicaohfundamentaliguaistotal} can be written as
\begin{equation}
        \ba{rl}
        \displaystyle E_{i}\alpha_{m(2)i}\beta_1&\displaystyle=-\sum_{k\neq m(2),\{k\}\notin \mathbbm{N}_{m(2)}'}\bra{u_{m(2)}}H_1\ket{u_k}\alpha_{ki}\beta_1+\sum_{k\neq m(2),\{k,l\}\in \mathbbm{N}_{m(2)1}}h_{m(2)1kl}\alpha_{ki}\beta_l+E_{i}\alpha_{m(2)i}\beta_1
        \ea
        \label{eigenvalueeq9}
\end{equation}
There are two possibilities: either $\theta_{m(2)}=\theta_{m(1)}$, so that $m(2)\in \mathbbm{N}_{m(1)}'$ or $\theta_{m(2)}\neq \theta_{m(1)}$ and  $m(2)\notin  \mathbbm{N}_{m(1)}'$. Let us first consider that $m(2)\in \mathbbm{N}_{m(1)}'$. In this case, from the definition \eqref{definicaohfundamental3}, it follows that $h_{m(1)1m(2)l}=h_{m(2)l m(1)1}^*=h_{m(2)l m(1)1}=0$ for any $l$. As a result, we can define as before, for any $k\neq m(2)$ such that $\alpha_{ki}\neq 0$, the following:
\begin{equation}
    \begin{array}{ccc}
         h_{m(2)1kl}&=\bra{u_{m(2)}}H_1\ket{u_k}\mathrm{e}^{i(\gamma_{1}-\gamma_l)}\frac{z_{1}}{z_{l}} & \text{if $l=l(k)$, $\{k,l\}\in \mathbbm{N}_{m(2)1}$, $k\neq m(2)$, $\alpha_{ki}\neq 0$, and $k\notin \mathbbm{N}_{m(2)}'$} \\
         h_{m(2)1kl}&=0 & \text{if $l\neq l(k)$, $\{k,l\}\in \mathbbm{N}_{m(2)1}$, $k\neq m(2)$, and $\alpha_{ki}\neq 0$, or $k\in \mathbbm{N}_{m(2)}'$}\label{definicaohfundamental4}
    \end{array}
\end{equation}
Notice that this condition does not contradicts the definition \eqref{definicaohfundamental4} for $m(1)$, since in both cases we get $h_{m(2)1m(1)1}=h_{m(1)1m(2)1}=0$ when $m(2)\in \mathbbm{N}_{m(1)}'$ (and therefore $m(1)\in \mathbbm{N}_{m(2)}'$). Considering this definition in Eq . \eqref{eigenvalueeq9} we obtain again
\begin{equation}
    \bra{u_{m(2)},u_{1}'}H_2\ket{\delta_i(H_1,U),v}=E_{i}\alpha_{m(2)i}\beta_{1}.
        \label{eigenvalueeq11}
\end{equation}
Now, let us consider the case in which $\theta_{m(2)}\neq \theta_{m(1)}$ so that $m(2)\notin  \mathbbm{N}_{m(1)}'$. In this case, for any $k\neq m(1)$ and $k\neq m(2)$ we can define $h_{m(2)jkl}$ as in Eq. \eqref{definicaohfundamental4} and insert them in Eq. \eqref{eigenvalueeq9} to obtain
\begin{equation}
       E_{i}\alpha_{m(2)i}\beta_{1}=-\bra{u_{m(2)}}H_1\ket{u_{m(1)}}\alpha_{m(2)i}\beta_{1}+\sum_{\{m(1),l\}\in \mathbbm{N}_{m(2)1}}h_{m(2)1m(1)l}\alpha_{m(1)i}\beta_l+E_{i}\alpha_{m(2)i}\beta_{1}.
        \label{eigenvalueeq12}
\end{equation}
where the summation $\sum_{\{m(1),l\}\in \mathbbm{N}_{m(2)1}}$ is over all $l$ such that $\{m(1),l\}\in \mathbbm{N}_{m(2)1}$ for the fixed index $m(1)$. Now, it cannot be the case that simultaneously $m(2)\notin  \mathbbm{N}_{m(1)}'$ and  $1$ is equal to the smallest $l$ such that $\{m(2),l\}\in \mathbbm{N}_{m(1)1}$, since if $m(2)\notin  \mathbbm{N}_{m(1)}'$ and therefore $\theta_{m(1)}\neq \theta_{m(2)}$, it cannot be the case that $\theta_{1}'+\theta_{m(1)}=\theta_{1}'+\theta_{m(2)}$ and therefore $\{m(1),1\}\notin \mathbbm{N}_{m(2)1}$. Therefore, for any $l$, $h_{m(2)1m(1)l}=h_{m(1)lm(2)1}^{*}$ was not defined in the previous definitions in Eq. \eqref{definicaohfundamental3} for $m(1)$. Thus, we can define $h_{m(2)1m(1)l}$ as in Eq. \eqref{definicaohfundamental4} substituting $k\to m(1)$ to obtain, considering Eq. \eqref{eigenvalueeq12},
\begin{equation}
        \ba{rl}
        \bra{u_{m(2)},u_{1}'}H_2\ket{\delta_i(H_1,U),v}=E_{i}\alpha_{m(2)i}\beta_{1}.
        \ea
        \label{eigenvalueeq13}
\end{equation}
This procedure can be repeated for all $m$ in which $\alpha_{mi}\neq 0$ for the same index $j=1$ using the same reasoning, so that defining for all $m$ 
\begin{equation}
    \begin{array}{ccc}
         h_{m1kl}&=\bra{u_{m}}H_1\ket{u_k}\mathrm{e}^{i(\gamma_{1}-\gamma_l)}\frac{z_{1}}{z_{l}} & \text{if $l=l(k)$, $\{k,l\}\in \mathbbm{N}_{m1}$, $k\neq m$, $\alpha_{ki}\neq 0$, and $k\notin \mathbbm{N}_{m}'$} \\
         h_{m1kl}&=0 & \text{if $l\neq l(k)$, $\{k,l\}\in \mathbbm{N}_{m1}$, $k\neq m$, and $\alpha_{ki}\neq 0$, or $k\in \mathbbm{N}_{m}'$},\label{definicaohfundamental6}
    \end{array}
\end{equation}
it follows that 
\begin{equation}
        \bra{u_{m},u_{1}'}H_2\ket{\delta_i(H_1,U),v}=E_i\alpha_{mi}\beta_{1}=E_i\braket{u_{m},u_{1}|\delta_i(H_1,U),v}.
        \label{eigenvalueeq14}
\end{equation}
Now, let us consider again $m(1)$ but now we consider the second smallest index $j=2$. We need to consider now the fact that it is possible that some $h_{m(1)2k1}=h_{k1m(1)2}^*$ have already being defined in Eq. \eqref{definicaohfundamental6}. Let us define the set $\mathbbm{O}_{12m(1)}$ as the set of indexes $k$ such that $h_{k1m(1)2}$ have already being defined in Eq. \eqref{definicaohfundamental6}. For all $k\notin \mathbbm{O}_{12m(1)}$, we can define
\begin{equation}
    \begin{array}{ccc}
         h_{m(1)2kl}&=\bra{u_{m(1)}}H_1\ket{u_k}\mathrm{e}^{i(\gamma_{2}-\gamma_l)}\frac{z_{2}}{z_{l}} & \text{if $l=l'(k)$, $\{k,l\}\in \mathbbm{N}_{m(1)2}$, $k\neq m(1)$, $\alpha_{ki}\neq 0$, and $k\notin \mathbbm{N}_{m(1)}'$} \\
         h_{m(1)2kl}&=0 & \text{if $l\neq l'(k)$, $\{k,l\}\in \mathbbm{N}_{m(1)2}$, $k\neq m(1)$, and $\alpha_{ki}\neq 0$, or $k\in \mathbbm{N}_{m(1)}'$},\label{definicaohfundamental7}
    \end{array}
\end{equation}
where now $l'(k)$ is the minimum $l$ such that $\{k,l\}\in \mathbbm{N}_{m(1)2}$ and $l>1$. Considering this definition, it follows that the eigenvector-eigenvalue equation for the component $\ket{u_{m(1)},u_{2}'}$, taking into account \eqref{definicaohfundamentaliguaistotal}, becomes
\begin{equation}
        \ba{rl}
        \displaystyle E_{i}\alpha_{m(1)i}\beta_{2}&\displaystyle=-\sum_{\{k\}\notin \mathbbm{N}_{m(1)}'}'\bra{u_{m(1)}}H_1\ket{u_k}\alpha_{ki}\beta_{2}+\sum_{\{k,l\}\in \mathbbm{N}_{m(1)2}}'h_{m(1)2kl}\alpha_{ki}\beta_l+\sum_{l\in \mathbbm{N}_{m(1)2}'}h_{m(1)2m(1)l}\alpha_{m(1)i}\beta_l.\\
        &\displaystyle=-\sum_{\{k\}\notin \mathbbm{N}_{m(1)}'}'\bra{u_{m(1)}}H_1\ket{u_k}\alpha_{ki}\beta_2+\sum_{\{k,1\}\in \mathbbm{N}_{m(1)2}}'h_{m(1)2k1}\alpha_{ki}\beta_{1}+\sum_{l\neq 1\{k,l\}\in \mathbbm{N}_{m(1)2}}'h_{m(1)2kl}\alpha_{ki}\beta_l+\\
        &+E_{i}\alpha_{m(1)i}\beta_{2}.
        \ea
        \label{eigenvalueeq15}
\end{equation}
where we considered the notation
\begin{equation}
    \sum'\equiv \sum_{k\neq m(1),k\in\mathbbm{O}_{12m(1)}}.
\end{equation}
Now, as can be straightforwardly checked, it is generally expected that
\begin{equation}
    -\sum_{\{k\}\notin \mathbbm{N}_{m(2)}'}'\bra{u_{m(1)}}H_1\ket{u_k}\alpha_{ki}\beta_{2}+\sum_{\{k,1\}\in \mathbbm{N}_{m(1)2}}'h_{m(1)2k1}\alpha_{ki}\beta_{1}\neq 0
\end{equation}
i.e. the first and second sum in the second equality of Eq. \eqref{eigenvalueeq15} will not vanish in general. In this case, we have to define $h_{m(1)2kl}$ appropriately so as to satisfy the eigenvalue-eigenvector equation. We then propose the following definition for all $k\neq m(1)$, $k\in \mathbbm{O}_{12m(1)} $: 
\begin{equation}
    \begin{array}{rll}
         h_{m(1)2kl}&=\bra{u_{m(1)}}H_1\ket{u_k}\mathrm{e}^{i(\gamma_{2}-\gamma_l)}\frac{z_{2}}{z_{l}}-h_{m(1)2k1}\mathrm{e}^{i(\gamma_{1}-\gamma_{l})}\frac{z_{1}}{z_{l}} & \text{if $l=l''(k)$, $\{k,l\}\in \mathbbm{N}_{m(1)2}$, $\alpha_{ki}\neq 0$, and $k\notin \mathbbm{N}_{m(1)}'$} \\
         h_{m(1)2kl}&=0 & \text{if $l\neq l''(k)$, $\{k,l\}\in \mathbbm{N}_{m(1)2}$, and $\alpha_{ki}\neq 0$, or $k\in \mathbbm{N}_{m(1)}'$}.\label{definicaohfundamental8}
    \end{array}
\end{equation}
where $l''(k)$ is the smallest $l$ such that $\{k,l\}\in \mathbbm{N}_{m(1)2}$ with $k\in \mathbbm{O}_{12m(1)}$ and $l> 1$. Notice that because there are infinitely many $l$ such that $\{k,l\}\in \mathbbm{N}_{m(1)2}$, then we can always find such $l''(k)$.  Inserting the definition Eq. \eqref{definicaohfundamental8}  in Eq. \eqref{eigenvalueeq15} for all $k\neq m(1)$, $k\in \mathbbm{O}_{12m(1)}$, we obtain 
\begin{equation}
    \bra{u_{m(1)},u_{2}'}H_2\ket{\delta_i(H_1,U),v}=E_i\alpha_{m(1)i}\beta_{2}=E_i\braket{u_{m(1)},u_{2}|\delta_i(H_1,U),v}.
\end{equation}
Let us now consider the index $m(2)$ and $j=2$. There are two possibilities: either $m(2)\in \mathbbm{N}_{m(1)}'$ or $m(2)\notin  \mathbbm{N}_{m(1)}'$. Let us first consider that $m(2)\in \mathbbm{N}_{m(1)}'$. In this case, from the definition \eqref{definicaohfundamental7} or \eqref{definicaohfundamental8}, it follows that $h_{m(1)2m(2)l}=h_{m(2)l m(1)2}^*=h_{m(2)l m(1)2}=0$ for any $l$. Moreover, we need to consider once again the possibility that for some $k\neq m(1)$ and $k\neq m(2)$, $h_{m(2)2k1}=h_{k1m(2)2}^*$ have already being defined in Eq. \eqref{definicaohfundamental6}. Defining analogously as for $m(1)$ the set $\mathbbm{O}_{12m(2)}$ as the set of indexes $k$ such that $h_{k1m(2)2}$ have already being defined in Eq. \eqref{definicaohfundamental6}, then we can define for every $k\neq m(2)$ and $k\neq m(1)$ the following:
\begin{equation}
    \begin{array}{rll}
         h_{m(2)2kl}&=\bra{u_{m(2)}}H_1\ket{u_k}\mathrm{e}^{i(\gamma_{2}-\gamma_l)}\frac{z_{2}}{z_{l}}-h_{m(2)2k1}\mathrm{e}^{i(\gamma_{1}-\gamma_{l})}\frac{z_{1}}{z_{l}} & \text{if $l=l''(k)$, $\{k,l\}\in \mathbbm{N}_{m(2)2}$, $\alpha_{ki}\neq 0$, $k\notin \mathbbm{N}_{m(2)}'$, and $k\in\mathbbm{O}_{12m(2)}$} \\
         h_{m(2)2kl}&=\bra{u_{m(2)}}H_1\ket{u_k}\mathrm{e}^{i(\gamma_{2}-\gamma_l)}\frac{z_{2}}{z_{l}} & \text{if $l=l''(k)$, $\{k,l\}\in \mathbbm{N}_{m(2)2}$, $\alpha_{ki}\neq 0$, $k\notin \mathbbm{N}_{m(2)}'$, and $k\notin\mathbbm{O}_{12m(2)}$} \\
         h_{m(2)2kl}&=0 & \text{if $l\neq l''(k)$, $\{k,l\}\in \mathbbm{N}_{m(2)2}$, and $\alpha_{ki}\neq 0$, or $k\in \mathbbm{N}_{m(2)}'$}.\label{definicaohfundamental9}
    \end{array}
\end{equation}
where $l''(k)$ is the smallest $l$ such that $\{k,l\}\in \mathbbm{N}_{m(2)2}$ with $k\in \mathbbm{O}_{12m(1)}$ and $l>1$. Using the definition \eqref{definicaohfundamental9} it follows that in the component $\ket{u_{m(2)},u_{2)}'}$ of the eigenvalue-eigenvector equation \eqref{eigenvvHy}, we obtain
\begin{equation}
        \bra{u_{m(2)},u_{2}'}H_2\ket{\delta_i(H_1,U),v}=E_i\alpha_{m(2)i}\beta_{2}=E_i\braket{u_{m(2)},u_{2}|\delta_i(H_1,U),v}
        \label{eigenvalueeq16}
\end{equation}
as we wanted. If, on the other hand, $\theta_{m(2)}\neq \theta_{m(1)}$ so that $m(2)\notin  \mathbbm{N}_{m(1)}'$, then it cannot be the case that simultaneously $m(2)\notin  \mathbbm{N}_{m(1)}'$ and  $2$ is equal to the smallest $l$ such that $\{m(2),l\}\in \mathbbm{N}_{m(1)2}$, since if $m(2)\notin  \mathbbm{N}_{m(1)}'$ and therefore $\theta_{m(1)}\neq \theta_{m(2)}$, it cannot be the case that $\theta_{2}'+\theta_{m(1)}=\theta_{2}'+\theta_{m(2)}$ and therefore $\{m(1),2\}\notin \mathbbm{N}_{m(2)2}$. Therefore, for any $l>1$, $h_{m(2)2m(1)l}=h_{m(1)lm(2)2}^{*}$ was not defined in the previous definitions in Eq. \eqref{definicaohfundamental3}. Thus, we can define $h_{m(2)2m(1)l}$ as in Eq. \eqref{definicaohfundamental9} substituting $k\to m(2)$ to obtain Eq. \eqref{eigenvalueeq16} when $\theta_{m(2)}\neq \theta_{m(1)}$ as well. 

Using the same arguments as for $m(2)$, we can consider for any $m$ the definition:
\begin{equation}
    \begin{array}{rll}
         h_{m2kl}&=\bra{u_{m}}H_1\ket{u_k}\mathrm{e}^{i(\gamma_{2}-\gamma_l)}\frac{z_{2}}{z_{l}}-h_{m2k1}\mathrm{e}^{i(\gamma_{1}-\gamma_{l})}\frac{z_{1}}{z_{l}} & \text{if $l=l''(k)$, $\{k,l\}\in \mathbbm{N}_{mj(2)}$, $\alpha_{ki}\neq 0$, $k\notin \mathbbm{N}_{m}'$, and $k\in\mathbbm{O}_{12m}$} \\
         h_{m2kl}&=\bra{u_{m}}H_1\ket{u_k}\mathrm{e}^{i(\gamma_{2}-\gamma_l)}\frac{z_{2}}{z_{l}} & \text{if $l=l''(k)$, $\{k,l\}\in \mathbbm{N}_{m2}$, $\alpha_{ki}\neq 0$, $k\notin \mathbbm{N}_{m}'$, and $k\notin\mathbbm{O}_{12m}$} \\
         h_{m2kl}&=0 & \text{if $l\neq l''(k)$, $\{k,l\}\in \mathbbm{N}_{m2}$, and $\alpha_{ki}\neq 0$, or $k\in \mathbbm{N}_{m}'$}.\label{definicaohfundamental10}
    \end{array}
\end{equation}
where $\mathbbm{O}_{12m}$ is the set of indexes $k$ such that $h_{m2k1}$ have already been defined, $l''(k)$ is the smallest $l>1$ such that $\{k,l\}\in \mathbbm{N}_{m2}$ with $k\in \mathbbm{O}_{12m}$. Using this definition, it follows for any $m$ that 
\begin{equation}
        \bra{u_{m},u_{2}}H_2\ket{\delta_i(H_1,U),v}=E_i\alpha_{mi}\beta_{2}=E_i\braket{u_{m},u_{2}|\delta_i(H_1,U),v}.
        \label{eigenvalueeq17}
\end{equation}

Now, we want to prove a rule of induction considering the procedure that we did until now for all other indexes. Specifically, we want to define the component $h_{m(1)jkl}$ of which $\alpha_{m(1)i}\neq 0$ for any $j$, given that $h_{mj'kl}$ was already defined using the same procedure for all $k$, $l$, $m$ and $j'=j-1,j-2,\cdots 1$ and that for all such $m$ and $j'$
\begin{equation}
        \bra{u_{m},u_{j'}}H_2\ket{\delta_i(H_1,U),v}=E_i\alpha_{mi}\beta_{j'}.
        \label{eigenvalueeqinduction}
\end{equation}
Let us thus consider the indexes $j$ and $m(1)$. We define $\mathbbm{J}(j,m(1),k)$ as the set of all indexes $\{j'\}$ such that $j'\leq j$ and $h_{m(1)jkj'}=h_{kj'm(1)j}$ was already defined in previous steps. Also, we define $\mathbbm{O}(j,m(1))$ of all the $k$ indexes in which at least some $j'<j$, the term $h_{m(1)jkj'}$ was already defined. Considering these sets, we then introduce the following definitions
\begin{equation}
    h_{m(1)jkl}=\bra{u_{m(1)}}H_1\ket{u_k}\mathrm{e}^{i(\gamma_{j}-\gamma_l)}\frac{z_{j}}{z_{l}}-\sum_{j'\in\mathbbm{J}(j,m(1),k) }h_{m(1)jkj'}\mathrm{e}^{i(\gamma_{j'}-\gamma_{l})}\frac{z_{j'}}{z_{l}}\label{definicaohfundamental11}
\end{equation}
if $l=l''(k)$, where $l''(k)$ is the smallest $l$ such that $\{k,l\}\in \mathbbm{N}_{m(1)j}$, $l>j$, $\alpha_{ki}\neq 0$, $k\notin \mathbbm{N}_{m(1)}'$, and $k\in\mathbbm{O}(j,m(1))$. If $k\notin\mathbbm{O}(j,m(1))$, then we can consider the same definition $l=l''(k)$, and consider
\begin{equation}
        h_{m(1)jkl}=\bra{u_{m(1)}}H_1\ket{u_k}\mathrm{e}^{i(\gamma_{j}-\gamma_l)}\frac{z_{j}}{z_{l}}\label{definicaohfundamental12}
\end{equation}
where  $\{k,l\}\in \mathbbm{N}_{mj}$, $\alpha_{ki}\neq 0$, $k\notin \mathbbm{N}_{m(1)}'$. If $l\neq l''(k)$, $\{k,l\}\in \mathbbm{N}_{m(1)j}$, $\alpha_{ki}\neq 0$, and $k\notin \mathbbm{N}_{m(1)}'$ or in case that $k\in \mathbbm{N}_{m(1)}'$
then we define
\begin{equation}
         h_{m(1)jkl}=0.\label{definicaohfundamental13}
\end{equation}
Notice that since for any $k$, $j$, and $m(1)$, there are infinitely many $l$ such that $\{k,l\}\in \mathbbm{N}_{m(1)j} $, then $l''(k)$ can always be defined.

Let us consider the eigenvalue-eigenvector equation for the component $\ket{u_{m(1)},u_{j}'}$, considering \eqref{definicaohfundamentaliguaistotal}:
\begin{equation}
         E_{i}\alpha_{m(1)i}\beta_{j}=-\sum_{k\neq m(1),\{k\}\notin \mathbbm{N}_{m(1)}'}\bra{u_{m(1)}}H_1\ket{u_k}\alpha_{ki}\beta_{j}+\sum_{k\neq m(1),\{k,l\}\in \mathbbm{N}_{m(1)j}}h_{m(1)jkl}\alpha_{ki}\beta_l+E_{i}\alpha_{m(1)i}\beta_{j}.
        \label{eigenvalueeq18}
    \end{equation}
Now, considering the fact that for all $k\in\mathbbm{O}(j,m(1))$ and for such $k$, all indexes $j'\in\mathbbm{J}(j,m(1),k)$, $h_{m(1)jkj'}$ was already defined for some $j'<j$, then Eq. \eqref{eigenvalueeq18} can be rewritten as
\begin{equation}
        \ba{rl}
        \displaystyle E_{i}\alpha_{m(1)i}\beta_{j}&=-\sum_{k\neq m(1),\{k\}\notin \mathbbm{N}_{m(1)}'}\bra{u_{m(1)}}H_1\ket{u_k}\alpha_{ki}\beta_{j}+\sum_{k\in\mathbbm{O}(j,m(1)),j'\in\mathbbm{J}(j,m(1),k)}'h_{m(1)jkj'}\alpha_{ki}\beta_{j'}+\\
        &\displaystyle+\sum_{k\in\mathbbm{O}(j,m(1)),l}'h_{m(1)jkl}\alpha_{ki}\beta_l+\sum_{k\notin\mathbbm{O}(j,m(1)),l}'h_{m(1)jkl}\alpha_{ki}\beta_l+E_{i}\alpha_{m(1)i}\beta_{j}.
        \ea
        \label{eigenvalueeq19}
\end{equation}
where we consider the short notation:
\begin{equation}
    \sum'=\sum_{k\neq m(1),\{k,l\}\in \mathbbm{N}_{m(1)j}}.
\end{equation}
Substituting the definitions in Eqs. \eqref{definicaohfundamental11}-\eqref{definicaohfundamental13} in the third and second sum of Eq. \eqref{eigenvalueeq19}, we can check that
\begin{equation}
        \ba{rl}
        0\displaystyle& =-\sum_{k\neq m(1),\{k\}\notin \mathbbm{N}_{m(1)}'}\bra{u_{m(1)}}H_1\ket{u_k}\alpha_{ki}\beta_{j}+\sum_{k\in\mathbbm{O}(j,m(1)),j'\in\mathbbm{J}(j,m(1),k)}'h_{m(1)jkj'}\alpha_{ki}\beta_{j'}+\\
        &\displaystyle+\sum_{k\in\mathbbm{O}(j,m(1)),l}'h_{m(1)jkl}\alpha_{ki}\beta_l+\sum_{k\notin\mathbbm{O}(j,m(1)),l}'h_{m(1)jkl}\alpha_{ki}\beta_l.
        \ea
        \label{eigenvalueeq20}
\end{equation}
so that
\begin{equation}
        \bra{u_{m(1)},u_{j}}H_2\ket{\delta_i(H_1,U),v}=E_i\alpha_{m(1)i}\beta_{j}.
        \label{eigenvalueeq21}
\end{equation}
Now, notice that for any $r>1$, it follows that either $m(r)\in \mathbbm{N}_{m(1)}'$ or $m(r)\notin  \mathbbm{N}_{m(1)}'$. In the case in which $m(r)\in \mathbbm{N}_{m(1)}'$, it follows that, in accordance with \eqref{definicaohfundamental13}, $h_{m(1)jm(r)l}=h_{m(r)l m(1)j}^*=h_{m(r)l m(1)j}=0$ for any $l$. We can thus consider the same definitions as in Eqs. \eqref{definicaohfundamental11}-\eqref{definicaohfundamental13} substituting $m(1)\to m(r)$ and this will not contradict the fact that $h_{m(1)jm(r)j}=h_{m(r)jm(1)j}^*=h_{m(r)j m(1)j}=0$. As a result, it follows that using  Eqs. \eqref{definicaohfundamental11}-\eqref{definicaohfundamental13}, Eq. \eqref{eigenvalueeq21} will hold similarly substituting $m(1)\to m(r)$. On the other hand, if $\theta_{m(r)}\neq \theta_{m(1)}$ and  $m(r)\notin  \mathbbm{N}_{m(1)}'$, then it cannot be the case that simultaneously $m(r)\notin  \mathbbm{N}_{m(1)}'$ and  $j$ is equal to the smallest $l$ such that $\{m(r),l\}\in \mathbbm{N}_{m(1)j}$. Therefore, for any $l>j-1$, $h_{m(r)jm(1)l}=h_{m(1)lm(r)j}^{*}$ was not yet defined. Thus, we can define $h_{m(r)jkl}$ as in Eqs. \eqref{definicaohfundamental11}-\eqref{definicaohfundamental13} substituting $m(1)\to m(r)$ to obtain Eq. \eqref{eigenvalueeq21} for any $r$. As a result, given that we assume for the induction process that Eq. \eqref{eigenvalueeqinduction} holds for any $m$ and $j'=j-1,j-2,\cdots,1$ and given the fact that for all $r\geq 1$, Eq. \eqref{eigenvalueeq21} holds substituting $m(1)\to m(r)$, then it follows that for all $m$ and all $j'=j,j-1,j-2,\cdots 1$, that
\begin{equation}
        \bra{u_{m},u_{j'}}H_2\ket{\delta_i(H_1,U),v}=E_i\alpha_{mi}\beta_{j'}=E_i\braket{u_{m},u_{j'}|\delta_i(H_1,U),v}.
        \label{eigenvalueeq21}
\end{equation}
Therefore, by assuming that Eq. \eqref{eigenvalueeqinduction} holds for any $m$ and all $j'=j-1,j-2,\cdots 1$, we proved that it also holds for all $m$ and all $j'=j,j-1,j-2,\cdots 1$. Given that $j$ is arbitrary here and that we proved Eqs. \eqref{eigenvalueeq14} and \eqref{eigenvalueeq17} for $j=1$ and $j=2$, respectively, then Eq. \eqref{eigenvalueeqinduction} holds for $j=1$ and $j=2$, and, by induction, it is valid for all $j=3,4,5,\cdots$. As a result, for any $m$ such that $\alpha_{mi}\neq 0$ and any $j$,
\begin{equation}
        \bra{u_{m},u_{j}'}H_2\ket{\delta_i(H_1,U),v}=E_i\alpha_{mi}\beta_{j}.
        \label{eigenvaluemneq}
\end{equation}
Considering this result and \eqref{eigenvalueeq}, we can thus conclude that for the only two possible scenarios, in which either $\alpha_{mi}\neq 0$ or $\alpha_{mi}=0$, we can always tune $h_{mjkl}$ that defines $H_2$ in Eq. \eqref{componentsHY}  to satisfy all the components $\ket{u_m,u_j'}$ of Eq. \eqref{eigenvvHy}. This means that we can always find a hermitian operator $H_2$ that satisfy 
    \begin{equation}
        [H_1\otimes\mathbbm{1}_{\mathcal{H}'}+H_2,U\otimes U']=0\label{commutresult1rep}
    \end{equation}
    and
    \begin{equation}
        H_2\ket{\delta_i(H_1,U),v}=E_i\ket{\delta_i(H_1,U),v}.\label{result12ndrelationrep}
    \end{equation} 
    
We conclude the proof of the result 2 by proving that the same $H_2$ that satisfy Eqs. \eqref{commutresult1rep} and \eqref{result12ndrelationrep}, also satisfy the relation
     \begin{equation}
         (U^{\dagger}\otimes U^{'\dagger})H_2(U\otimes U')\ket{\delta_i(H_1,U),v}=E_i'\ket{\delta_i(H_1,U),v},\label{result12ndrelationrep2}
     \end{equation}
     where $E_i'=E_i-\delta_i(H_1,U)$. To do so, we consider a $H_2$ that satisfy Eqs. \eqref{commutresult1rep} and \eqref{result12ndrelationrep}. Applying $U^{\dagger}\otimes U^{'\dagger}$ on the left of both sides of Eq. \eqref{commutresult1rep}, we obtain
     \begin{equation}
         \Delta(H_1,U)\otimes\mathbbm{1}_{\mathcal{H}'}= U^{\dagger}H_1 U\otimes\mathbbm{1}_{\mathcal{H}'}-H_1\otimes\mathbbm{1}_{\mathcal{H}'}=-[(U^{\dagger}\otimes U^{'\dagger})H_2(U\otimes U')-H_2]=-\Delta(H_2,U\otimes U').
     \end{equation}
    Applying $\Delta(H_2,U\otimes U')$ on $\ket{\delta_i(H_1,U),v}$ considering the above equality, we get
    \begin{equation}
        [(U^{\dagger}\otimes U^{'\dagger})H_2(U\otimes U')-H_2]\ket{\delta_i(H_1,U),v}=\Delta(H_2,U\otimes U')\ket{\delta_i(H_1,U),v}=-\delta_i(H_1,U)\ket{\delta_i(H_1,U),v}\label{eigenrevertido}
    \end{equation}
    which means that $\ket{\delta_i(H_1,U),v}$ is also an eigenvector of $\Delta(H_2,U\otimes U')$ with eigenvalue $-\delta_i(H_1,U)$. Considering Eq. \eqref{eigenrevertido}, the fact that $H_2$ satisfy Eq. \eqref{result12ndrelationrep} and $(U^{\dagger}\otimes U^{'\dagger})H_2(U\otimes U')=H_2+\Delta(H_2,U\otimes U')$, we deduce
    \begin{equation}
        (U^{\dagger}\otimes U^{'\dagger})H_2(U\otimes U')\ket{\delta_i(H_1,U),v}=(H_2+\Delta(H_2,U\otimes U'))\ket{\delta_i(H_1,U),v}=(E_i-\delta_i(H_1,U))\ket{\delta_i(H_1,U),v},
    \end{equation}
    concluding the proof.
\end{proof}

\section{Appendix D: Adapting to the time-dependent case}
It is interesting to notice that results 1 and 2 can be immediately adapted to any other quantum observalbe $A$ substituting $H_1\to A$ in the results and deductions, given that $U$ has a countable basis. Therefore, we can consider the substitution $H_1\to X,P,L,N,\cdots$, i.e. position, linear and angular momentum, number of particles etc. In the present section, we show that we can adapt also our formalism for time-dependent cases, as long as we do adjustments in the CRIN conditions. 

We then consider now that $H_1$ can have a explicit time-dependence in the Schr\"odinger picture, so that $H_1\to H_1(t)$, for any given time $t$. Similarly, we consider that such a system evolves according with a unitary $U$ with countable basis. In this case, the two time observables defining the variation of $H_1(t)$ on the interval is defined, for the discrete case, as
\begin{equation}
    \Delta (H_1(0),H_1(t),U)=U^{\dagger}H_1(t)U-H_1(0)=\sum_{i}\delta_i(H_1(0),H_1(t),U) \ket{\delta_{i}(H(0),H(t),U)},\label{timedependent}
\end{equation}
where $\Delta (H_1(0),H_1(t),U)\ket{\delta_{i}(H(0),H(t),U)}=\delta_{i}(H(0),H(t),U)\ket{\delta_{i}(H(0),H(t),U)}$. An analogous definition can be made for the continuous and/or degenerate case. Notice that $U$ in general will have the dependence of $H_1(t')$ for all $t'\in[0,t]$. This dependence can thus be implicit in $U$ and we take into account this fact in the notation of 
$\Delta (H_1(0),H_1(t),U)$.

Now, we define in analogy to the time-independent case, a time-dependent measurement protocol $\mathbbm{M}_{t}$ \cite{Perarnau2017,Silva2024} to measure the \emph{variation} of any time-dependent arbitrary operator $H_1(t)$ under an arbitrary evolution $U$; for the protocol $\mathbbm{M}_{t}$ and for each $(H_1(0),H_1(t), U)$ triplet, the set $\mathbbm{M}_{t}(H_1(0),H_1(t), U) = \{M(z, H_1(0),H_1(t), U)\}$ defines a POVM whose operators satisfy the usual POVM properties, \emph{viz}. $\int_{-\infty}^{\infty} dz  M(z, H_1(0),H_1(t), U) = \mathbbm{1}$ and $M(z, H_1(0),H_1(t), U) \geq 0$. Given this definition, we redefine the CRIN conditions for the time-dependent case: 
\begin{enumerate}    
    \item \textbf{Conservation laws}: \emph{For any preparation $\rho$, unitary evolution $U$, and time-dependent energy operators defined $H_1(t)$ and $H_2(t)$ representing parts of the energy of any system $\Omega$, if $U^\dagger H_2(t)U+U^\dagger H_1(t)U=H_2(0)+H_1(0)$, then, for any $z$, $\Tr [M(z, H_1(0),H_1(t),U)\rho]=\wp(z,H_1(0),H_1(t),U,\rho)=\wp(-z,H_2(0),H_2(t),U,\rho)=\Tr [M(-z, H_2(0),H_2(t),U)\rho]$}.
    \item \textbf{Reality}:  \emph{For any system $\Omega$, operators $H_1(0)$ and $H_1(t)$, and unitary evolution $U$, if there is an initial state $\rho_1=\ket{e_1}\bra{e_1}$ such that $\ket{e_1}$ is an eigenvector of both $H_{1}(0)$ and $U^{\dagger}H_1(t) U$ with respective eigenvalues $e_1(0)$ and $\epsilon_1(t)$, then the POVM must result in the probabilities $\wp(z,H_1(0),H_1(t),U)=\delta^{\text{\tiny \textbf{D}}}[z-(\epsilon_1(t)-e_1(0))]$}. 
    \item \textbf{ Independence of the initial state}: \emph{For any system $\Omega$, operators $H_1(0)$ and $H_1(t)$, and evolution operators $U$, the elements of the POVM ${M(z, H_1(0),H_1(t),U)}$ must not depend on the initial state $\rho$}. 
    \item \textbf{No-signaling}: \emph{For any system $\Omega$, if $\Omega$ evolves under an arbitrary bipartite unitary evolution $U\otimes U'$ acting on a bipartite Hilbert space $\mathcal{H}=\mathcal{H}_1\otimes \mathcal{H}_2$, then, for every energy operators $H_1(0)\otimes \mathbbm{1}_2$ and  $H_1(t)\otimes \mathbbm{1}_2$ acting locally on $\mathcal{H}_1$, $\mathbbm{M}$ is such that its POVMs satisfy the following relation for every $z$: $M(z,H_1(0)\otimes \mathbbm{1}_{2},H_1(t)\otimes \mathbbm{1}_{2},U\otimes U')=M(z,H_1(0),H_1(t),U)\otimes \mathbbm{1}_{2}$.}
\end{enumerate} 

Given this adaptation (and a similar adaptation of the reality condition for the continuous case), we now comment on how to adapt result 1 and 2 for the time-dependent case. 

First, the following result is an immediate consequence of Result 2:
\begin{corolarios}
    For any unitary evolution $U$, energy operators $H_1(0)$ and $H_1(t)$ acting on a Hilbert space $\mathcal{H}$, and eigenstate $\ket{\delta_i(H_1(0),H_1(t), U)}$ of $\Delta(H_1(0),H_1(t), U)$, there exists a unitary $U'$ acting on an auxiliary Hilbert space $\mathcal{H}'$, an additional operators $H_2(0)$ and $H_2(t)$  acting on $\mathcal{H} \otimes \mathcal{H}'$, and a vector $\ket{v} \in \mathcal{H}'$ such that the following equations are satisfied: 
\begin{eqnarray} 
&(U^\dagger \otimes U^{'\dagger})H_1(t) \otimes \mathbbm{1}_{\mathcal{H}'}(U \otimes U') + (U^\dagger \otimes U^{'\dagger})H_2(U \otimes U')=H_1(0) \otimes \mathbbm{1}_{\mathcal{H}'}+H_2(0) = 0, \label{commutresult10time} \\
&H_2 \ket{\delta_i(H_1(0),H_1(t), U), v} = E_i \ket{\delta_i(H_1(0),H_1(t), U), v},\label{result122ndrelation0time}\\
&(U^\dagger \otimes U^{'\dagger}) H_2 (U \otimes U') \ket{\delta_i(H_1(0),H_1(t), U), v} = E_i' \ket{\delta_i(H_1(0),H_1(t), U), v}, \label{result12ndrelationtime} 
\end{eqnarray} 
where $\ket{\delta_i(H_1(0),H_1(t), U), v} = \ket{\delta_i(H_1(0),H_1(t), U)} \otimes \ket{v}$, and $E_i$ and $E_i' = E_i - \delta_i(H_1(0),H_1(t), U)$ are real numbers.\label{coro}
\end{corolarios}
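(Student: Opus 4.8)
The plan is to derive Corollary~\ref{coro} from Result 2 by a ``clock'' dilation that turns the time-dependent two-time observable $\Delta(H_1(0),H_1(t),U)=U^{\dagger}H_1(t)U-H_1(0)$ into a genuine time-\emph{independent} two-time observable on an enlarged space. (This step is really needed: $\Delta(H_1(0),H_1(t),U)$ is in general \emph{not} of the form $U^{\dagger}AU-A$ for any single Hermitian $A$, because the ``cohomological'' equation $UBU^{\dagger}-B=H_1(0)-H_1(t)$ is not solvable unless $H_1(0)-H_1(t)$ has vanishing block in every degenerate eigenspace of $U$.) Concretely, I would introduce a two-level space $\mathcal{H}_c$ with orthonormal basis $\{\ket{0},\ket{1}\}$ and set, on $\mathcal{H}\otimes\mathcal{H}_c$, $\hat{H}_1:=H_1(0)\otimes\ket{0}\bra{0}+H_1(t)\otimes\ket{1}\bra{1}$ and $\hat{U}:=U\otimes\sigma_x$ with $\sigma_x:=\ket{0}\bra{1}+\ket{1}\bra{0}$. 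Since $U$ has a countable eigenbasis and $\sigma_x$ is diagonalized by $\{\ket{+},\ket{-}\}$, $\hat U$ has a countable eigenbasis, so Result 2 applies to $(\hat H_1,\hat U)$. Using $\sigma_x\ket{0}\bra{0}\sigma_x=\ket{1}\bra{1}$ one checks $\Delta(\hat H_1,\hat U)=\hat U^{\dagger}\hat H_1\hat U-\hat H_1=\Delta(H_1(0),H_1(t),U)\otimes\ket{0}\bra{0}+\Delta(H_1(t),H_1(0),U)\otimes\ket{1}\bra{1}$, so that $\ket{\hat\delta_i}:=\ket{\delta_i(H_1(0),H_1(t),U)}\otimes\ket{0}$ is an eigenvector of $\Delta(\hat H_1,\hat U)$ with the same eigenvalue $\delta_i:=\delta_i(H_1(0),H_1(t),U)$ (this is unchanged for degenerate or continuous spectra, which is why the argument covers all cases handled in Result 2).

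Next I would apply Result 2 to $(\hat H_1,\hat U,\ket{\hat\delta_i})$, obtaining an auxiliary space $\mathcal{H}'$, a unitary $U'$ on $\mathcal{H}'$, a Hermitian $\hat H_2$ on $\mathcal{H}\otimes\mathcal{H}_c\otimes\mathcal{H}'$, and a normalized $\ket{v}\in\mathcal{H}'$ with $[\hat H_1\otimes\mathbbm{1}_{\mathcal{H}'}+\hat H_2,\hat U\otimes U']=0$, $\hat H_2\ket{\hat\delta_i,v}=E_i\ket{\hat\delta_i,v}$, and $(\hat U^{\dagger}\otimes U^{'\dagger})\hat H_2(\hat U\otimes U')\ket{\hat\delta_i,v}=E_i'\ket{\hat\delta_i,v}$ with $E_i'=E_i-\delta_i$, where $\ket{\hat\delta_i,v}=\ket{\delta_i(H_1(0),H_1(t),U)}\otimes\ket{0}\otimes\ket{v}$. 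I would then ``fold the clock back in'': take the auxiliary system of the corollary to be $\mathcal{H}_c\otimes\mathcal{H}'$ with unitary $\sigma_x\otimes U'$ (so $U\otimes(\sigma_x\otimes U')=\hat U\otimes U'$) and ancilla vector $\ket{0}\otimes\ket{v}$, and define the two time-dependent ancilla Hamiltonians by $H_2(0):=\hat H_2+(\hat H_1-H_1(0)\otimes\mathbbm{1}_{\mathcal{H}_c})\otimes\mathbbm{1}_{\mathcal{H}'}$ and $H_2(t):=\hat H_2+(\hat H_1-H_1(t)\otimes\mathbbm{1}_{\mathcal{H}_c})\otimes\mathbbm{1}_{\mathcal{H}'}$, both manifestly Hermitian; by construction $H_1(0)\otimes\mathbbm{1}+H_2(0)=H_1(t)\otimes\mathbbm{1}+H_2(t)=\hat H_1\otimes\mathbbm{1}_{\mathcal{H}'}+\hat H_2$.

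The three assertions of the corollary then follow quickly. The conservation law holds because this common operator $\hat H_1\otimes\mathbbm{1}_{\mathcal{H}'}+\hat H_2$ commutes with $\hat U\otimes U'=U\otimes(\sigma_x\otimes U')$, by the first relation from Result 2. The eigenvalue equations hold because the two correction terms $\hat H_1-H_1(0)\otimes\mathbbm{1}_{\mathcal{H}_c}=(H_1(t)-H_1(0))\otimes\ket{1}\bra{1}$ and $\hat H_1-H_1(t)\otimes\mathbbm{1}_{\mathcal{H}_c}=(H_1(0)-H_1(t))\otimes\ket{0}\bra{0}$ are supported on a clock sector that annihilates the relevant vector: the first kills $\ket{\hat\delta_i,v}$ since $\ket{1}\bra{1}\ket{0}=0$, giving $H_2(0)\ket{\hat\delta_i,v}=\hat H_2\ket{\hat\delta_i,v}=E_i\ket{\hat\delta_i,v}$; and conjugating the second by $\hat U$ produces a factor $\sigma_x\ket{0}\bra{0}\sigma_x=\ket{1}\bra{1}$, which again kills $\ket{\hat\delta_i,v}$, so $(U\otimes\sigma_x\otimes U')^{\dagger}H_2(t)(U\otimes\sigma_x\otimes U')\ket{\hat\delta_i,v}=(\hat U^{\dagger}\otimes U^{'\dagger})\hat H_2(\hat U\otimes U')\ket{\hat\delta_i,v}=E_i'\ket{\hat\delta_i,v}$. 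Identifying $\ket{\hat\delta_i,v}$ with $\ket{\delta_i(H_1(0),H_1(t),U)}\otimes(\ket{0}\otimes\ket{v})$ yields exactly the relations \eqref{commutresult10time}--\eqref{result12ndrelationtime} (with $E_i'=E_i-\delta_i$).

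I expect the work to be bookkeeping rather than ideas. The two nonobvious points are: choosing the dilation with the clock-\emph{flip} $\hat U=U\otimes\sigma_x$ (a static clock $U\otimes\mathbbm{1}_{\mathcal{H}_c}$ would give $\Delta(H_1(0),U)$, not $\Delta(H_1(0),H_1(t),U)$, on the $\ket{0}$ sector), so that $U^{\dagger}H_1(t)U-H_1(0)$ with $H_1$ evaluated at different times at the two ends genuinely appears as $\Delta(\hat H_1,\hat U)$; and checking that the operators added to $\hat H_2$ to compensate the mismatch between $\hat H_1$ and $H_1(0)\otimes\mathbbm{1}$ (resp.\ $H_1(t)\otimes\mathbbm{1}$) live entirely in the orthogonal clock sector, so they spoil neither the conservation law nor the two eigenvalue relations. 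Everything else is a direct transcription of Result 2 and its continuous/degenerate completions established earlier in this SM.
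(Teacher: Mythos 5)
Your proposal is correct, but it reaches Corollary~\ref{coro} by a different route than the paper. The paper does not dilate: it reapplies the \emph{construction} from the proof of Result 2 with $H_1\to H_1(0)$ but with the vector taken to be $\ket{\delta_i(H_1(0),H_1(t),U)}$, relying on the (footnoted) observation that the tuning of the free parameters $h_{mjkl}$ never uses the fact that the chosen vector is an eigenvector of $\Delta(H_1(0),U)$ — that property enters only in the final Heisenberg-picture step, which the paper replaces by defining $H_2(t):=H_1(0)\otimes\mathbbm{1}_{\mathcal{H}'}+H_2(0)-H_1(t)\otimes\mathbbm{1}_{\mathcal{H}'}$ and using that the vector is an eigenvector of $\Delta(H_1(0),H_1(t),U)$. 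You instead keep Result 2 as a literal black box: the clock-flip dilation $\hat H_1=H_1(0)\otimes\ket{0}\bra{0}+H_1(t)\otimes\ket{1}\bra{1}$, $\hat U=U\otimes\sigma_x$ makes $\ket{\delta_i(H_1(0),H_1(t),U)}\otimes\ket{0}$ a genuine eigenvector of the time-independent $\Delta(\hat H_1,\hat U)$ with the same eigenvalue, after which folding $\mathcal{H}_c$ into the ancilla and adding corrections supported on the orthogonal clock sector ($\ket{1}\bra{1}$, resp.\ its $\hat U$-conjugate) reproduces all three relations; your verification of the conservation law, of $H_2(0)\ket{\hat\delta_i,v}=E_i\ket{\hat\delta_i,v}$, and of the Heisenberg-evolved relation with $E_i'=E_i-\delta_i$ is sound, and both approaches inherit the degenerate/continuous cases from Result 2. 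What each buys: your dilation avoids having to reopen the proof of Result 2 (the paper's ``as can be straightforwardly checked'' footnote is doing real work there), at the cost of an extra qubit in the auxiliary space and a little bookkeeping; the paper's route keeps the ancilla minimal but uses Result 2 beyond its literal statement. Your parenthetical claim that the dilation is ``really needed'' is therefore overstated — it is needed only if one insists on invoking Result 2 exactly as stated, which is a legitimate and arguably cleaner choice. Note also that your construction defines $H_2(0)$ and $H_2(t)$ so that $H_1(0)\otimes\mathbbm{1}+H_2(0)=H_1(t)\otimes\mathbbm{1}+H_2(t)$ holds as an operator identity, which is slightly stronger than the conjugation relation the corollary asks for; this is harmless.
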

\begin{proof}
    By directly substituting $H_1\to H_1(0)$, $\ket{\delta_i(H_1,U)}\to \ket{\delta_i(H_1(0),H_1(t),U)}$ in Result 2, we deduce that there is a $H_2(0)$, $U'$, and $\ket{v}$ that satisfy the following equations \footnote{Notice that here we consider $\ket{\delta_i(H_1(0),H_1(t),U)}$ (eigenvector of $U^{\dagger}H_1(t)U-H_1(0)$ ) instead of $\ket{\delta_i(H_1,U)}$ (eigenvector of $U^{\dagger}H_1U-H_1(0)$) as stated in Result 2. As can be straightforwardly checked, the result is not restricted for arbitrary form for $\ket{\delta_i(H_1,U)}$, so that it can be applied for $\ket{\delta_i(H_1(0),H_1(t),U)}$ as well.}:
    \begin{equation}
        [H_1(0)\otimes\mathbbm{1}_{\mathcal{H}'}+H_2(0),U\otimes U']=0,\label{timedependHX1}
    \end{equation}
    and 
    \begin{equation}
    H_2(0)\ket{\delta_i(H_1(0),H_1(t),U),v}=E_{i}\ket{\delta_i(H_1(0),H_1(t),U),v}.\label{timedependHX0}
    \end{equation}
    Applying $U^{\dagger}\otimes U^{'\dagger}$ on both sides of Eq. \eqref{timedependHX1}, we get
    \begin{equation}
        (U^{\dagger}\otimes U^{'\dagger})(H_1(0)\otimes\mathbbm{1}_{\mathcal{H}'}+H_2(0))(U\otimes U')=H_1(0)\otimes\mathbbm{1}_{\mathcal{H}'}+H_2(0). \label{timedependHX2}
    \end{equation}
    Defining
    $H_2(t)=H_1(0)\otimes\mathbbm{1}_{\mathcal{H}'}+H_2(0)-H_1(t)\otimes\mathbbm{1}_{\mathcal{H}'}$ and, considering Eqs. \eqref{timedependHX1} and \eqref{timedependHX2}, we obtain
    \begin{equation}
        \ba{rl}
        (U^{\dagger}\otimes U^{'\dagger})H_2(t)(U\otimes U')&=(U^{\dagger}\otimes U^{'\dagger})(H_1(0)\otimes\mathbbm{1}_{\mathcal{H}'}+H_2(0))(U\otimes U')-(U^{\dagger}\otimes U^{'\dagger})(H_1(t)\otimes\mathbbm{1}_{\mathcal{H}'})(U\otimes U')\\
        &=H_1(0)\otimes\mathbbm{1}_{\mathcal{H}'}+H_2(0)-(U^{\dagger}\otimes U^{'\dagger})(H_1(t)\otimes\mathbbm{1}_{\mathcal{H}'} )(U\otimes U')\\
        &=H_1(0)\otimes\mathbbm{1}_{\mathcal{H}'}+H_2(0)-U^{\dagger}H_1(t)U\otimes\mathbbm{1}_{\mathcal{H}'}\\
        &=H_2(0)-(U^{\dagger}H_1(t)U\otimes\mathbbm{1}_{\mathcal{H}'}-H_1(0)\otimes\mathbbm{1}_{\mathcal{H}'})\\
        &=H_2(0)-(\Delta (H_1(0),H_1(t),U)\otimes\mathbbm{1}_{\mathcal{H}'}).
        \ea
    \end{equation}
    As a result, it follows that
    \begin{equation}
        (U^{\dagger}\otimes U^{'\dagger})(H_2(t)+H_1(t)\otimes\mathbbm{1}_{\mathcal{H}'})(U\otimes U')=H_2(0)+H_1(0)\otimes\mathbbm{1}_{\mathcal{H}'}\label{coro1result}
    \end{equation}
    and
    \begin{equation}
        \ba{rl}
        (U^{\dagger}\otimes U^{'\dagger})H_2(t)(U\otimes U')\ket{\delta_i(H_1(0),H_1(t),U),v}&=(H_2(0)-(\Delta (H_1(0),H_1(t),U)\otimes\mathbbm{1}_{\mathcal{H}'}))\ket{\delta_i(H_1(0),H_1(t),U),v}\\
        &=E_i'\ket{\delta_i(H_1(0),H_1(t),U),v},\label{coro1result2}
        \ea
    \end{equation}
    where $E_i'=E_i-\delta_i(H_1(0),H_1(t),U)$. Considering Eqs. \eqref{timedependHX0}, \eqref{coro1result}, and \eqref{coro1result2}, the result is thus proved.
\end{proof}
Corollary \ref{coro} is therefore the analogous of result 2, and follow almost immediately from it. Considering Corollary \ref{coro}, an analogous of result 1 can be deduced substituting $H_1\to H_1(t)$, $\Delta(H_1,U)\to \Delta(H_1(0),H_1(t),U)$, $\mathbbm{M}\to \mathbbm{M}_{t}$, and considering the adapted time-dependent CRIN conditions. As a consequence, the OBS protocol is also the only protocol that satisfies the time-dependent CRIN conditions for the time-dependent case. 
\section{Appendix E: Results needed for the ion trap example}
\subsection{OBS and TPM for the commuting continuous case}

In Ref. \cite{Talkner2007}, it was shown that whenever a time-dependent Hamiltonian $H(t)$ satisfies
\begin{equation}
    [H(0),U^{\dagger}H(t)U]=0,\label{timedependentcommut}
\end{equation}
then the TPM statistics for a thermal state $\rho_\beta=\mathcal{Z}^{-1}\mathrm{e}^{-\beta H(0)}$ will be the same as the OBS statistics. The authors showed this result by taking into account the TPM characteristic function, defined as
\begin{equation}
    G_{\text{\tiny TPM}}(u)=\int_{-\infty}^{\infty}dz\mathrm{e}^{iuz}\wp_{\text{\tiny TPM}}(z,H(0),H(t),U,\rho_\beta)
\end{equation}
where, for the time-dependent case, 
\begin{equation}
    \wp_{\text{\tiny TPM}}(z,H(0),H(t),U,\rho_\beta)=\Tr[M_{\text{\tiny TPM}}(z,H(0),H(t),U)\rho_\beta],
\end{equation}
and
\begin{equation}
    M_{\text{\tiny TPM}}(z,H(0),H(t),U) = \sum_{jk} \delta^{\text{\tiny \textbf{D}}}[z - (e_j(t) - e_k)] \,\,|\bra{e_j(t)} U \ket{e_k}|^2 \ket{e_k} \bra{e_k}.
\end{equation} 
Here, $\ket{e_j(t)}$ and $\ket{e_k}$ are eigenvectors of $H(t)$ and $H(0)$, respectively. Notice that $M_{\text{\tiny TPM}}(z,H(0),H(t),U)$ is similar to the form defined in the main text, with adjustments due to the time-dependence. When $H(t)$, $H(0)$ and $U$ satisfy  Eq. \eqref{timedependentcommut}, the authors of Ref. \cite{Talkner2007} proved that 
\begin{equation}
    \ba{rl}
    G_{\text{\tiny TPM}}(u)&=\int_{-\infty}^{\infty}dz\mathrm{e}^{iuz}\wp_{\text{\tiny TPM}}(z,H(0),H(t),U,\rho_\beta)=\sum_{jk} \mathrm{e}^{iu(e_j(t) - e_k)}\,\,|\bra{e_j(t)} U \ket{e_k}|^2 \bra{e_k}\rho_\beta \ket{e_k} \\
    &=\Tr[ \mathrm{e}^{iu(U^{\dagger}H(t)U - H(0))}\rho_\beta],\label{Gutpm}
    \ea
\end{equation}
which will be equal to the OBS characteristic function:
\begin{equation}
    G_{\text{\tiny OBS}}(u)=\int_{-\infty}^{\infty}dz\mathrm{e}^{iuz}\wp_{\text{\tiny OBS}}(z,H(0),H(t),U,\rho)=\Tr[ \mathrm{e}^{iu(\Delta (H(0),H(t),U))}\rho_\beta]\label{Guobs}
\end{equation}
where, for the time-dependent case, we considered the two-time OBS statistics \cite{Silva2024}
\begin{equation}
    \wp_{\text{\tiny OBS}}(z,H(0),H(t),U)=\Tr[M_{\text{\tiny OBS}}(z,H(0),H(t),U)\rho_\beta],
\end{equation}
\begin{equation}
    M_{\text{\tiny OBS}}(z,H(0),H(t),U) = \sum_{j} \delta^{\text{\tiny \textbf{D}}}[z - \delta_j(H(0),H(t), U)]\ket{\delta_j(H(0),H(t), U)}\bra{\delta_j(H(0),H(t), U)},
\end{equation}
and
\begin{equation}
    \Delta (H(0),H(t),U)=U^{\dagger}H(t)U - H(0)=\sum_{j}\delta_j(H(0),H(t), U)\ket{\delta_j(H(0),H(t), U)}\bra{\delta_j(H(0),H(t), U)}
\end{equation}
Similarly, the same result can be deduced substituting in Eqs. \eqref{Gutpm} and \eqref{Guobs} the following: $H(0)\to H_1$, $H(t)\to H_1$, $\wp_{\text{\tiny OBS}}(z,H(0),H(t),U)\to \wp_{\text{\tiny OBS}}(z,H_1,U)$ and $\wp_{\text{\tiny TPM}}(z,H(0),H(t),U)\to \wp_{\text{\tiny TPM}}(z,H_1,U)$ (defined in the main text), and $\rho_\beta\to$ any $\rho$. Therefore, it follows, for the time-independent discrete basis case, that if $[H_1,U^\dagger H_1U]=0$, then: 
\begin{equation}
    G_{\text{\tiny TPM}}(u)=\int_{-\infty}^{\infty}dz\mathrm{e}^{iuz}\wp_{\text{\tiny TPM}}(z,H_1,U,\rho)=\int_{-\infty}^{\infty}dz\mathrm{e}^{iuz}\wp_{\text{\tiny OBS}}(z,H_1,U,\rho)=G_{\text{\tiny OBS}}(u),
\end{equation}
and, as a result,
\begin{equation}
    \wp_{\text{\tiny TPM}}(z,H_1,U,\rho)=\wp_{\text{\tiny OBS}}(z,H_1,U,\rho).
\end{equation}
Now, for the deductions in \cite{Talkner2007} and the framework here, we considered discrete basis, so that TPM procedure is clear and needs no adaptation. This is not the case when $H_1$ is diagonalized by a continuous basis $\{\ket{e}\}$, since after a first measurement of $H_1$ of one round of a TPM procedure, the state is not normalizable after the measurement so that the statistics cannot be directly treated \cite{Silva2021}. We can assume that after a first measurement, the state is a normalized state $\ket{\psi_\mu^e}$, such that $\mu$ accounts for the precision of the experiment apparatus used and approximates an eigenstate $\ket{e}$ whenever $\mu\to 0$. For instance, $\braket{e'|\psi_\mu^e}$ can be a Gaussian state with its center located at $e$, such that $\mu$ is the width of the gaussian.
For such states, we expect that
\begin{equation}
    \lim_{\mu\to 0}H_1^n\ket{\psi_\mu^k}=\lim_{\mu\to 0} e_k^n\ket{\psi_\mu^k}\quad\text{and}\quad \lim_{\mu\to 0}|\braket{e|\psi_\mu^{e'}}|^2=\delta^{\text{\tiny \textbf{D}}}(e'-e).\label{tpmcont}
\end{equation}
Whenever $[H_1,U^\dagger H_1U]=0$, since and eigenvector $\ket{e}$ is also eigenvector of $U^\dagger H_1U$, such that $U^\dagger H_1U\ket{e}=\epsilon_e\ket{e}$ for some real $\epsilon_e$, then it also follows that
\begin{equation}
    \lim_{\mu\to 0}U^{\dagger}H_1^n U\ket{\psi_\mu^e}=\lim_{\mu\to 0} \epsilon_e^n\ket{\psi_\mu^e}.\label{tpmcont2}
\end{equation}
As a result, we can define the TPM POVM elements for the continuous case as
\begin{equation}
    M_{\text{\tiny TPM}}(z,H_1,U) = \lim_{\mu\to 0}\iint_{-\infty}^{\infty}dede' \delta^{\text{\tiny \textbf{D}}}[z - (e' - e)] \,\,|\bra{e'} U \ket{\psi_{\mu}^e}|^2 \ket{e} \bra{e},
\end{equation} 
So that, the characteristic function can be computed as
\begin{equation}
    \ba{rl}
    G_{\text{\tiny TPM}}(u)&=\int_{-\infty}^{\infty}dz\mathrm{e}^{iuz}\wp_{\text{\tiny TPM}}(z,H_1,U,\rho)=\lim_{\mu\to 0} \iint_{-\infty}^{\infty}dede'\mathrm{e}^{iu(\epsilon_{e'}-e)}|\bra{\epsilon_{e'}} U \ket{\psi_{\mu}^e}|^2 \bra{e}\rho\ket{e} \\
    &= \lim_{\mu\to 0}\iint_{-\infty}^{\infty}dede' \mathrm{e}^{iu(\epsilon_{e'}-e)} \bra{e'}U  \ket{\psi_{\mu}^e}\bra{e}\rho\ket{e} \bra{\psi_{\mu}^e}U^\dagger\ket{e'},\\
    &=\lim_{\mu\to 0}\iint_{-\infty}^{\infty}dede' \Tr[U^\dagger\ket{e'}\mathrm{e}^{iu\epsilon_{e'}} \bra{e'}U  \ket{\psi_{\mu}^e}\bra{e}\mathrm{e}^{-iue}\rho\ket{e} \bra{\psi_{\mu}^e}]\\
    &=\lim_{\mu\to 0}\int_{-\infty}^{\infty}de \Tr[\mathrm{e}^{iuU^\dagger H_1 U}  \ket{\psi_{\mu}^e}\bra{e}\mathrm{e}^{-iuH_1}\rho\ket{e} \bra{\psi_{\mu}^e}]\\
    &=\lim_{\mu\to 0}\int_{-\infty}^{\infty}de \Tr[\mathrm{e}^{iu\epsilon_e}  \ket{\psi_{\mu}^e}\bra{e}\mathrm{e}^{-iuH_1}\rho\ket{e} \bra{\psi_{\mu}^e}]\\
    &=\lim_{\mu\to 0}\int_{-\infty}^{\infty}de\int_{-\infty}^{\infty}de'\mathrm{e}^{iu\epsilon_e}  \braket{e'|\psi_{\mu}^e}\bra{e}\mathrm{e}^{-iuH_1}\rho\ket{e} \braket{\psi_{\mu}^e|e'}\\
    &=\lim_{\mu\to 0}\int_{-\infty}^{\infty}de\int_{-\infty}^{\infty}de'\mathrm{e}^{iu\epsilon_e}  \bra{e}\mathrm{e}^{-iuH_1}\rho\ket{e} |\braket{\psi_{\mu}^e|e'}|^2\\
    &=\int_{-\infty}^{\infty}de\int_{-\infty}^{\infty}de'\mathrm{e}^{iu\epsilon_e}  \bra{e}\mathrm{e}^{-iuH_1}\rho\ket{e} \delta^{\text{\tiny \textbf{D}}}(e-e')=\int_{-\infty}^{\infty}de  \bra{e}\mathrm{e}^{iu(U^{\dagger}H_1U-H_1)}\rho\ket{e} =\Tr[\mathrm{e}^{iu(U^{\dagger}H_1U-H_1)}\rho]=G_{\text{\tiny OBS}}(u)\label{Gutpm2}
    \ea
\end{equation}
where we considered the properties of Eqs. \eqref{tpmcont} and \eqref{tpmcont2} from the forth line on. As a result, in the limit in which $\mu\to 0$,
\begin{equation}
    \wp_{\text{\tiny TPM}}(z,H_1,U,\rho)=\int_{-\infty}^{\infty}du\mathrm{e}^{-iuz}G_{\text{\tiny TPM}}(u)= \int_{-\infty}^{\infty}du\mathrm{e}^{-iuz}G_{\text{\tiny OBS}}(u)=\wp_{\text{\tiny OBS}}(z,H_1,U,\rho).
\end{equation}

\subsection{Calculating the probabilities $\mathbf{|\bra{n}U_\tau^+\ket{0}|^2}$ for Figure 2}
We consider, as in the main text, that $U_\tau^{+}=\bra{+}U_\tau\ket{+}=\mathrm{e}^{-i\theta_\tau}\exp \left[-\frac{i\tau}{\hbar}\left(\frac{P^2}{2m}+\frac{m\omega^2}{2}X_+^{2}\right)\right]$, $X_+=\bra{+}X'\ket{+}=X+a$, and $\theta_\tau=\tfrac{\hbar\omega_z\tau}{2\hbar}-\tfrac{m\omega^2a^2\tau}{2\hbar}$. Considering $\ket{x_+}$ as the basis that diagonalizes $X_+$ and comparing with the basis $\ket{x}$ that diagonalizes $X$, we have that $\ket{x_+}_{\text{\tiny $X_+$ basis}}=\ket{x_+-a}_{\text{\tiny $X$ basis}}$. Therefore, for any state $\ket{\psi}$, the following equality holds
 \begin{equation}
     \braket{x_+|\psi}_{\text{\tiny $X_+$ basis}}=\braket{x_+-a|\psi}
 \end{equation}
 so that
 \begin{equation}
     \braket{x_+|0}_{\text{\tiny $X_+$ basis}}=\frac{1}{\left(\pi 2\sigma^2\right)^{1/4}} \exp\left[-\frac{(x_+-a)^2}{4\sigma^2}\right]\label{0state+}
 \end{equation}
 and (see page 450 of \cite{Sakurai1994} for the deduction of $\braket{x|n}$ in the $X$ basis)
 \begin{equation}
     \braket{x_+|n}_{\text{\tiny $X_+$ basis}}=(2^{n}n!)^{-1/2}\frac{1}{\left(2\pi \sigma^2\right)^{1/4}} \exp\left[-\frac{(x_+-a)^2}{4\sigma^2}\right]H_n\left(\frac{x_+-a}{\sqrt{2}\sigma}\right)\label{nstate+}
 \end{equation}
where $\sigma=\sqrt{\hbar/(2m\omega)}$ and $H_n(y)$ are Hermite polynomials, that satisfy \cite{Arfken2011}
\begin{equation}
    H_n(y)=\sum_{s=0}^{n/2}(-1)^s (2y)^{n-2s}\frac{n!}{(n-2s)!s!}\label{hermite}
\end{equation} 
From this point on, we will consider all our calculations in the $X_+$ basis. Notice that the term $\frac{P^2}{2m}+\frac{m\omega^2}{2}X_+^{2}$ that appears inside the exponential defining $U_\tau^{+}$ is just a simple harmonic oscillator energy term, with $X$ displaced to $X_+$. Since $[X_+,P]=[X,P]=i\hbar$, then the same algebra rules for the oscillator can be applied here. As a result, we can write
\begin{equation}
    \bra{n}U_\tau^{+}\ket{0}=\mathrm{e}^{-i\theta_\tau}\int_{-\infty}^{\infty}dx_+\braket{n|x_+}\bra{x_+}\exp\left[-\frac{i\tau}{\hbar}\left(\frac{P^2}{2m}+\frac{m\omega^2}{2}X_+^{2}\right)\right]\ket{0}\label{U+n0}
\end{equation}
The term
\begin{equation}
    \bra{x_+}\exp\left[-\frac{i\tau}{\hbar}\left(\frac{P^2}{2m}+\frac{m\omega^2}{2}X_+^{2}\right)\right]\ket{0}
\end{equation}
can be identified as the wave function evolved until time $\tau$ of the state $\ket{0}$, which, in the $X_+$ position basis, is a Gaussian centered around the phase point $(a,0)$ as described in Eq. \eqref{0state+}. This gaussian evolves under a simple harmonic oscillator dynamics until time $\tau$. Using the formalism in \cite{Freire2019,Silva2018}, we obtain
\begin{equation}
    \ba{rl}
    \displaystyle\bra{x_+}\exp\left[-\frac{i\tau}{\hbar}\left(\frac{P^2}{2m}+\frac{m\omega^2}{2}X_+^{2}\right)\right]\ket{0}&\displaystyle=\frac{1}{\left(2\pi \sigma_t^2\right)^{1/4}}\exp\left[-\frac{(x_+-a\cos(\omega t))^2}{\cos(\omega t)(4\sigma^2\cos(\omega t)+\frac{2i\hbar}{m\omega}\sin(\omega t))}-i\frac{m\omega\tan(\omega t)}{2\hbar}x_+^{2}\right]\\
    &\displaystyle=\frac{1}{\left(2\pi \sigma_t^2\right)^{1/4}}\exp\left[-\frac{(x_+-a\cos(\omega t))^2}{4\sigma^2\cos(\omega t)\left(\cos(\omega t)+\frac{i\hbar}{2m\omega\sigma^2}\sin(\omega t)\right)}-i\frac{m\omega\tan(\omega t)}{2\hbar}x_+^{2}\right]\\
    &\displaystyle=\frac{1}{\left(2\pi \sigma_t^2\right)^{1/4}}\exp\left[-\frac{(x_+-a\cos(\omega t))^2\left(\cos(\omega t)-\frac{i\hbar}{2m\omega\sigma^2}\sin(\omega t)\right)}{4\sigma_t^2\cos(\omega t)}-i\frac{m\omega\tan(\omega t)}{2\hbar}x_+^{2}\right]
    \ea\label{evolvedgaussian}
\end{equation}
where
\begin{equation}
    \sigma_t=\sqrt{\sigma^2\cos^2(\omega t)+\frac{\hbar^2}{4m^2\omega^2\sigma^2}\sin^2(\omega t)}
\end{equation}
Now, inserting Eqs. \eqref{nstate+} and \eqref{evolvedgaussian} in Eq. \eqref{U+n0},  considering Eq. \eqref{hermite}, we get
\begin{equation}
    \ba{rl}
    \bra{n}U_\tau^{+}\ket{0}&=(2^{n}n!)^{-1/2}\frac{\mathrm{e}^{-i\theta_\tau}}{\left(\pi^2 4\sigma^2\sigma_t^2\right)^{1/4}} \int_{-\infty}^{\infty}dx_+\exp\left[-\frac{(x_+-a)^2}{4\sigma^2}-\frac{(x_+-a\cos(\omega t))^2\left(\cos(\omega t)-\frac{i\hbar}{2m\omega\sigma^2}\sin(\omega t)\right)}{4\sigma_t^2\cos(\omega t)}-i\frac{m\omega\tan(\omega t)}{2\hbar}x_+^{2}\right]H_n\left(\frac{x_+-a}{\sqrt{2}\sigma}\right)\\
    &=(2^{n}n!)^{-1/2}\frac{\mathrm{e}^{-i\theta_\tau}}{\left(\pi^2 4\sigma^2\sigma_t^2\right)^{1/4}}\sum_{s=0}^{n/2}(-1)^s \frac{n!}{(n-2s)!s!} \times\\
    &\times \int_{-\infty}^{\infty}dx_+\exp\left[-\frac{(x_+-a)^2}{4\sigma^2}-\frac{(x_+-a\cos(\omega t))^2\left(\cos(\omega t)-\frac{i\hbar}{2m\omega\sigma^2}\sin(\omega t)\right)}{4\sigma_t^2\cos(\omega t)}-i\frac{m\omega\tan(\omega t)}{2\hbar}x_+^{2}\right](2\frac{x_+-a}{\sqrt{2}\sigma})^{n-2s}
    \ea
\end{equation}
Now, given that $\tau=\pi/\omega$, then $\sigma_t=\sigma$ and, after some tedious calculations, we obtain 
\begin{equation}
\bra{n}U_\tau^{+}\ket{0}=(2^{n}n!)^{-1/2}\frac{\mathrm{e}^{-\frac{a^2}{2\sigma^2}-i\theta_\tau}}{\left(2\pi \sigma^2\right)^{1/2}}\sum_{s=0}^{n/2}\left(\frac{\sqrt{2}}{\sigma}\right)^{n-2s}(-1)^s \frac{n!}{(n-2s)!s!} f(n,s,a,\sigma),\label{u+0final}
\end{equation}
where
\begin{equation}
    \ba{rl}
    f(n,s,a,\sigma)&=\int_{-\infty}^{\infty}dx_+\exp\left[-\frac{(x_++a)^{2}}{2\sigma^2}\right](x_+)^{n-2s}\\
    &=(-1)^{-2 s} e^{-\frac{a^2}{4 \sigma ^2}} 2^{n-2 s} | \sigma | ^{n-2 s-1} \times\\
    &\times \left(a | \sigma |  \left((-1)^n-(-1)^{2 s}\right) \Gamma \left(\frac{n}{2}-s+1\right) \,
   _1F_1\left(\frac{n}{2}-s+1;\frac{3}{2};\frac{a^2}{4 \sigma ^2}\right)\right.\\
   &+\left.\sigma ^2 \left((-1)^n+(-1)^{2 s}\right) \Gamma \left(\frac{1}{2} (n-2 s+1)\right) \,
   _1F_1\left(\frac{1}{2} (n-2 s+1);\frac{1}{2};\frac{a^2}{4 \sigma ^2}\right)\right)
   \ea
\end{equation}
and $\Gamma(z)$ is the Gamma function \cite{Arfken2011} and $ _1F_1\left(\frac{1}{2} (n-2 s+1);\frac{1}{2};\frac{a^2}{4 \sigma ^2}\right)$ is the Kummer confluent hypergeometric function \cite{Weisstein2003}. Using Eq. \eqref{u+0final} with the parameters described in Fig.~\ref{fig:corr}, we could compute the probabilities $p_{\text{\tiny TPM}}(0, H_{\text{\tiny HO}} , U_\tau, \rho_1)=|\bra{0}U_\tau^{+}\ket{0}|^2$ and $p_{\text{\tiny TPM}}(\hbar\omega, H_{\text{\tiny HO}} , U_\tau, \rho_1)=|\bra{1}U_\tau^{+}\ket{0}|^2$ presented in the Figure 2.

\subsection{Calculating the probabilities $p(n, H_{\text{\tiny HO}}(\tau))$ in subsection ``OBS protocol, non-commutative operators, and result 2''}
Our goal in this subsection is to compute the probability of the Harmonic oscillator part $H_{\text{\tiny HO}} = \frac{P^2}{2m}\otimes \mathbbm{1}_{\text{\tiny s}} + \frac{m\omega^2X^2}{2}\otimes \mathbbm{1}_{\text{\tiny s}} = \hbar\omega (N + 1/2)\otimes \mathbbm{1}_{\text{\tiny s}}$ of the total energy to have some arbitrary value at time $\tau$. To do so, we first notice that since $\ket{n,\pm}$ are eigenstates of $H_{\text{\tiny HO}}$ at time $0$ with eigenvalues $\hbar\omega(n+1/2)$, then the probability of the system of being $\hbar\omega(n+1/2)$ at time $\tau$ is given by
\begin{equation}
    p(n, H_{\text{\tiny HO}}(\tau)) =|\bra{n,-}U_\tau\ket{\alpha,-}|^2.
\end{equation}
where we considered the fact that, since $\sigma_z$ commutes with $U_\tau$, then 
\begin{equation}
    |\bra{n,+}U_\tau\ket{\alpha,-}|^2=0.
\end{equation}

Our strategy to obtain $|\bra{n,-}U_\tau\ket{\alpha,-}|^2$ is to make a transition to the Heisenberg picture. For that, first notice that from the transition from the Schr{\"o}dinger to the Heisenberg picture, we have that given that $\ket{n,\pm}$ are eigenvectors of $H_{\text{\tiny HO}}$, then $U_\tau^{\dagger}\ket{n,\pm}=\ket{n',\pm}$ are eigenstates of the Heisenberg evolved version $U_\tau^\dagger H_{\text{\tiny HO}}U_\tau$, since \cite{Sakurai1994}
\begin{equation}
    U_\tau^\dagger H_{\text{\tiny HO}}U_\tau \ket{n',\pm}=U_\tau^\dagger H_{\text{\tiny HO}}U_\tau U_\tau^{\dagger}\ket{n,\pm}=\left(n+\frac{1}{2}\right)\hbar\omega U^{\dagger}\ket{n,\pm}=\left(n+\frac{1}{2}\right)\ket{n',\pm}
\end{equation}
Now, let us analyze $\{\ket{n'}\}$. We saw in the section ``Trapped ion case study'', that the Heisenberg time-evolved version of the operator  $X' = X \otimes \mathbbm{1}_{\text{\tiny s}} + a (\mathbbm{1}_{\text{\tiny CM}} \otimes \sigma_z)$
is given by $X'(\tau) = -X \otimes \mathbbm{1}_{\text{\tiny s}} - a (\mathbbm{1}_{\text{\tiny CM}} \otimes \sigma_z)$. Therefore, given that $\sigma_z$ does not change in time for the given evolution, it follows that  $X(\tau)=U_{\tau}^\dagger (X\otimes\mathbbm{1}_{\text{\tiny s}})U_\tau= -X \otimes \mathbbm{1}_{\text{\tiny s}} - 2a (\mathbbm{1}_{\text{\tiny CM}} \otimes \sigma_z)$. Moreover, given that $P(\tau) = -P \otimes \mathbbm{1}_{\text{\tiny s}}$, we have that
\begin{equation}
    H_{\text{\tiny HO}}' \coloneqq  \bra{-}H_{\text{\tiny HO}}(\tau)\ket{-}=\bra{-}\left(\frac{P^2(\tau)}{2m} + \frac{m\omega^2X^2(\tau)}{2}\right) \ket{-}=\frac{P^2(0)}{2m}+\frac{m\omega}{2}X^{'2}
\end{equation}
and 
\begin{equation}
     \bra{+}H_{\text{\tiny HO}}(\tau)\ket{-}= \bra{-}H_{\text{\tiny HO}}(\tau)\ket{+}=0
\end{equation}
where $X'=X-2a$. Moreover, notice that $[X',P]=i\hbar$ and $[X',X']=0$, so that, analogously as in the usual harmonic oscillator, by defining \cite{Sakurai1994}
\begin{equation}
    a'=\sqrt{\frac{m\omega}{2\hbar}}(X'+\frac{iP}{m\omega})
\end{equation}
we have that
\begin{equation}
    [a',a^{'\dagger}]=1,\quad N'=a^{'\dagger}a'=\frac{H_{S}'}{\hbar\omega}-\frac{1}{2},\quad [N',a^{'\dagger}]=a^{'\dagger},\qquad [N',a']=-a'.
\end{equation}
So that 
\begin{equation}
    H_{\text{\tiny HO}}'=\hbar\omega \left(N'+\frac{1}{2}\right)
\end{equation}
and the same algebraic structure as the usual harmonic oscillator can be used to deduce that
\begin{equation}
    N'\ket{n'}=n'\ket{n'},\quad a^{'\dagger}\ket{n'}=\sqrt{n'+1}\ket{n'+1},\quad a^{'}\ket{n'}=\sqrt{n'}\ket{n'-1}.
\end{equation}
As a result,
\begin{equation}
    \ket{n'}=\frac{(a^{'\dagger})^{n}}{\sqrt{n!}}\ket{0'}
\end{equation}
The difference from the usual harmonic oscillator appears in the wave-functions, since, considering that $X'\ket{x}=(x-2a)\ket{x}$, then
\begin{equation}
    \bra{x}a'\ket{0'}=\left(x-2a+x_0^2\partial_{x}\right)\braket{x|0}=0,
\end{equation}
where
$x_0=\sqrt{\frac{\hbar}{m\omega}}$, resulting in 
\begin{equation}
    \braket{x|0}=\frac{1}{\pi^{\frac{1}{4}}\sqrt{x_0}}\exp \left[-\frac{1}{2}\left(\frac{x-2a}{x_0}\right)^2\right]
\end{equation}
From this, we obtain, using a method similar to the usual harmonic oscillator case \cite{Sakurai1994}, that
\begin{equation}
    \braket{x|n'}=\left(\frac{1}{\pi^{\frac{1}{4}}\sqrt{2^{n}n!}}\right)\left(\frac{1}{x_0^{n+1/2}}\right)\left(x-2a-x_0^2 \partial_{x}\right)^{n}\exp\left[-\frac{1}{2}\left(\frac{x-2a}{x_0}\right)^2\right]
\end{equation}
These equations are the same as for the eigenvectors $\ket{n}$ of $\frac{P^2}{2m} + \frac{m\omega^2X^2}{2} = \hbar\omega (N + 1/2)$, but displaced by $2a$ (see section 2.3 of Ref. \cite{Sakurai1994}), so that
\begin{equation}
    \braket{x-2a|n}=\braket{x|n'}.
\end{equation}
Considering that the wave function of the coherent state is defined as \cite{Cohen2020}
\begin{equation}
    \braket{x|\alpha}=\mathrm{e}^{i\theta_\alpha}\left(\frac{m\omega}{\pi\hbar}\right)^{\frac{1}{4}}\exp\left[-\left(\frac{x-\bra{\alpha}X\ket{\alpha}}{2\sigma_X^2}\right)^2+\frac{i\bra{\alpha}P\ket{\alpha}x}{\hbar}\right]
\end{equation}
with $\sigma_X = \sqrt{\hbar/(2m\omega)}$ and $\mathrm{e}^{i\theta_\alpha}=\mathrm{e}^{\alpha^{*2}-\alpha^2}$, then
we have that
\begin{equation}
    \ba{rl}
    \braket{n'|\alpha}&=\int_{-\infty}^{\infty}dx\braket{n'|x}\braket{x|\alpha}=\int_{-\infty}^{\infty}dx\braket{n'|x}\mathrm{e}^{i\theta_\alpha}\left(\frac{m\omega}{\pi\hbar}\right)^{\frac{1}{4}}\exp\left[-\left(\frac{x-\bra{\alpha}X\ket{\alpha}}{2\sigma_X^2}\right)^2+\frac{i\bra{\alpha}P\ket{\alpha}x}{\hbar}\right]\\
    &=\int_{-\infty}^{\infty}dx'\braket{n'|x'+2a}\mathrm{e}^{i\theta_\alpha}\left(\frac{m\omega}{\pi\hbar}\right)^{\frac{1}{4}}\exp\left[-\left(\frac{x'+2a-\bra{\alpha}X\ket{\alpha}}{2\sigma_X^2}\right)^2+\frac{i\bra{\alpha}P\ket{\alpha}(x'+2a)}{\hbar}\right]\\
    &=\mathrm{e}^{i\theta_\alpha+i2a\bra{\alpha}P\ket{\alpha}/\hbar}\int_{-\infty}^{\infty}dx'\braket{n'|x'+2a}\left(\frac{m\omega}{\pi\hbar}\right)^{\frac{1}{4}}\exp\left[-\left(\frac{x'+2a-\bra{\alpha}X\ket{\alpha}}{2\sigma_X^2}\right)^2+\frac{i\bra{\alpha}P\ket{\alpha}x'}{\hbar}\right]\\
    &=\mathrm{e}^{i\theta_\alpha+i2a\bra{\alpha}P\ket{\alpha}/\hbar}\int_{-\infty}^{\infty}dx'\braket{n|x'}\left(\frac{m\omega}{\pi\hbar}\right)^{\frac{1}{4}}\exp\left[-\left(\frac{x'+2a-\bra{\alpha}X\ket{\alpha}}{2\sigma_X^2}\right)^2+\frac{i\bra{\alpha}P\ket{\alpha}x'}{\hbar}\right]\\
    \ea
\end{equation}
Defining 
\begin{equation}
    \alpha'=\sqrt{\frac{m\omega}{2\hbar}}(\bra{\alpha}X\ket{\alpha}-2a)+i\frac{1}{\sqrt{2m\hbar\omega}}\bra{\alpha}P\ket{\alpha},
\end{equation}
we can consider the coherent state $\ket{\alpha'}$, such that
\begin{equation}
    \braket{x|\alpha'}=\mathrm{e}^{i\theta_{\alpha'}}\left(\frac{m\omega}{\pi\hbar}\right)^{\frac{1}{4}}\exp\left[\left(\frac{x-\bra{\alpha'}X\ket{\alpha'}}{2\sigma_X^2}\right)^2+\frac{i\bra{\alpha'}P\ket{\alpha'}x}{\hbar}\right],
\end{equation}
where 
\begin{equation}
    \bra{\alpha'}X\ket{\alpha'}=\sqrt{\frac{2\hbar}{m\omega}}\Re(\alpha')=\bra{\alpha}X\ket{\alpha}-2a, \quad \bra{\alpha'}P\ket{\alpha'}=\sqrt{2m\hbar\omega}\Im(\alpha')=\bra{\alpha}P\ket{\alpha},\quad \mathrm{e}^{i\theta_{\alpha'}}=\mathrm{e}^{\frac{\alpha^{'*2}-\alpha^{'2}}{4}}.
\end{equation}
Therefore, 
\begin{equation}
    \braket{n'|\alpha}=\mathrm{e}^{i(\theta_\alpha-\theta_{\alpha'})+i2a\bra{\alpha}P\ket{\alpha}/\hbar}\int_{-\infty}^{\infty}dx'\braket{n|x'}\braket{x'|\alpha'}=\braket{n|\alpha'}\mathrm{e}^{i(\theta_\alpha-\theta_{\alpha'})-i2a\bra{\alpha}P\ket{\alpha}/\hbar}\label{alphalinha}
\end{equation}
Given that $\ket{\alpha'}$ is a proper coherent state, then it follows that
\begin{equation}
    \ket{\alpha'}=\mathrm{e}^{-|\alpha'|^2/2}\sum_{n=0}^{\infty}\frac{\alpha^{'n}}{\sqrt{n!}}\ket{n}
\end{equation}
and
\begin{equation}
    \braket{n|\alpha'}=\mathrm{e}^{-|\alpha'|^2/2}\frac{\alpha^{'n}}{\sqrt{n!}},
\end{equation}
so that, from Eq. \eqref{alphalinha},
\begin{equation}
    |\braket{n'|\alpha}|^2=\mathrm{e}^{-|\alpha'|^2}\frac{|\alpha'|^{2n}}{n!}.
\end{equation}
As a result, it follows that
\begin{equation}
    |\braket{n',+|\alpha,-}|^2=0,\quad  |\braket{n',-|\alpha,-}|^2=\mathrm{e}^{-|\alpha'|^2}\frac{|\alpha'|^{2n}}{n!}
\end{equation}
The probability that $H_{\text{\tiny HO}}$ have an energy $\hbar\omega(n+1/2)$ at time $\tau$ is thus given by
\begin{equation}
    p(n, H_{\text{\tiny HO}}(\tau)) =|\braket{n',-|\alpha,-}|^2=\mathrm{e}^{-|\alpha'|^2}\frac{|\alpha'|^{2n}}{n!}.
\end{equation}
For values different from $\hbar\omega(n+1/2)$ for any $n$, the probability is null.

\subsection{Example of result 2 considering Trapped Ions}

As mentioned in the main text, in specific scenarios, result 2 can be used to obtain the OBS statistics via a probe, an auxiliary energy operator $H_2$. However, as seen in the derivation of result 2, finding a physically intuitive operator $H_2$ and a state $\ket{v}$ such that result 2 can be directly applied is not always straightforward. Nevertheless, there exist special cases where $H_2$ can be explicitly given without introducing an additional subspace ($\mathcal{H}'$ in result 2). We analyze such cases here.

Consider a system $\Omega$ whose total energy is given by
\begin{equation}
    H = H_1 + H_2,
\end{equation}
where both $H_1$ and $H_2$ act on the same Hilbert space $\mathcal{H}$. The unitary evolution is specified as $U = \exp[-iHt/\hbar]$, and the following commutation relations hold:
\begin{equation}
    [U^{\dagger}H_1U, H_1] \neq 0, \quad [U^{\dagger}H_2U, H_2] = 0.
\end{equation}
Since
\begin{equation}
    [\Delta(H_1, U), H_2] =[\Delta(H_1, U), U^{\dagger}H_2U]= 0,
\end{equation}
it follows that for any initial preparation $\rho=\ket{\delta_i(H_1, U)}\bra{\delta_i(H_1, U)}$ as an eigenstate of $\Delta(H_1, U)$, one can measure $H_2$ at times $0$ and $t$, obtaining with 100\% certainty the energies $E_i$ and $E_i' = E_i - \delta_i(H_1, U)$, where $E_i$ is an eigenvalue of $H_2$. By energy conservation, the variation of $H_1$ must necessarily be $\delta_i(H_1, U)$. This represents a special case of result 2, where $H_2\to H_2\otimes\mathbbm{1}_{\mathcal{H}'}$ and $U'=\mathbbm{1}_{\mathcal{H}'}$. 

An illustrative example of this scenario is provided in Fig.~\ref{fig:result2example}, corresponding to the trapped ion system analyzed in Figs. \ref{fig:corr}, \ref{fig:OBSEXAMPLE}, and \ref{fig:comparison} of the main text. In this case, the unitary evolution is given by
\begin{equation}
    U_\tau=\exp[-iH\tau/\hbar],
\end{equation}
where the Hamiltonian takes the form
\begin{equation}
    H=H_{\text{\tiny HO}} \otimes \mathbbm{1}_s+H_{\text{\tiny e}},
\end{equation}
with
\begin{equation}
    H_{\text{\tiny HO}}=(N+\tfrac{1}{2})\hbar\omega, \quad H_{\text{\tiny e}}= \hbar (\omega_z/2 + \Delta_S k_{\text{\tiny SW}} X/2) \otimes \sigma_z.
\end{equation}
By making the substitutions
\begin{equation}
    H_1 \to H_{\text{\tiny HO}} \otimes \mathbbm{1}_s, \quad H_2 \to (H - H_1) = H{\text{\tiny e}}, \quad U\to U_\tau,
\end{equation}
we recover result 2, demonstrating that measuring $H_2$ at times $0$ and $\tau$ yields the same statistics for the variation of $H_1$ under $U_\tau$ as obtained via the OBS protocol.

\begin{figure}[h]
    \centering
    \includegraphics[width=0.8\linewidth]{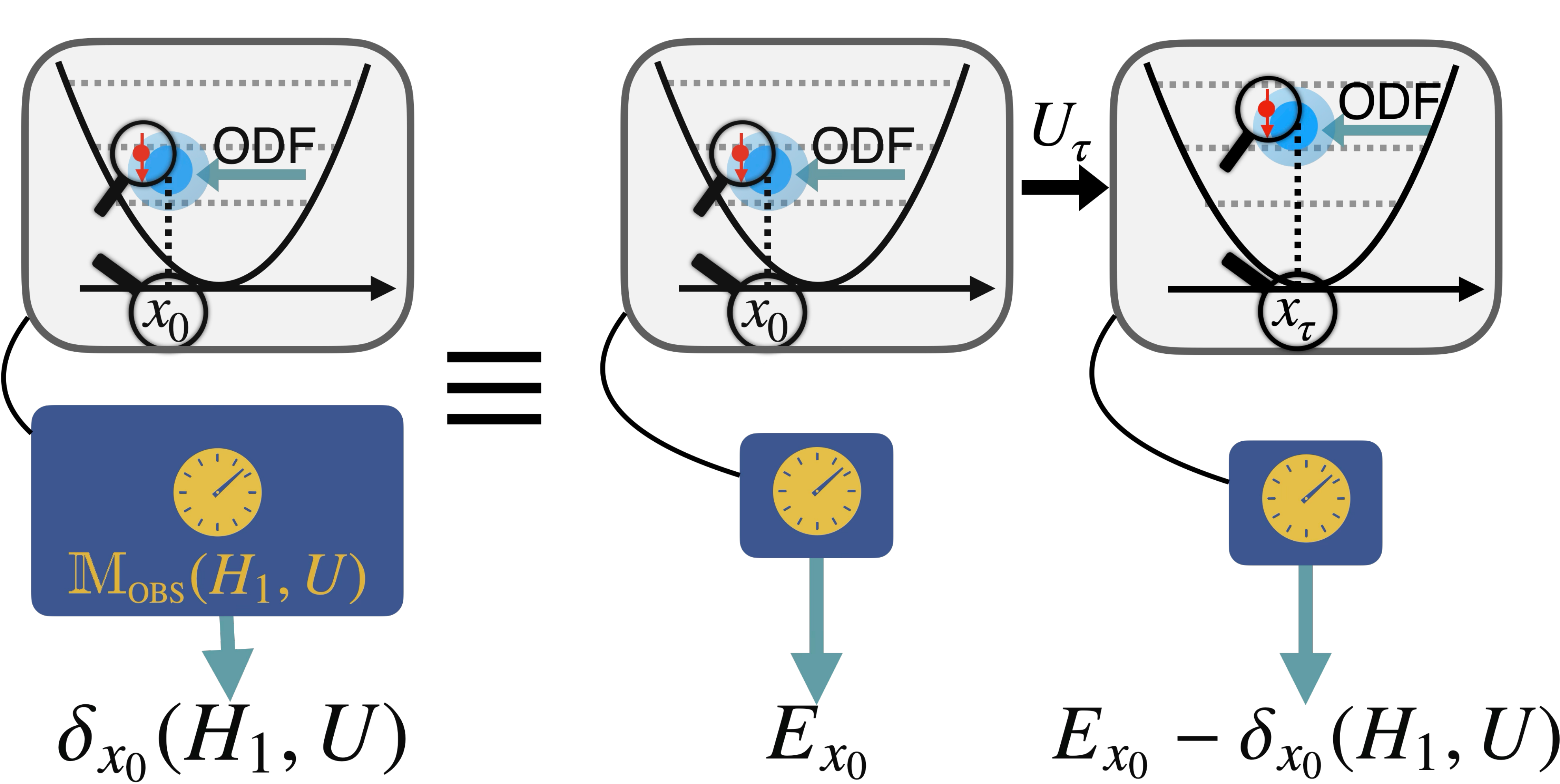}
    \caption{  \textbf{Illustration of the application of result 2 to the trapped Ca$^+$ ion system}. We consider the same system as in Figs. \ref{fig:corr}, \ref{fig:OBSEXAMPLE}, and \ref{fig:comparison} in the main text and define $H_1 = H_{\text{\tiny HO}} \otimes \mathbbm{1}_s$ and $H_2 = H - H_1 = H{\text{\tiny e}}$.  The variation of energy is analyzed over the interval $[0,\tau]$, where $\tau = \pi/\omega$ and the unitary evolution is $U \equiv U_\tau = \exp[-iH\tau/\hbar]$. As detailed in the section ``Trapped ion case study'', the operator of variation of $H_1$ for this interval is $\Delta(H_1, U) = 2m\omega^2 a X \otimes \sigma_z + 2m\omega^2 a^2$, where $a = \hbar \Delta_s k_{\text{\tiny SW}}/ (2m\omega^2)$. The following conditions are satisfied $[U^\dagger H_2 U, H_2] = [U^\dagger H_2 U, \Delta(H_1, U)] = [H_2, \Delta(H_1, U)] = 0$. This scenario represents a special case of result 2. On the left, the OBS protocol is applied to measure $\Delta(H_1, U)$. The system's position $X$ and spin $\sigma_z$ are measured, yielding $x_0$ and $-1$, respectively. This determines the variation of energy as $\delta_{x_0}(H_1, U) = -2m\omega^2 a x_0 + 2m\omega^2 a^2$ which is the eigenvalue of $\Delta(H_1, U)$ associated with the eigenvector $\ket{\delta_{x_0}(H_1, U)} = \ket{x_0, -}$. On the right, result 2 is applied to the same system, approximately prepared in the eigenstate $\ket{\delta_{x_0}(H_1, U)}$. In this case, $\ket{\delta_{x_0}(H_1, U)}$ is an eigenstate of both $H_2$ and $U^\dagger H_2 U$. Measurements of $H_2$ at $t = 0$ and $U^\dagger H_2 U$ at $t = \tau$, via measurements of $X$ and $\sigma_z$, yield the eigenvalues $E_{x_0} = -\hbar (\omega_z/2 + \Delta_S k_{\text{\tiny SW}} x_0/2)$ and  $E_{x_0}'=E_{x_0} - \delta_{x_0}(H_1, U)$, respectively. By the reality condition, the variation of $H_2$ is $-\delta_{x_0}(H_1, U)$. By the conservation of energy (condition 2), the variation of $H_1$ must be $\delta_{x_0}(H_1, U)$, thus showing the equivalence between the measurement results on the left and right side of the figure.}
    \label{fig:result2example}
\end{figure}

In this case, we arrive at a special instance of result 2 where no additional Hilbert space $\mathcal{H}'$ or auxiliary vector $\ket{v}$ is required to obtain the OBS statistics for the variation of $H_1$. However, to fully conform with result 2, we could introduce an auxiliary space $\mathcal{H}'$ by redefining
\begin{equation}
    H_1 \to H_{\text{\tiny HO}} \otimes \mathbbm{1}_s, \quad H_2 \to (H - H_1)\otimes\mathbbm{1}_{\mathcal{H}'} = H{\text{\tiny e}}\otimes\mathbbm{1}_{\mathcal{H}'}.
\end{equation}
Similarly, we set
\begin{equation}
    U\otimes U'\to U_\tau\otimes\mathbbm{1}_{\mathcal{H}'}.
\end{equation}
With this formulation, the same result would be obtained for any normalized state $\ket{v} \in \mathcal{H}'$.

\end{document}